%% file: ho-ht.tex
\documentclass[online]{tlp}
\usepackage{fixkeywords}
\usepackage[arxiv]{preamble}
\usepackage{amsfonts}
\usepackage{amssymb}
\usepackage{amsmath}
\usepackage{latexsym}
\usepackage{stmaryrd}
\usepackage{microtype}

\ifarxiv
\usepackage{hyperref}
\fi

\usepackage[final]{listings}
\usepackage{caption}
\usepackage{booktabs}
\usepackage{etoolbox}
\usepackage{xspace}
\usepackage{customcommands}

\AtEndEnvironment{example}{\null\hfill$\Box$}

\begin{document}
\input{titleinfo}

\maketitle

\begin{abstract}
One of the most significant achievements of equilibrium logic was the
characterization of strong equivalence, a property crucial for program
transformation and optimization in Answer Set Programming (ASP). While ASP has
recently been extended to a higher-order setting to enhance its expressive
power, the lack of a comparable purely logical foundation has made verifying
strong equivalence for higher-order programs or even proving the correctness of
simple program transformations, a difficult challenge. This paper addresses this
gap by developing a logical semantics for higher-order ASP by extending the
equilibrium logic framework.
Within this extended framework we demonstrate that every stratified higher-order logic
program possesses a unique equilibrium model. Moreover, we establish  definability results
demonstrating that the syntax of our higher-order language is sufficiently expressive to
capture its semantic domains. Finally, and most importantly, we generalize the classical
theorem of strong equivalence to the higher-order setting: we prove that two
programs are strongly equivalent if and only if they share the same higher-order
models.
\ifarxiv
Under consideration for publication in Theory and Practice of Logic Programming (TPLP).
\fi
\end{abstract}
\begin{keywords}
Higher-Order Logic Programming, Answer Set Programming, Strong Equivalence.
\end{keywords}
\section{Introduction}
%
%

A turning-point for answer set programming (ASP) was the purely logical
characterization of its semantics by~\cite{Pearce96,Pearce1999} based on the logic of
here-and-there (HT). Pearce developed a nonmonotonic system, called
\emph{equilibrium logic}, which can be described as a form of minimal model
reasoning in HT, and demonstrated that the equilibrium models of any program
coincide with its  answer sets. One of the most seminal successes of equilibrium
logic was the logical characterization of the notion of
\emph{strong equivalence} of logic programs by~\cite{LifschitzPV01},
an important and useful concept which has received considerable
research attention in the ASP community.

Recently, a higher-order extension of ASP was introduced \citep{iclp24} using the
abstract framework of Approximation Fixpoint Theory (AFT)~\citep{DMT00ApproximationsStableOperatorsWell-FoundedFixpointsApplications,DMT04Ultimateapproximationapplicationnonmonotonicknowledgerepresentation}. The
introduction of higher-order ASP is driven by more than just theoretical
curiosity: recently, many non-trivial extensions of ASP have been investigated
in order to enhance its
power~\citep{BJT16Stable-unstablesemanticsBeyondNPnormallogic,ART19BeyondNPQuantifyingoverAnswerSets,FLRSS21PlanningIncompleteInformationQuantifiedAnswerSet},
and thus it was a natural step to investigate the potential of developing a
higher-order extension of the paradigm. Not surprisingly, higher-order ASP
demonstrates impressive expressive power: as shown in the work of~\cite{iclp24} several difficult
computational problems were modeled in higher-order ASP, such as for example the
Generalized Geography two-player game, which is a
well-known~\citep{LS80GOPolynomial-SpaceHard} $\mathsf{PSPACE}$-complete
problem. The increased expressive power of higher-order ASP was theoretically
confirmed very recently: as demonstrated by~\cite{iclp25}, $(k+1)$-order ASP
captures \coNEXPTIME[k] using cautious reasoning and \NEXPTIME[k] using brave
reasoning. So, even the second-order fragment is extremely powerful and can
express problems that are beyond the capabilities of standard ASP systems.
Although expressive and quite promising, the idea of higher-order ASP lacks in
one important respect: due to the fact that its current characterization
by~\cite{iclp24} is based on a fixpoint construction, proving the correctness of
even simple transformations involves laborious and intricate double inductive
arguments (see for example~\citep[Appendix C]{iclp25}). In particular, defining
the strong equivalence of higher-order ASP programs based on the fixpoint
semantics from~\cite{iclp24}, appears to be a formidable challenge.

Based on the above discussion, a natural next step is to seek a logical
characterization of higher-order ASP, which would allow notions such as strong
equivalence to be defined in a simple way and reasoning about programs to be
performed in a purely logical manner. Such a characterization requires extending
HT to higher-order types and accordingly generalizing the framework of
equilibrium logic. But what form should these higher-order HT relations have?
Should they be arbitrary, or should they obey specific restrictions to capture
the computational nature of logic programs? Drawing on well-known concepts from
denotational semantics~\citep{tennent,gunter}, we adopt the view that meaningful
higher-order relations in a computational context should be monotonic over
appropriate semantic domains. We introduce \emph{justification monotonic}
relations which prove to be a cornerstone of our work, essential for
generalizing the characterization of strong equivalence to higher-order
programs. The paper's main contributions are outlined below:
\begin{itemize}
\item We develop a purely logical semantics for the higher-order logic
      programming language $\HOL$. Our approach extends the 3-valued truth
      domain of HT to the full type hierarchy of $\HOL$, restricting the domain
      to \emph{justification-monotonic} relations at each level of the
      hierarchy. Within this framework, we define \emph{equilibrium models} for
      $\HOL$ as the \emph{total} interpretations
      that are minimal with respect to the justification ordering. We argue that
      this restriction is necessary: adhering to a standard model of types -
      where the arrow type constructor denotes the set of all 3-valued relations
      - yields counterintuitive results even for trivial programs. Finally, we
      demonstrate that under this proposed semantics, \emph{stratified} $\HOL$
      programs always possess a unique equilibrium model.

\item We examine \emph{definability} concepts regarding the proposed semantics.
      More specifically, we demonstrate that every \emph{total}
      justification-monotonic relation can be represented by a stratified $\HOL$
      program. Moreover, every \emph{non-total} justification-monotonic relation
      can be ``captured'' (in a sense that will be precisely defined in
      Section~\ref{sec:representation-theorem}) by a stratified $\HOL$ program.
      Such definability results have a foundational significance: they
      demonstrate that the syntactic elements of $\HOL$ are strong enough to
      express all the elements of its semantic domains.       To use Robin
      Milner's words~\citep{Milner77}, definability results suggest that the
      proposed semantics is not ``over-generous''.

\item We generalize the classical proof of propositional strong equivalence to
      the world of higher-order logic programs. More specifically, we
      demonstrate that two $\HOL$ programs are strongly equivalent iff they have
      the same $\HOL$ models. Our proof relies significantly on the
      justification-monotonic nature of the proposed semantics and the derived
      definability results for $\HOL$. This result paves the road for further
      investigations on transformations and verification of higher-order logic
      programs, an area that at present is largely unexplored.
\end{itemize}
We believe that the present work can be the starting point for further
investigations not only in the logical foundations of higher-order logic
programs but also in their practical applications, extending in this way
significantly the scope of the ASP paradigm.

\section{$\HOL$: A Higher-Order Logic Programming Language}\label{sec:hol}
In this section we define the syntax of the language $\HOL$ that we use
throughout the paper. Actually, $\HOL$ is based on a simple type system with one
\emph{base} type, namely $\bool$, the Boolean domain.
%
\begin{definition}
The types of $\HOL$ are denoted by $\pi$ (and its subscripted
versions), and are defined as:
\begin{align*}
\pi  & := \bool \mid (\pi_1 \to \pi_2)
\end{align*}
\end{definition}

As usual, the binary operator $\to$ is right-associative. It can be easily seen
that every predicate type $\pi$ can be written in the form
$\pi_1 \to \cdots \to \pi_n \rightarrow \bool$, $n\geq 0$ (for $n=0$
we assume that $\pi=\bool$).
We proceed by defining the syntax of $\HOL$.

\begin{definition}
The \emph{alphabet} of $\HOL$ consists of the following: \emph{predicate variables}
of every type $\pi$ (denoted by capital letters such as $\mathsf{P,Q,R,\ldots})$;
\emph{predicate constants} of every type $\pi$ (denoted by lowercase letters such as $\mathsf{p,q,r,\ldots}$);
the \emph{conjunction} constant $\wedge$ of type $\bool \to \bool \to \bool$;
and the \emph{negation} constant $\pnot$ of type $\bool \to \bool$.
%
\end{definition}
\begin{definition}\label{def:expressions}
The \emph{expressions} and \emph{literals} of $\HOL$ are defined as follows.
Every predicate variable or constant is an expression; if $\mathsf{E}_1$ is an expression of type $\pi_1 \to \pi_2$ and $\mathsf{E}_2$ an
expression of type $\pi_1$ then $(\mathsf{E}_1\ \mathsf{E}_2)$ is an expression of type $\pi_2$.
If $\mathsf{E}$ is an expression of type $\bool$ then $\mathsf{E}$ is also a (positive) \emph{literal} of type $\bool$;
if $\mathsf{E}$ is an expression of type $\bool$ then $(\pnot \mathsf{E})$ is a (negative) \emph{literal} of type $\bool$.
\end{definition}
We will omit parentheses when no confusion arises and in some other cases we may use additional ones if we feel that
this enhances readability; so, for example, when a predicate constant \texttt{p} is unary, we may write \mbox{\texttt{p(R)}}
instead of \mbox{\texttt{p R}}. Literals will be denoted by $\mathsf{L}$ and its subscripted versions.
We write $\mathsf{E}:\pi$ to denote that an expression $\mathsf{E}$ has type $\pi$.
\begin{definition}
A \emph{rule} of $\HOL$ is a syntactic construct of the form
$\mathsf{p}\ \mathsf{R}_1 \cdots \mathsf{R}_n \lrule \mathsf{L}_1 \land \cdots \land \mathsf{L}_m$,
where $\mathsf{p}$ is a predicate constant of type $\pi_1 \to \cdots \to \pi_n \to \bool$,
$\mathsf{R}_1,\ldots,\mathsf{R}_n$ are distinct variables of types $\pi_1,\ldots,\pi_n$ respectively and
the $\mathsf{L}_i$ are literals.
The term $\mathsf{p}\ \mathsf{R}_1 \cdots \mathsf{R}_n$ is the \emph{head} of the rule and
$ \mathsf{L}_1 \land \cdots \land \mathsf{L}_m$ is its \emph{body}.
A \emph{program} $\mathsf{P}$ of $\HOL$ is a finite set of rules.
\end{definition}

We will often follow the common logic programming notation and write
$\mathsf{L}_1,\ldots,\mathsf{L}_m$ instead of
$\mathsf{L}_1 \wedge \cdots \wedge \mathsf{L}_m$ for the body of a rule.
For brevity reasons, we will often denote
a rule as $\mathsf{p} \ \overline{\mathsf{R}} \lrule \mathsf{B}$, where
$\overline{\mathsf{R}}$ is a shorthand for a sequence of variables
$\mathsf{R}_1 \cdots \mathsf{R}_n$ and $\mathsf{B}$ represents a
conjunction of literals.

\begin{example}\label{syntax-example}
The following predicates illustrate the syntax of $\HOL$. Their precise semantics, particularly the distinction between $\mathtt{id}$ and $\mathtt{neg\_neg}$, will be clarified in the next section.
\[
\begin{array}{lcl}
  \mbox{\texttt{id(R)}} & \leftarrow & \mathtt{R}\\
  \mbox{\texttt{neg(R)}} & \leftarrow & \pnot\mathtt{R}\\
  \mbox{\texttt{neg\_neg(R)}} & \leftarrow & \mbox{\texttt{neg(neg(R))}}\\
  \mathtt{eq}~\mathtt{R}~\mathtt{Q} & \leftarrow & \mbox{\texttt{neg(R),neg(Q)}}\\
  \mathtt{eq}~\mathtt{R}~\mathtt{Q} & \leftarrow & \mbox{\texttt{neg\_neg(R),neg\_neg(Q)}}
\end{array}
\]
The predicate constants \texttt{id}, \texttt{neg}, and \texttt{neg\_neg} are of type $o\to o$, while \texttt{eq}
is of type $o \to (o \to o)$. The constant \texttt{id} denotes the identity relation, while \texttt{neg} lifts
the negation operator $\pnot$ to a predicate constant; this definition allows us to compose negations in rule bodies,
as seen in the definition of \texttt{neg\_neg}; this achieves the effect of double negation (i.e., $\pnot\pnot\mathtt{R}$),
which is not directly supported as a syntactic form in $\HOL$. Finally, \texttt{eq} defines an equality relation
on truth values, whose semantics will be explained in Section~\ref{sec:semantics_of_HOL}.
\end{example}

\section{Equilibrium Semantics for $\HOL$ Programs}\label{sec:semantics_of_HOL}

In this section we introduce the semantics of $\HOL$ programs, which is based on an extension of the equilibrium
semantics of propositional programs introduced by~\cite{Pearce96,Pearce1999}. Pearce's equilibrium semantics is built on
top of Heyting's logic of here-and-there (HT); the semantics of HT is usually presented either through Kripke frames
or through 3-valued truth tables. We find it more convenient to use the latter approach, however we believe that our definitions and proofs
can be adapted to the former. The underlying set of truth values of HT is $V = \{\F, \NF, \T\}$. While $\F$ and $\T$ correspond
to false and true, the value $\NF$ is an intermediate one between $\F$ and $\T$, whose intuitive reading is
``\emph{weakly true because it lacks sufficient justification to be considered true}''. The standard ordering
in $V$ is $\F < \NF < \T$. Consider now the following ordering over $V$, which we represent by $\preceq$ and
call the \emph{justification ordering}:
\begin{definition}
For all $v_1,v_2 \in \{\F, \NF, \T\}$, $v_1 \preceq v_2$ iff $v_1=v_2$ or $v_1=\NF$ and $v_2=\T$.
\end{definition}
The justification ordering is not new: it is the standard ordering used by D. Pearce over models in order to define the equilibrium ones.
We use this ordering here to restrict our semantic domains to contain relations that preserve this ordering, namely the \emph{justification monotonic} ones. The intuitive reason behind this requirement is that if a (higher-order) relation yields a $\NF$ value, it must maintain or increase that value when the justification of its input increases. In other words, we don't want stronger evidence to invalidate a previous (even weakly true) conclusion. Later in this section, we provide a simple example which additionally motivates our adoption of justification monotonicity.


We proceed with the formal semantics of $\HOL$, starting with the semantics of its types.
The semantics of the base Boolean domain is the set $V = \{\F, \NF, \T\}$. The semantics of types of
the form $\pi_1\rightarrow \pi_2$ is the set of {\em justification-monotonic} functions
from the domain of type $\pi_1$ to that of type $\pi_2$. We define, simultaneously
with the meaning of every type $\pi$, two partial orders on the elements of type $\pi$:
the relation $\leq_{\pi}$ which represents the {\em truth} ordering, and the relation
$\preceq_{\pi}$ which represents the {\em justification} ordering.
\begin{definition}\label{our_domains}
Let $\mathsf{P}$ be a $\HOL$ program. For every type $\pi$ we define its meaning $\lsem \pi \rsem$, as follows:

  \begin{itemize}
    \item $\lsem o \rsem = \{ \F, \NF, \T \}$. The partial order $\leq_o$
          is the usual one induced by the ordering  $\F <_o \NF <_o \T$;
          the partial order $\preceq_o$ is the one induced by the ordering  $\NF \prec_o \T$.

    \item $\lsem \pi_1 \rightarrow \pi_2 \rsem = [ \lsem \pi_1 \rsem \rightarrow
          \lsem \pi_2 \rsem ]$, namely the $\preceq$-monotonic
          functions\footnote{Function $f\in \lsem \pi_1 \rightarrow \pi_2\rsem$ is $\preceq$-monotonic if for all
          $d_1,d_2\in \lsem \pi_1\rsem$, $d_1\preceq_{\pi_1} d_2$ implies $f(d_1)\preceq_{\pi_2}f(d_2)$.} from \lsem $\pi_1 \rsem$ to $\lsem \pi_2 \rsem$.
          The partial order $\leq_{\pi_1 \rightarrow \pi_2}$ is defined as follows: for all $f,g \in \lsem \pi_1 \rightarrow \pi_2 \rsem$,
          $f \leq_{\pi_1\rightarrow\pi_2} g$ iff $f(d) \leq_{\pi_2} g(d)$ for all $d \in \lsem \pi_1 \rsem$. The partial order
          $\preceq_{\pi_1 \rightarrow \pi_2}$ is defined as follows: for all $f,g \in \lsem \pi_1 \rightarrow \pi_2 \rsem$,
          $f \preceq_{\pi_1\rightarrow\pi_2} g$ iff $f(d) \preceq_{\pi_2} g(d)$ for all $d \in \lsem \pi_1 \rsem$.
  \end{itemize}
\end{definition}
We will omit the subscripts of the above orders when they are obvious from context.

\begin{definition}\label{def:interpretation}
An \emph{interpretation} ${\cal I}$ assigns to each predicate constant $\mathsf{p} : \pi$ of $\HOL$, an element
${\cal I}(\mathsf{p}) \in \mo{\pi}$.
\end{definition}

We will denote the set of interpretations with ${\cal H}$. We define two partial orders on ${\cal H}$ as
follows: for all ${\cal I}, {\cal J} \in {\cal H}$, ${\cal I} \leq {\cal J}$
(respectively, ${\cal I} \preceq {\cal J}$) iff for every predicate constant
$\mathsf{p}$, ${\cal I}(\mathsf{p}) \leq_{\pi} {\cal J}(\mathsf{p})$
(respectively, ${\cal I}(\mathsf{p}) \preceq_{\pi} {\cal J}(\mathsf{p})$).
\begin{definition}\label{def:state}
	A \emph{state} $s$ is a function that
	assigns to each predicate variable $\mathsf{R}:\pi$, an element
	$s(\mathsf{R}) \in \mo{\pi}$.
\end{definition}

We denote the set of states with ${\cal S}$. The partial orders
$\leq$ and $\preceq$ extend to ${\cal S}$ in the obvious way.
We will often use $s[\mathsf{R}_1/d_1,\ldots,\mathsf{R}_n/d_n]$ to denote a state that is identical to $s$ the
only difference being that the new state assigns to each $\mathsf{R}_i$ the corresponding value $d_i$;
for brevity, we will also denote it by $s[\overline{\mathsf{R}}/\overline{d}]$.

By abuse of language, we will often talk about ``an interpretation (respectively, state) of
a given program $\mathsf{P}$'' instead of ``an interpretation (respectively, state) of $\HOL$''.

\begin{definition}
Let $\mathcal{I}$ be an interpretation and $s$ a state. The semantics of expressions and literals with respect to ${\cal I}$ and $s$
is defined as follows:
\begin{enumerate}
  \item $\mwrs{\mathsf{R}}{\mathcal{I}}{s} = s(\mathsf{R})$
  \item $\mwrs{\mathsf{p}}{\mathcal{I}}{s} = \mathcal{I}(\mathsf{p})$
  \item $\mwrs{(\mathsf{E}_1\ \mathsf{E}_2)}{\mathcal{I}}{s} = \mwrs{\mathsf{E}_1}{\mathcal{I}}{s}(\mwrs{\mathsf{E}_2}{\mathcal{I}}{s})$
  \item $\mwrs{\mathsf{(\pnot E)}}{\mathcal{I}}{s}= \pnot \mwrs{\mathsf{E}}{\mathcal{I}}{s}$, where $\pnot \T = \F$, $\pnot \F = \T$, and $\pnot \NF = \F$.
\end{enumerate}
\end{definition}
We can now formally define the notion of \emph{model} for $\HOL$ programs.
\begin{definition}
Let $\mathsf{P}$ be a program and $\mathcal{M}$ be an interpretation of $\mathsf{P}$.
Then, $\mathcal{M}$ is a {\em model of $\mathsf{P}$} iff for every rule
$\mathsf{p}\ \mathsf{V}_1\cdots\mathsf{V}_n \leftarrow \mathsf{L}_1, \ldots, \mathsf{L}_m$ in $\mathsf{P}$
and for every state $s\in {\cal S}$, it holds that
$\min_{\leq}\{\mwrs{\mathsf{L}_i}{\mathcal{M}}{s}\mid i\in\{1,\ldots,m\}\}\leq\mwrs{\mathsf{p}\ \mathsf{V}_1\cdots\mathsf{V}_n}{\mathcal{M}}{s}$.
\end{definition}

We will be particularly interested in semantic elements that are essentially 2-valued:
\begin{definition}
Let $\pi$ be a type. An element $e \in \mo{\pi}$ is called  \emph{total} iff:
\begin{enumerate}
  \item $\pi = o$ and $e \in \{\F,\T\}$, or
  \item $\pi = \pi_1\to\pi_2$ and $e(e')$ is total for all total $e'\in \mo{\pi_1}$.
\end{enumerate}
An interpretation $\mathcal{I}$ will be called \emph{total} iff $\mathcal{I}(\mathsf{p})$ is total for every predicate constant $\mathsf{p}$.
\end{definition}

The semantics of a $\HOL$ program will be captured by its set of \emph{equilibrium} models:
\begin{definition}
Let $\mathsf{P}$ be a $\HOL$ program and $\mathcal{M}$ be an interpretation of $\mathsf{P}$.
Then, $\mathcal{M}$ is called an \emph{equilibrium model of $\mathsf{P}$} iff
$\mathcal{M}$ is total and a $\preceq$-minimal model of $\mathsf{P}$.
\end{definition}

Therefore, the meaning of a $\HOL$ program is captured by its set of equilibrium models. This semantics
generalizes the classical equilibrium model semantics of Pearce for propositional programs~\citep{Pearce96,Pearce1999} (because
it uses the same logical machinery at the propositional level). In the rest of this section we give some examples
of the proposed semantics, illustrate the necessity of restricting attention to $\preceq$-monotonic relations, and
state two important properties of the semantics.
\begin{example}
Based on the above semantics, it is not hard to check that the program consisting of the definitions
in Example~\ref{syntax-example}, has a unique equilibrium model which assigns to \texttt{id} the relation
$\{(\F,\F),(\NF,\NF),(\T,\T)\}$, to \texttt{neg} the relation $\{(\F,\T),(\NF,\F),(\T,\F)\}$, and to
\texttt{neg\_neg} the relation $\{(\F,\F),(\NF,\T),(\T,\T)\}$. Notice the difference between \texttt{id}
and \texttt{neg\_neg} which demonstrates the fact that in HT double negation is different from the identity relation.
Finally, the meaning of \texttt{eq} is a relation that returns $\T$ either if its two arguments are the same or one of
them is $\NF$ and the other $\T$; otherwise, it returns $\F$. It is easy to verify that all the aforementioned
relations are $\preceq$-monotonic. On the other hand, the standard equality relation (which returns $\T$ only
when its two arguments are the same and $\F$ otherwise), is not $\preceq$-monotonic and therefore not definable
as a $\HOL$ program. In other words, $\HOL$ programs have no way to distinguish between $\NF$ and $\T$.
\end{example}
\begin{example}\label{two-valued-monotone}
Consider the following program:
\[
\begin{array}{lcl}
  \mbox{\texttt{neg(R)}} & \leftarrow & \pnot\mathtt{R}\\
  \mbox{\texttt{p(R)}} & \leftarrow & \mbox{\texttt{neg(neg(p(R)))}} %
\end{array}
\]
It can be verified that the above program has 4 equilibrium models, namely all the $\preceq$-monotonic
relations of type $o \to o$ that are ``two-valued'' (i.e., they always return a value in $\{\F,\T\}$).
More specifically, the four equilibrium models assign to \texttt{p} the following relations:
${\cal M}_1(\texttt{p}) =\{(\F,\F),(\NF,\F),(\T,\F)\}$,
${\cal M}_2(\texttt{p}) =\{(\F,\T),(\NF,\F),(\T,\F)\}$,
${\cal M}_3(\texttt{p}) =\{(\F,\F),(\NF,\T),(\T,\T)\}$, and
${\cal M}_4(\texttt{p}) =\{(\F,\T),(\NF,\T),(\T,\T)\}$.
\end{example}

The previous example demonstrates the production of all ``two-valued'' $\preceq$-monotonic
relations of type $o \to o$. Can we produce \emph{all the total} $\preceq$-monotonic ones? This is
a more difficult task, requiring a more complicated program \ifincludeappendix(see Appendix~\ref{appendix-of-section-3})\else(see the supplementary material)\fi.

We now demonstrate, from a different angle, the importance of justification monotonicity in our semantics.
The following simple example illustrates the necessity of the concept.
\begin{example}\label{necessity-example}
Consider the following program $\mathsf{P}$, defining the predicate {\tt p} of type $o\to o$:
\[
\begin{array}{lll}
  \mbox{\texttt{p(R)}} & \leftarrow & \mathtt{fail}
\end{array}
\]
Intuitively, we expect that the meaning of {\tt p} is the relation ${\cal M}({\tt p})=\{(\F, \F),(\NF, \F), (\T, \F)\}$ (ie., the
relation that maps every truth value to $\F$). However, if our semantics allowed arbitrary relations to be the meanings of predicates,
the above program would have another equilibrium model, namely
$\mathcal{N}({\tt p}) = \{(\F, \F),(\NF, \NF), (\T, \F)\}$.
Obviously, ${\cal N}({\tt p})$ is not monotonic, since $\NF \preceq \T$ but ${\cal N}({\tt p})(\NF) = \NF \not\preceq \F = {\cal N}({\tt p})(\T)$.
However, ${\cal N}$ is total and $\preceq$-minimal model of $\mathsf{P}$, so we would have to
accept it as an equilibrium model of $\mathsf{P}$ if we had not restricted our
domains to contain only the justification monotonic relations.

One could argue that accepting both ${\cal M}$ and ${\cal N}$ as equilibrium models, would be strange but not catastrophic; after all, the meaning of \texttt{p} is identical in both models if we restrict attention to total inputs. However, assume we add two more rules to the program:
\[
\begin{array}{lll}
  \mbox{\texttt{result}} & \leftarrow & \mbox{\texttt{neg(neg(p(Q)))}}\\
  \mbox{\texttt{p(R)}}   & \leftarrow & \mathtt{fail}\\
  \mbox{\texttt{neg(R)}} & \leftarrow & \pnot\mathtt{R}
\end{array}
\]
The definition of \texttt{result} uses in its body the existential variable \texttt{Q} (so, intuitively, the body reads ``there exists a variable \texttt{Q} such that ...''). It is not hard to see that the above program has two equilibrium models ${\cal M}$ and ${\cal N}$ such that ${\cal M}(\texttt{result}) = \F$ and ${\cal N}(\texttt{result}) = \T$. Obviously, now ${\cal N}$
produces a rather counterintuitive result.
\end{example}

We now establish two key properties of the proposed equilibrium semantics.
Beyond their technical utility in proving subsequent results, these properties
demonstrate that our semantics adheres to fundamental principles of logic
programming. The first is a splitting lemma, analogous to the one introduced
in~\citep{Lifschitz1994}. The second is a guarantee that the proposed semantics assigns a
unique equilibrium model to every \emph{stratified} $\HOL$ program. \ifincludeappendix The proofs of the
results are given in the appendix of the paper.\else The proofs of the
results are given in the supplementary material.\fi
\begin{definition}\label{def_splitting}
Let $\mathsf{P}$ be a program and $U$ a set of predicate constants.
We say that $U$ is a splitting set of $\mathsf{P}$
if for every rule $C$ of $\mathsf{P}$,
if a predicate constant of $U$ appears in the head of $C$, then every predicate constant appearing in $C$ is included in $U$.
The set of rules $C\in \mathsf{P}$ such that all predicate constants appearing in $C$ are included in $U$
is called the \emph{bottom} of $\mathsf{P}$ relative to $U$ and denoted by $b_U(\mathsf{P})$.
The set $\mathsf{P} \setminus b_U(\mathsf{P})$ is called the \emph{top} of $\mathsf{P}$ relative to $U$.
\end{definition}

\begin{lemma}\label{splitting_submodel}
Let $\mathsf{P}$ be a program and $U$ be a splitting set of $\mathsf{P}$.
If $\mathcal{M}$ is an equilibrium model of $\mathsf{P}$,
then $\mathcal{M}$ restricted to $U$ is an equilibrium model of $b_U(\mathsf{P})$.
\end{lemma}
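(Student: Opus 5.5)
We argue directly from the definitions. Write $\mathcal{M}|_U$ for the restriction of $\mathcal{M}$ to the constants of $U$; it is an interpretation of $b_U(\mathsf{P})$, since every constant of $b_U(\mathsf{P})$ lies in $U$. We must establish three properties of $\mathcal{M}|_U$: it is total, it is a model of $b_U(\mathsf{P})$, and it is $\preceq$-minimal among models of $b_U(\mathsf{P})$.

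\textbf{Totality.} This is immediate, because totality is checked constant by constant.

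\textbf{Model property.} This follows by a locality observation. For any expression or literal $\mathsf{E}$ whose predicate constants all lie in $U$, the value $\mwrs{\mathsf{E}}{\mathcal{I}}{s}$ depends only on the values of $\mathcal{I}$ on $U$. This is a trivial induction on the structure of $\mathsf{E}$. Every rule of $b_U(\mathsf{P})$ is a rule of $\mathsf{P}$ whose constants lie in $U$. So for each state $s$, the model inequality for that rule under $\mathcal{M}|_U$ is literally the same inequality as under $\mathcal{M}$, and it holds because $\mathcal{M}$ is a model of $\mathsf{P}$.

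\textbf{Minimality.} This is the main step. Suppose $\mathcal{N}$ is a model of $b_U(\mathsf{P})$ with $\mathcal{N}\preceq \mathcal{M}|_U$. We glue it to $\mathcal{M}$ by defining $\mathcal{K}$ as $\mathcal{N}$ on $U$ and as $\mathcal{M}$ outside $U$. Then $\mathcal{K}$ is a legitimate interpretation (each value is $\preceq$-monotonic), and $\mathcal{K}\preceq\mathcal{M}$. We want to show $\mathcal{K}$ is a model of $\mathsf{P}$; then $\preceq$-minimality of $\mathcal{M}$ gives $\mathcal{K}=\mathcal{M}$, hence $\mathcal{N}=\mathcal{M}|_U$.

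Rules of the bottom are satisfied by $\mathcal{K}$ by locality, since $\mathcal{N}$ is a model of $b_U(\mathsf{P})$.

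For a top rule $\mathsf{p}\ \overline{\mathsf{V}}\leftarrow \mathsf{L}_1,\ldots,\mathsf{L}_m$, the splitting condition forces $\mathsf{p}\notin U$. Otherwise every constant of the rule would be in $U$ and the rule would belong to the bottom. Hence the head value under $\mathcal{K}$ equals the head value under $\mathcal{M}$. It therefore suffices to show that each body literal satisfies $\mwrs{\mathsf{L}_i}{\mathcal{K}}{s}\leq\mwrs{\mathsf{L}_i}{\mathcal{M}}{s}$ in the truth order.

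This is where the key auxiliary fact is needed: $\preceq$-monotonicity of evaluation. If $\mathcal{K}\preceq\mathcal{M}$, then $\mwrs{\mathsf{E}}{\mathcal{K}}{s}\preceq\mwrs{\mathsf{E}}{\mathcal{M}}{s}$ for every expression $\mathsf{E}$. We prove it by induction on $\mathsf{E}$:
\begin{itemize}
\item Variables give equal values.
\item Constants hold by hypothesis.
\item For an application $(\mathsf{E}_1\ \mathsf{E}_2)$, write $f\preceq g$ for the two values of $\mathsf{E}_1$ and $d\preceq d'$ for the two values of $\mathsf{E}_2$. Then $f(d)\preceq f(d')\preceq g(d')$, using that $f$ is $\preceq$-monotonic, the pointwise definition of $\preceq$, and transitivity.
\end{itemize}
This is exactly where the restriction of the domains to justification-monotonic functions is used.

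For literals, observe that $\pnot\NF=\pnot\T=\F$. So a negative literal has the same value under $\mathcal{K}$ and under $\mathcal{M}$: $\pnot$ collapses the only nontrivial $\preceq$-step. For a positive literal of type $o$, $\preceq$ on $V$ implies $\leq$. Taking the minimum over the body, the body value under $\mathcal{K}$ is at most the body value under $\mathcal{M}$. That value is at most the head value under $\mathcal{M}$, which equals the head value under $\mathcal{K}$. So $\mathcal{K}$ satisfies every top rule and is a model of $\mathsf{P}$, completing the argument.

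I expect the only real obstacle to be the $\preceq$-monotonicity of evaluation in the application case. Everything else is bookkeeping.
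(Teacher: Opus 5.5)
Your proof is correct and takes the same route as the paper. Both glue a $\preceq$-smaller model of $b_U(\mathsf{P})$ onto $\mathcal{M}$ outside $U$, note that top rules have heads outside $U$, and use $\preceq$-monotonicity of evaluation to show the glued interpretation is a model of $\mathsf{P}$, contradicting minimality. The only differences are that you prove the $\preceq$-monotonicity of evaluation by induction where the paper simply asserts it, and you replace the paper's case split on the body value with a direct truth-order chain; both are sound.
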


\emph{Stratified higher-order logic programs}~\citep{iclp24} is a broad class of programs for which, as we show,
the equilibrium semantics always produces a unique model.
\begin{definition}
A $\HOL$ program $\mathsf{P}$ is called \emph{stratified} if there is a function $S$ mapping
predicate constants to natural numbers, such that for each rule
$\mathsf{p} \ \overline{\mathsf{R}} \leftarrow \mathsf{L}_1 \wedge \cdots \wedge \mathsf{L}_m$
and any $i\in\{1,\ldots, m\}$:
\begin{itemize}
  \item $S(\mathsf{q})\leq S(\mathsf{p})$ for every predicate constant $\mathsf{q}$ occurring in $\mathsf{L}_i$.
  \item If $\mathsf{L}_i$ is of the form $(\sim \!\mathsf{E})$, then $S(\mathsf{q})< S(\mathsf{p})$ for each predicate constant $\mathsf{q}$ occurring~in~$\mathsf{E}$.
  \item For any subexpression of $\mathsf{L}_i$ of the form $(\mathsf{E}_1~\mathsf{E}_2)$, $S(\mathsf{q})< S(\mathsf{p})$ for every predicate constant $\mathsf{q}$ occurring in $\mathsf{E}_2$.
\end{itemize}
\end{definition}
%
%
\begin{theorem}\label{is_unique}
Let $\mathsf{P}$ be a stratified $\HOL$ program. Then, $\mathsf{P}$ has a unique equilibrium model.
\end{theorem}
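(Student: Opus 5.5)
We prove existence and uniqueness by induction on the strata, using Lemma~\ref{splitting_submodel} for uniqueness and an explicit stratum-by-stratum construction for existence. Let $S$ be a stratification and $k$ the maximal stratum. For $i\le k$ let $U_i=\{\mathsf{p}\mid S(\mathsf{p})\le i\}$. By the first condition of stratification, every $U_i$ is a splitting set of $\mathsf{P}$. Let $\mathsf{P}_i=b_{U_i}(\mathsf{P})$ consist of the rules whose head has stratum $\le i$. By Lemma~\ref{splitting_submodel}, any equilibrium model of $\mathsf{P}$ restricts to an equilibrium model of each $\mathsf{P}_i$. It therefore suffices to show, by induction on $i$, that $\mathsf{P}_i$ has exactly one equilibrium model $\mathcal{M}_i$, and that $\mathcal{M}_{i+1}$ extends $\mathcal{M}_i$.

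\textbf{Construction of $\mathcal{M}_{i+1}$.} Assume $\mathcal{M}_i$ is given and total. In a rule whose head $\mathsf{p}$ has stratum $i+1$, every negated subexpression and every argument position mentions only constants of $U_i$, whose meanings are fixed by $\mathcal{M}_i$. The only constants of stratum $i+1$ therefore occur in head (function) position of applications in positive literals. The rules for stratum $i+1$ thus define a \emph{monotone} operator $T$ on the stratum-$(i+1)$ part, with respect to $\leq$ on 3-valued relations and evaluated over all states. Take the 2-valued least fixpoint of this operator, built by transfinite iteration from the bottom interpretation in the complete lattice $\lsem\pi\rsem$. This gives the total interpretation $\mathcal{M}_{i+1}$.

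Three properties of $\mathcal{M}_{i+1}$ must be checked.
\begin{enumerate}
\item \emph{Totality.} The fixpoint returns values in $\{\F,\T\}$ on total arguments.
\item \emph{$\preceq$-monotonicity.} It must be checked that $T$ preserves $\preceq$-monotonic functions. This is the main obstacle: a limit of $\preceq$-monotonic functions under $\leq$ must remain $\preceq$-monotonic, and the function must be defined on non-total arguments as well. I would define the fixpoint on all arguments by the least-fixpoint iteration. Since $\mathcal{M}_i$ assigns $\preceq$-monotonic relations and $\pnot$ is $\preceq$-monotonic ($\NF\mapsto\F$, $\T\mapsto\F$), each stage stays in the domain, and suprema of chains preserve $\preceq$.
\item \emph{Minimality.} $\mathcal{M}_{i+1}$ is a $\preceq$-minimal model of $\mathsf{P}_{i+1}$.
\end{enumerate}

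\textbf{Uniqueness.} Let $\mathcal{N}$ be an equilibrium model of $\mathsf{P}_{i+1}$. By the inductive hypothesis and splitting, $\mathcal{N}$ restricted to $U_i$ equals $\mathcal{M}_i$. Now consider the interpretation $\mathcal{K}$ that agrees with $\mathcal{N}$ off stratum $i+1$, and on stratum $i+1$ takes the value $\NF$ wherever $\mathcal{N}$ is $\T$ but $\mathcal{M}_{i+1}$ is not. To carry this out, I would form $\mathcal{K}=\mathcal{N}$ on lower strata and $\mathcal{K}(\mathsf{p})(\overline d)=\NF$ when $\mathcal{N}(\mathsf{p})(\overline d)=\T$ and $\mathcal{M}_{i+1}(\mathsf{p})(\overline d)=\F$, and $\mathcal{K}(\mathsf{p})(\overline d)=\mathcal{N}(\mathsf{p})(\overline d)$ otherwise.

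Two facts must then be verified. First, $\mathcal{K}$ is $\preceq$-monotonic, which uses the monotonicity of both $\mathcal{N}$ and $\mathcal{M}_{i+1}$. Second, $\mathcal{K}$ is a model, which uses the fact that $\mathcal{M}_{i+1}$ is a fixpoint and that negated parts are unaffected. Minimality of $\mathcal{N}$ then forces $\mathcal{N}\le\mathcal{M}_{i+1}$ pointwise. Conversely, leastness of the fixpoint gives $\mathcal{M}_{i+1}\le\mathcal{N}$. Hence $\mathcal{N}=\mathcal{M}_{i+1}$.

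Finally, $\mathcal{M}_k$ is the unique equilibrium model of $\mathsf{P}=\mathsf{P}_k$.
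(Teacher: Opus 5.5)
\textbf{Overall.} Your architecture is the paper's. You use cumulative strata that are splitting sets, a $\le$-monotone operator whose least fixpoint extends $\mathcal{M}_i$, the splitting lemma to make any equilibrium model agree with $\mathcal{M}_i$ below, and leastness to get $\mathcal{M}_{i+1}\le\mathcal{N}$. You then lower $\mathcal{N}$ to contradict minimality; your $\mathcal{K}$ is a variant of the paper's $\mathcal{N}^*$, and it is indeed $\preceq$-monotonic and a model.

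\textbf{The gap in uniqueness.} It lies in the sentence ``Minimality of $\mathcal{N}$ then forces $\mathcal{N}\le\mathcal{M}_{i+1}$ pointwise''. Since $\mathcal{K}\preceq\mathcal{N}$ is a model, minimality only yields $\mathcal{K}=\mathcal{N}$, i.e.\ there is no point where $\mathcal{N}$ is $\T$ and $\mathcal{M}_{i+1}$ is $\F$. That excludes neither points where $\mathcal{N}$ is $\T$ and $\mathcal{M}_{i+1}$ is $\NF$, nor (directly) points where $\mathcal{N}$ is $\NF$ and $\mathcal{M}_{i+1}$ is $\F$.

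A concrete example: for $\mathsf{p}(\mathsf{R})\leftarrow\mathsf{R}$ the fixpoint is the identity. The total model $\mathcal{N}(\mathsf{p})=\{(\F,\F),(\NF,\T),(\T,\T)\}$ satisfies everything your argument establishes, including $\mathcal{K}=\mathcal{N}$, yet $\mathcal{N}\not\le\mathcal{M}$. What actually excludes it is $\mathcal{M}\prec\mathcal{N}$.

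Two further steps are needed, and they are exactly the paper's:
\begin{enumerate}
\item Given a point $\overline{d}$ with $\mathcal{N}(\mathsf{p})\,\overline{d}=\NF$ and $\mathcal{M}_{i+1}(\mathsf{p})\,\overline{d}=\F$, raise $\overline{d}$ to a total $\overline{e}\succeq\overline{d}$ (for instance, the collapse of each argument). Then $\preceq$-monotonicity and totality of $\mathcal{N}$ force $\mathcal{N}(\mathsf{p})\,\overline{e}=\T$ and $\mathcal{M}_{i+1}(\mathsf{p})\,\overline{e}=\F$, which $\mathcal{K}=\mathcal{N}$ rules out.
\item This, together with $\mathcal{M}_{i+1}\le\mathcal{N}$, gives $\mathcal{M}_{i+1}\preceq\mathcal{N}$. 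Since $\mathcal{M}_{i+1}$ is itself a model of $\mathsf{P}_{i+1}$, applying minimality of $\mathcal{N}$ to $\mathcal{M}_{i+1}$ (not to $\mathcal{K}$) yields $\mathcal{N}=\mathcal{M}_{i+1}$.
\end{enumerate}
The paper phrases this as follows: $\mathcal{M}_i\not\prec\mathcal{N}$, so some discrepancy has $\mathcal{M}_i=\F$, and totality lifts it to an $\F$/$\T$ discrepancy at total inputs, making $\mathcal{N}^*\neq\mathcal{N}$.

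\textbf{Existence.} The least fixpoint is not ``2-valued''; in the example above it takes value $\NF$ at $\NF$. Only totality holds, and you assert it rather than prove it. The substantive issue is existential body variables: at total arguments, the $\le$-maximum over all states could a priori be $\NF$, produced by a non-total value of a variable not occurring in the head.

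The missing idea is the content of the paper's lemma that each $\mathcal{M}_i$ is an equilibrium model. Every state can be $\preceq$-raised to a total one, bodies are $\preceq$-monotone in the state, and $\preceq$ implies $\le$. Hence the maximum is attained at a total state. There each literal is total, by induction on the fixpoint iteration and totality of $\mathcal{M}_i$; negative literals always lie in $\{\F,\T\}$.

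Minimality is likewise only listed. It follows because $\le$-minimality implies $\preceq$-minimality. The $\le$-minimality comes from leastness of the fixpoint plus $\preceq$-minimality of $\mathcal{M}_i$, which pins down the lower strata of any model below $\mathcal{M}_{i+1}$.

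Your remark that $T$ and chain suprema preserve $\preceq$-monotonicity is correct, and it is a point the paper leaves implicit.
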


Notice that Examples~\ref{syntax-example} and~\ref{necessity-example} are stratified programs and, obviously, have
a unique equilibrium model.

\section{Two Definability Theorems for $\HOL$}\label{sec:representation-theorem}
In this section we demonstrate that every total $\preceq$-monotonic relation is definable by a stratified $\HOL$ program.
Additionally, we argue that every (even non-total) $\preceq$-monotonic relation can be ``captured'' by a stratified
$\HOL$ program (in a sense that will become clear later in the section). These definability results are important
for two main reasons:
\begin{itemize}
\item Both results play an important technical role in establishing the strong equivalence theorem
      (Theorem~\ref{se-theorem} in the next section). The significance of the results in the proof,
      will be intuitively explained in the next section.

\item Such definability results have an additional foundational significance: they demonstrate that $\HOL$ is
      strong enough to express all the elements of its semantic domain. To use Robin Milner's words~\citep{Milner77},
      such results suggest that the proposed semantics is not ``over-generous''. According to Milner, a semantics is
      over-generous if there exist elements in the semantic domain of a language that are not definable
      by phrases (ie., syntactic elements) of the language. It is interesting to note that Milner's remarks
      about definability, were given in the context of discussing \emph{full-abstraction},
      a notion conceptually close to strong equivalence.
\end{itemize}
We now formally state the two definability results and illustrate them by corresponding examples. \ifincludeappendix The proofs
of the theorems are given in the appendix of the paper.\else The proofs
of the theorems are given in the supplementary material.\fi The two examples have been obtained by
simplifying the programs produced by the constructions in the proofs of the theorems.
\begin{theorem}\label{definability_of_exact}
For each type $\pi$ and every total $d\in\mo{\pi}$, there exists a stratified program $\mathsf{P}_d$ with a constant $\ce{d} : \pi$ such that
$\mathsf{P}_d$ has a unique equilibrium model $\mathcal{M}$ with $\mathcal{M}(\ce{d}) = d$.
\end{theorem}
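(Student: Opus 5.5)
The proof is by induction on the structure of $\pi$. The induction hypothesis is strengthened so that, together with a defining program for every total $d\in\mo{\pi}$, we also obtain stratified \emph{test} predicates for the elements of $\mo{\pi}$. Since all semantic domains are finite, there are only finitely many total elements at each type. Hence a program may contain one rule (or one constant) per total element and remain a finite program.

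For the base type $o$ the cases are immediate. $\T$ is defined by the fact $\ce{\T}\lrule$, and $\F$ by a constant with no rules; each program has a unique equilibrium model by Theorem~\ref{is_unique}. For tests at type $o$ I would use the predicates of Example~\ref{syntax-example}. The literal $\mathtt{R}$ returns $\T$ exactly on $\T$. The literal $\pnot\mathtt{R}$ returns $\T$ exactly on $\F$. The composite $\mathtt{neg\_neg(R)}$ returns $\T$ on both $\NF$ and $\T$. Combining $\mathtt{R}$ with $\mathtt{neg\_neg(R)}$ therefore lets a rule body observe whether an argument is $\NF$, up to justification: the body yields $\NF$ rather than $\T$ precisely when the argument is $\NF$. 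This is the only kind of distinction that $\preceq$-monotonic relations are allowed to make.

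For the arrow case, let $d\in\mo{\pi_1\to\cdots\to\pi_n\to o}$ be total. I would first prove a \emph{normal form} lemma: every $\preceq$-monotonic $f$ of this type is determined by finitely many observations of its arguments. An observation means applying an argument $x_i$ to total elements $e$ of its argument types and reading off the result in $V$, where $\NF$ and $\T$ are told apart only as above. By the induction hypothesis each such $e$ is available as a stratified constant $\ce{e}$ defined by $\mathsf{P}_e$.

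Given the lemma, I would write one rule for each ``observation profile'' $\sigma$ on which $d$ is not $\F$:
\[ \ce{d}\ \mathtt{R}_1\cdots\mathtt{R}_n \lrule \mathsf{B}_\sigma. \]
Here $\mathsf{B}_\sigma$ is a conjunction of literals of the forms $\mathtt{R}_i\,\ce{e_1}\cdots\ce{e_k}$, $\pnot(\mathtt{R}_i\,\ce{e_1}\cdots)$ and $\mathtt{neg\_neg}(\mathtt{R}_i\,\ce{e_1}\cdots)$. It is chosen so that its value equals $d(\overline{x})$ whenever the profile of $\overline{x}$ is $\sigma$, and is $\F$ otherwise. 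Doubled negation is used to turn an $\NF$ into $\T$ when $d$ requires $\T$, and a plain positive literal is kept when $d$ requires $\NF$. Stratification holds because every constant $\ce{e}$ occurs only in argument position and has a strictly lower stratum than $\ce{d}$. Theorem~\ref{is_unique} then gives a unique equilibrium model. By Lemma~\ref{splitting_submodel}, applied with the splitting set consisting of the auxiliary constants, that model agrees with the intended meanings of the $\ce{e}$. It remains to check that the minimal total model assigns $d$ to $\ce{d}$. The model condition gives a lower bound on $\mathcal{M}(\ce{d})$. Totality and $\preceq$-minimality push every value down to exactly the $\min$ of the bodies, since a larger value would be a $\preceq$-larger or non-minimal choice.

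The main obstacle is the normal form lemma. Equivalently, one must show that the values of $d$ on \emph{non-total} inputs are forced by its values on total inputs together with the $\NF$/$\T$ observations. As Example~\ref{necessity-example} shows, this fails for arbitrary relations, so $\preceq$-monotonicity must be used essentially.

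My first approach is a ``saturation'' argument. For a non-total $x$, consider the set of elements $y$ with $x\preceq y$ obtained by replacing $\NF$ outputs of $x$ by $\T$. Then $d(x)\preceq d(y)$ for each such $y$. Since $d(y)$ is total when $y$ is total, $d(x)$ is either equal to $d(y)$, or is $\NF$ with $d(y)=\T$. I would then show that the remaining choice between $\NF$ and $\T$ is exactly what the positive-versus-double-negated literals in $\mathsf{B}_\sigma$ express.

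The delicate point is that $y$ need not itself be total at higher types. The lemma would therefore have to be proved simultaneously with the definability statement, by induction on types, using the test predicates of lower types to probe arguments.
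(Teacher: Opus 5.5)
Your normal form lemma is false as soon as some argument has functional type, and the arrow case fails with it. Take $\pi=(o\to o)\to o$. The only total elements of type $o$ are $\F$ and $\T$, so every literal you allow in $\mathsf{B}_\sigma$ sees an argument $x\in\mo{o\to o}$ only through $x(\F)$ and $x(\T)$.

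Let $x_3=\{(\F,\NF),(\NF,\NF),(\T,\T)\}$ and $x_4=\{(\F,\NF),(\NF,\T),(\T,\T)\}$. Both are $\preceq$-monotonic, and they agree on $\F$ and $\T$. Define $d$ as follows:
\begin{itemize}
\item $d(y)=y(\F)$ if $y(\F)\in\{\F,\T\}$;
\item $d(y)=\T$ if $y(\F)=\NF$ and $y(\NF)=\T$;
\item $d(y)=\NF$ otherwise.
\end{itemize}
A case analysis on $y(\F)$ shows that $d$ is $\preceq$-monotonic. It is total because every total $y$ has $y(\F)\in\{\F,\T\}$. Yet $d(x_3)=\NF\neq\T=d(x_4)$.

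Your rules for $\ce{d}$ are non-recursive. So in the unique equilibrium model, $\mathcal{M}(\ce{d})(x)$ is the $\leq$-maximum of the body values, and these depend only on $(x(\F),x(\T))$. Hence $\mathcal{M}(\ce{d})(x_3)=\mathcal{M}(\ce{d})(x_4)$, whatever the $\mathsf{B}_\sigma$ are. Existential variables in the same literal forms do not help: by $\preceq$-monotonicity, the maximum over their values is attained at total ones. Put differently, $d$ and $\lambda y.\tv{y(\F)}$ are distinct total elements that agree on every total input. Your saturation step therefore cannot resolve the choice between $\NF$ and $\T$ from total probes.

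What is missing is a way to probe arguments at \emph{non-total} points. That requires a source of $\NF$, which no constant can supply, because equilibrium models are total. The paper takes it from the input itself. If the arguments are not all total, some probe $\mathsf{R}_i\,\ce{e_{i,1}}\cdots\ce{e_{i,m}}$ evaluates to $\NF$, and this probe is fed as the first argument of $\cst{e}$. This is why the two definability theorems are proved together:
\begin{itemize}
\item First, $\mathsf{P}^*_\tau$ defines every $\cst{a}$. It uses $\Est{a,\mathsf{V},\mathsf{U}}$, which compares $\mathsf{V}$ against \emph{all} elements $\cst{a_j}\,\mathsf{U}$, including non-total ones.
\item Then $\ce{e}$ gets one rule per tuple $(d_1,\ldots,d_n)$ with $e\,d_1\cdots d_n\neq\F$. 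The body checks $\tv{s(\mathsf{R}_i)}=\tv{d_i}$ through $\Ee{\tv{d_i},\mathsf{R}_i}$. It also contains the conjuncts $\cst{e}\,(\mathsf{R}_i\,\ce{e_{i,1}}\cdots\ce{e_{i,m}})\,\mathsf{R}_1\cdots\mathsf{R}_n$, which equal $e\,s(\mathsf{R}_1)\cdots s(\mathsf{R}_n)$ whenever the probe is $\NF$.
\end{itemize}

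In the example above, the same idea appears as the self-application $\mathtt{R}\,(\mathtt{R}\,\ce{\F})$. The rules $\ce{d}\,\mathtt{R}\lrule\mathtt{R}\,\ce{\F}$ and $\ce{d}\,\mathtt{R}\lrule\mathsf{neg\_neg}(\mathtt{R}\,\ce{\F}),\mathtt{R}\,(\mathtt{R}\,\ce{\F})$ define $d$, but no literal of your three forms can.
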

\begin{example}
Let $d \in \mo{o \to o}$ be a total relation such that $d=\{(\F,\T),(\NF, \NF), (\T,\T) \}$.
We can create a program $\mathsf{P}_d$ as follows:
\[
\begin{array}{lll}
  \mbox{\texttt{neg(R)}} & \leftarrow & \pnot\mathtt{R}\\
  \mbox{\texttt{p$_d$(R)}} & \leftarrow & \mbox{\texttt{neg(R)}}\\
  \mbox{\texttt{p$_d$(R)}} & \leftarrow & \mathtt{R}
\end{array}
\]
$\mathsf{P}_d$ is stratified and its unique equilibrium model $\mathcal{M}$ assigns to $\mathtt{p}_d$ the relation $d$.
\end{example}

Consider now the case of \emph{non-total} $\preceq$-monotonic relations. Since our semantics is based on total relations (ie., every equilibrium model assigns to predicate constants total relations), a non-total relation is not directly definable in $\HOL$. However, we can ``capture'' non-total relations in the following sense:  for every non-total $\preceq$-monotonic relation $d$ we can create a stratified $\HOL$ program $\mathsf{P}_d$ which defines a predicate constant $\cst{d}$; the unique equilibrium model ${\cal M}$ of $\mathsf{P}_d$ satisfies ${\cal M}(\cst{d})(\NF) = d$. In order to give the full formal statement of this result, we need the following definition of a collapse function.
\begin{definition}
Let $\pi$ be a predicate type and let $d \in \mo{\pi}$.
We define $\tv{d}$ recursively as:
\begin{itemize}
  \item If $\pi = o$, $\tv{d} = \F$ if $d = \F$, $\tv{d} = \T$ otherwise.
  \item If $\pi = \pi_1\rightarrow \pi_2$, then $\tv{d}(d_1) = \tv{d(d_1)}$ for any $d_1 \in \mo{\pi_1}$.
\end{itemize}
\end{definition}
We can now state our second definability theorem:
\begin{theorem}\label{definability_of_non_exact}
For each type $\pi$ and $d\in\mo{\pi}$, there exists a stratified program $\mathsf{P}_d$ with a constant $\cst{d} : o \rightarrow \pi$ such that
$\mathsf{P}_d$ has a unique equilibrium model $\mathcal{M}$ with
\[
\mathcal{M}(\cst{d})(u)=\begin{cases}
  d, & \text{ if } u = \NF\\
  \tv{d}, & \text{ if } u = \T\\
  \lambda \overline{x}.\F, & \text{ if } u = \F
\end{cases}\]
\end{theorem}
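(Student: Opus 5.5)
The construction goes by induction on the type $\pi$, building $\mathsf{P}_d$ from programs for the argument types. Theorem~\ref{definability_of_exact} supplies programs for total elements. For $\pi$ of the form $\pi_1\to\cdots\to\pi_n\to o$, I will define $\cst{d}$ by rules with head $\cst{d}\ \mathsf{U}\ \mathsf{R}_1\cdots\mathsf{R}_n$. Here the extra Boolean argument $\mathsf{U}$ plays the role of a ``justification switch'': its value $\NF$ or $\T$ decides whether we see $d$ itself or its collapse $\tv{d}$. The value $\F$ is handled by including $\mathsf{U}$ as a positive literal in every body, so $u=\F$ forces the body to be $\F$ and minimality yields $\lambda\overline{x}.\F$.

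\emph{Base case $\pi=o$.} We have $d\in\{\F,\NF,\T\}$.
\begin{itemize}
\item For $d=\F$ there are no rules.
\item For $d=\T$ we use $\cst{d}\ \mathsf{U}\leftarrow \mathtt{neg\_neg}(\mathsf{U})$, which maps $\NF,\T$ to $\T$ and $\F$ to $\F$.
\item For $d=\NF$ we use $\cst{d}\ \mathsf{U}\leftarrow\mathsf{U}$, i.e.\ the identity, which gives $\NF\mapsto\NF=d$, $\T\mapsto\T=\tv{\NF}$ and $\F\mapsto\F$.
\end{itemize}
These are exactly the required values. Checking that these programs have the stated unique equilibrium model is a finite computation, and uniqueness comes from Theorem~\ref{is_unique} since the programs are stratified.

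\emph{Inductive step.} Since $d$ is $\preceq$-monotonic, I plan to describe $d$ by a finite ``case table'' over the arguments. The domains $\mo{\pi_i}$ are finite, so for each argument tuple $\overline{e}=(e_1,\ldots,e_n)$ with $d(\overline{e})\neq\F$, I want a body that is ``true enough'' exactly on arguments matching $\overline{e}$ and yields the value $d(\overline{e})$ scaled by $\mathsf{U}$.

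Matching requires a test predicate $\mathtt{test}_{e}(\mathsf{R})$ for each $e\in\mo{\pi_i}$, comparing $\mathsf{R}$ with $e$ up to the indistinguishability of $\NF$ and $\T$, in the manner of $\mathtt{eq}$ in Example~\ref{syntax-example}. I would define it by applying $\mathsf{R}$ to representations of all points $c$ of $\mo{\pi_{i,1}}$, obtained by induction as $\cst{c}\ \mathsf{U}$ (or via Theorem~\ref{definability_of_exact} when $c$ is total). The results are then compared to the expected values $e(c)$ through $\mathtt{neg}$/$\mathtt{neg\_neg}$ combinations. Since $\mathtt{neg\_neg}$ turns $\NF$ into $\T$, these tests produce total verdicts. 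Stratification holds because the constants $\cst{c}$ appear in argument positions and are defined at strictly lower strata. The body for tuple $\overline{e}$ is then
\[
\mathtt{test}_{e_1}(\mathsf{R}_1),\ldots,\mathtt{test}_{e_n}(\mathsf{R}_n),\ \mathsf{U},\ (\mathtt{neg\_neg}(\mathsf{U})\text{ if }d(\overline{e})=\T).
\]
When $u=\NF$ the rule yields $\NF$ or $\T$ according to $d(\overline{e})$, and when $u=\T$ it yields $\T$, i.e.\ $\tv{d}$.

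\emph{Main obstacle.} The hard step is the correctness of the tests on arbitrary, not just total, inputs. Because everything in the language is $\preceq$-monotonic, an input $\preceq$-above $\overline{e}$ will also pass the test for $\overline{e}$. The resulting join of rule values must therefore still equal $d$, and this is where monotonicity of $d$ is used: inputs $\preceq$-above $\overline{e}$ have $d$-values $\preceq$-above $d(\overline{e})$. I would first try to show that the least $\leq$-model computes $\max$ over the rules matching $\overline{x}$, that this maximum equals $d(\overline{x})$ because the maximal matching $\overline{e}$ is $\overline{x}$ itself, and that the resulting model is total. The last point holds since for total $\overline{x}$ and $u\in\{\F,\T\}$ all values lie in $\{\F,\T\}$. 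If exact matching fails, I would fall back to tests for the $\preceq$-upward-closed sets $\{x : e\preceq x\}$ and use the same max argument.
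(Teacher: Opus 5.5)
\textbf{Gap: your matching tests cannot be total.} The meaning of $\mathtt{test}_e$ is $\preceq$-monotone, and $x\preceq\tv{x}$. So if its verdicts lie in $\{\F,\T\}$, then $\mathtt{test}_e(x)\preceq\mathtt{test}_e(\tv{x})$ forces $\mathtt{test}_e(x)=\mathtt{test}_e(\tv{x})$: the test only sees the collapse of its argument. Consequently the rule for $\overline{e}$ fires at full strength on every $\overline{x}$ with $\tv{x_i}=\tv{e_i}$ for all $i$, including inputs strictly $\preceq$-below $\overline{e}$. At $u=\NF$, collapse-level tests therefore compute $\overline{x}\mapsto d\,\tv{x_1}\cdots\tv{x_n}$ rather than $d$. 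Already for the identity $d$ on $o$ (the meaning of \texttt{id}), the test for $e=\T$ must accept $\NF$, giving $\mathcal{M}(\cst{d})\,\NF\,\NF=\T\neq d(\NF)$. The same failure occurs for every total $d$ that takes the value $\NF$ somewhere. Your ``main obstacle'' paragraph worries about inputs $\preceq$-above $\overline{e}$, which are harmless; the damaging ones are those below. Hence the claim that ``the maximal matching $\overline{e}$ is $\overline{x}$ itself'' is false. Your fallback to tests for $\{x : e\preceq x\}$ fails for the same reason as long as the tests are total, and you do not say how to build them otherwise. There is also a separate slip: putting $\mathsf{U}$ in every body caps every body at $\NF$ when $u=\NF$, so no rule could ever yield $\T$ there. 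The $\T$-rules must use $\mathtt{neg\_neg}(\mathsf{U})$ \emph{instead of} $\mathsf{U}$, exactly as in your own base case $d=\T$; this still kills the body at $u=\F$.

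The missing idea is a deliberately three-valued test. The paper's test of $\mathsf{R}$ against $e$ is a conjunction over all argument tuples $\overline{c}$ of $e$ of the following conjuncts:
\begin{itemize}
\item the \emph{bare} application $\mathsf{R}\,(\cst{c_1}\,\mathsf{U})\cdots(\cst{c_k}\,\mathsf{U})$ when $e\,\overline{c}=\T$;
\item $\mathtt{neg\_neg}$ of that application when $e\,\overline{c}=\NF$;
\item $\mathtt{neg}$ of that application when $e\,\overline{c}=\F$.
\end{itemize}
The bare application at the $\T$-points is what breaks collapse-blindness. At $u=\NF$ the test is $\T$ exactly when $e\preceq x$, it is $\NF$ only if $\tv{x}=\tv{e}$, and it is $\F$ otherwise. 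Take as bodies $\mathtt{neg\_neg}(\mathsf{U})$ plus the tests for $\T$-entries, and $\mathsf{U}$ plus the tests for $\NF$-entries. Your max argument then goes through:
\begin{itemize}
\item A $\T$-rule reaches $\T$ only when $\overline{e}\preceq\overline{x}$, which forces $d(\overline{x})=\T$.
\item If a rule for $\overline{e}$ reaches only $\NF$, then $\tv{\overline{x}}=\tv{\overline{e}}$. Monotonicity through $\tv{\overline{x}}$ then gives $d(\overline{x})\neq\F$, so this $\NF$ is harmless.
\end{itemize}
Finally, do not invoke Theorem~\ref{definability_of_exact} for total points. In the paper that theorem is proved on top of the constants $\cst{c}$, so you should use $\cst{c}\,\mathsf{U}$ for every point $c$ to avoid circularity.
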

\begin{example}
Let $d \in \mo{o \to o}$ be a relation such that $d=\{(\F,\T),(\NF, \NF), (\T,\NF) \}$.
We can create a program $\mathsf{P}_d$ as follows:
\[
\begin{array}{lll}
  \mbox{\texttt{neg(R)}} & \leftarrow & \pnot\mathtt{R}\\
  \mbox{\texttt{neg\_neg(R)}} & \leftarrow & \mbox{\texttt{neg(neg(R))}}\\
  \mathtt{p}^*_d~\mathtt{U}~\mathtt{R} & \leftarrow & \mbox{\texttt{neg\_neg(U),neg(R)}}\\
  \mathtt{p}^*_d~\mathtt{U}~\mathtt{R} & \leftarrow & \mbox{\texttt{U,neg\_neg(R)}}\\
  \mathtt{p}^*_d~\mathtt{U}~\mathtt{R} & \leftarrow & \mbox{\texttt{U,R}}
\end{array}
\]

It is easy to see that $\mathsf{P}_d$ is stratified and its unique equilibrium model $\mathcal{M}$ satisfies the
requirements of Theorem~\ref{definability_of_non_exact}.
\end{example}

\section{The Strong Equivalence Theorem for Higher-Order Logic Programs}\label{sec:strong-eq}

In this section we present the main theorem of the paper, which establishes that two programs
$\mathsf{P}_1$, $\mathsf{P}_2$ are strongly equivalent iff they have the same models. Our result
is a natural generalization, in programs of arbitrary orders, of the classical strong equivalence
theorem for propositional programs~\citep{LifschitzPV01}. We start by defining the notion of
\emph{strong equivalence} for $\HOL$ programs, which is a direct extension of the corresponding
notion for propositional programs.
\begin{definition}
Let $\mathsf{P}_1,\mathsf{P}_2$ be $\HOL$ programs. Then, $\mathsf{P}_1$ and $\mathsf{P}_2$ will be termed \emph{strongly equivalent}
iff for all $\HOL$ programs $\mathsf{P}$, the programs $\mathsf{P}_1 \cup \mathsf{P}$ and $\mathsf{P}_2 \cup \mathsf{P}$ have the same
equilibrium models.
\end{definition}

Before presenting the theorem, we give two examples of programs that are not strongly equivalent. The first example involves two
propositional programs and therefore we can use directly the classical characterization by~\cite{LifschitzPV01}. The second example
is a related one, involving however a higher-order predicate. Both examples rely on the fact that double negation elimination
is invalid in HT.
\begin{example}
The following example illustrates that the ``program'' $\mathtt{p} \leftarrow \pnot\pnot\mathtt{r}$ is not
strongly equivalent to the program $\mathtt{p} \leftarrow \mathtt{r}$; since our language does not allow nested
negations of the form $\pnot\pnot\mathtt{r}$, we use an intermediate variable \texttt{q} to achieve the same effect.
More specifically, the program $\mathsf{P}_1$:
\[
\begin{array}{l}
\mathtt{p}\leftarrow\pnot\mathtt{q}\\
\mathtt{q}\leftarrow\pnot\mathtt{r}
\end{array}
\]
and the program $\mathsf{P}_2$:
\[
\begin{array}{l}
\mbox{\tt p $\leftarrow$ r}\\
\mbox{\tt q $\leftarrow$ $\pnot$r}
\end{array}
\]
are not strongly equivalent because they don't have the same HT models: the interpretation ${\cal I}(\mathtt{p})=\NF$,
${\cal I}(\mathtt{q})=\F$, and ${\cal I}(\mathtt{r})=\NF$, is a model of $\mathsf{P}_2$ but not of $\mathsf{P}_1$. Actually,
one can easily verify that by following the steps of the proof of Theorem 1 of~\cite{LifschitzPV01}, we can construct the
program $\mathsf{P}=\{\mathtt{p} \leftarrow \mathtt{r}, \mathtt{r} \leftarrow \mathtt{p}\}$ such that $\mathsf{P}_1\cup \mathsf{P}$
and $\mathsf{P}_2\cup \mathsf{P}$ have different equilibrium models. In particular, $\mathsf{P}_1\cup \mathsf{P}$ has the
equilibrium model ${\cal K}(\mathtt{p})=\T$, ${\cal K}(\mathtt{q})=\F$, and ${\cal K}(\mathtt{r})=\T$, which is not an equilibrium
model of $\mathsf{P}_2\cup \mathsf{P}$.
\end{example}
%
%
%
%
\begin{example}
The following example illustrates that a program involving a predicate that is defined through double negation, is not strongly equivalent to
one that does not use double negation. More specifically, the program $\mathsf{P}_1$:
\[
\begin{array}{l}
\mbox{\texttt{p(R)}} \leftarrow \mbox{\texttt{neg(neg(R))}}\\
\mbox{\texttt{neg(R)}} \leftarrow\pnot\mathtt{R}
\end{array}
\]
and the program $\mathsf{P}_2$:
\[
\begin{array}{l}
\mbox{\texttt{p(R)}}\leftarrow\mathtt{R}\\
\mbox{\texttt{neg(R)}} \leftarrow\pnot\mathtt{R}
\end{array}
\]
are not strongly equivalent because they don't have the same higher-order $\HOL$ models. For example,
the interpretation ${\cal I}$ such that ${\cal I}(\mathtt{p})(\F)=\F$, ${\cal I}(\mathtt{p})(\NF)=\NF$,
${\cal I}(\mathtt{p})(\T)=\T$, and ${\cal I}(\mathtt{neg})(\F)=\T$, ${\cal I}(\mathtt{neg})(\NF)=\F$,
${\cal I}(\mathtt{neg})(\T)=\F$, is a model of $\mathsf{P}_2$ but not of $\mathsf{P}_1$. Actually, one
can verify that by taking $\mathsf{P}$ to be the empty set, the programs $\mathsf{P}_1\cup \mathsf{P} = \mathsf{P}_1$
and $\mathsf{P}_2\cup \mathsf{P} = \mathsf{P}_2$ have different equilibrium models. In particular, ${\cal I}$ is an
equilibrium model of $\mathsf{P}_2$ but not of $\mathsf{P}_1$.
\end{example}

The main result of the present paper (Theorem~\ref{se-theorem}), namely that two $\HOL$ programs
are strongly equivalent iff they have the same $\HOL$ models, is a non-trivial generalization of
the corresponding result of~\cite{LifschitzPV01} for propositional programs.
In the rest of this section, we describe, at an intuitive level, the main technical difficulties that arise
in the proof of Theorem~\ref{se-theorem} and how they are handled in our context.
\ifincludeappendix The full proof of the theorem is given in Appendix~\ref{se-appendix}.\else The full proof of the theorem is given in the supplementary material.\fi

What makes the proof of Theorem~\ref{se-theorem} more challenging than the classical one is that in the heart
of the proof lies the need to use the definability results of the previous section. To understand
the need for definability, we have to first give an intuitive description of the central argument in the
proof of the strong equivalence theorem for propositional programs~\citep{LifschitzPV01}: assume that two (zero-order) programs
$\mathsf{P}_1$ and $\mathsf{P}_2$ do not have the same models. For example, assume that there exists an interpretation ${\cal I}$
that is a model of $\mathsf{P}_1$ but not of $\mathsf{P}_2$. How can we find a program $\mathsf{P}$ such that $\mathsf{P}_1 \cup \mathsf{P}$ and $\mathsf{P}_2 \cup \mathsf{P}$ do not have the same equilibrium models? A crucial step in the proof of~\cite{LifschitzPV01} is to include
as facts in $\mathsf{P}$ all atoms $\mathsf{A}$ such that ${\cal I}(\mathsf{A}) = \T$. Turning to the proof for $\HOL$ programs,
assume that for some predicate constant $\mathsf{p}$ in $\mathsf{P}$ and some elements $d_1,\ldots,d_n$ it holds that
${\cal I}(\mathsf{p})\,d_1\cdots d_n = \T$. How could we create a ``higher-order fact'' in $\HOL$ which reflects
that ${\cal I}(\mathsf{p})\,d_1\cdots d_n = \T$?

Consider first the simpler case where ${\cal I}(\mathsf{p})$ \emph{is a total relation}. In this case, we define ${\cal I}(\mathsf{p})$
as a predicate constant $\ce{{\cal I}(\mathsf{p})}$ (see Theorem~\ref{definability_of_exact}) and then add to $\mathsf{P}$ the rule:
$$\mathsf{p}\,\mathsf{R}_1 \cdots \mathsf{R}_n \leftarrow \mbox{$\ce{{\cal I}(\mathsf{p})}\,\mathsf{R}_1 \cdots \mathsf{R}_n$.}$$
For the purposes of the proof of Theorem~\ref{se-theorem}, the above rule behaves as a ``higher-order fact'': intuitively,
given $d_1,\ldots,d_n$ such that ${\cal I}(\mathsf{p})\,d_1\cdots d_n = \T$,
the atom in the body of the above rule forces any interpretation ${\cal J}$ that satisfies the rule to also satisfy
${\cal J}(\mathsf{p})\,d_1\cdots d_n = \T$.

Consider now the more involved case where ${\cal I}(\mathsf{p})$ \emph{is not a total relation}.  For this case the proof uses
Theorem~\ref{definability_of_non_exact}. More specifically, since ${\cal I}(\mathsf{p})$ is not total, there exist total elements $e_1,\ldots,e_n$ such that ${\cal I}(\mathsf{p})\,e_1\cdots e_n = \NF$. We denote by $\mathsf{E}_{U}$ the expression $(\mathsf{p}\,\ce{e_1}\cdots \ce{e_n})$. Then, we create the rule:
$$\mathsf{p}\,\mathsf{R}_1 \cdots \mathsf{R}_n \leftarrow \mbox{$\cst{{\cal I}(\mathsf{p})}\,\mathsf{E}_{U}~\mathsf{R}_1 \cdots \mathsf{R}_n$.}$$
In the context of the proof of Theorem~\ref{se-theorem}, this rule ``behaves'' as a fact. The formal statement of the theorem is given below and \ifincludeappendix its proof can be found in the appendix.\else its proof can be found in the supplementary material.\fi

\begin{theorem}\label{se-theorem}
For any programs $\mathsf{P}_1$ and $\mathsf{P}_2$ the following two statements are equivalent:
\begin{enumerate}
\item For every program $\mathsf{P}$, $\mathsf{P}_1\cup \mathsf{P}$ and $\mathsf{P}_2\cup \mathsf{P}$
      have the same equilibrium models.
\item $\mathsf{P}_1$ and $\mathsf{P}_2$ have the same models.
\end{enumerate}
\end{theorem}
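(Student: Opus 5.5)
We prove the two implications separately. The direction (2)$\Rightarrow$(1) is immediate. The models of $\mathsf{P}_i\cup\mathsf{P}$ are exactly the interpretations that are models of both $\mathsf{P}_i$ and $\mathsf{P}$, since the model condition is checked rule by rule. So if $\mathsf{P}_1$ and $\mathsf{P}_2$ have the same models, then $\mathsf{P}_1\cup\mathsf{P}$ and $\mathsf{P}_2\cup\mathsf{P}$ have the same models. Being total and $\preceq$-minimal among models depends only on this set, hence the two unions have the same equilibrium models. The remaining work is (1)$\Rightarrow$(2), which we prove by contraposition, following the structure of the proof of~\cite{LifschitzPV01}. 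We assume an interpretation $\mathcal{I}$ is a model of $\mathsf{P}_1$ but not of $\mathsf{P}_2$ (the other case is symmetric), and we construct $\mathsf{P}$ that separates the equilibrium models.

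\textbf{Preliminary lemma (persistence).} Let $\tv{\mathcal{I}}$ be the interpretation with $\tv{\mathcal{I}}(\mathsf{p})=\tv{\mathcal{I}(\mathsf{p})}$. We first check the following facts.
\begin{enumerate}
\item $\tv{d}$ is $\preceq$-monotonic, total, and satisfies $d\preceq\tv{d}$. This goes by induction on types: at type $o$, $\NF\preceq\T$; at arrow types the property holds pointwise.
\item If $\mathcal{I}$ is a model of a program $\mathsf{Q}$, then so is $\tv{\mathcal{I}}$.
\end{enumerate}
For item 2 we prove by induction on expressions that $\mwrs{\mathsf{E}}{\tv{\mathcal{I}}}{s}$ is $\succeq$-above the corresponding value under $\mathcal{I}$, and in fact agrees with its collapse where it matters. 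The tools are $\preceq$-monotonicity of the meanings of the constants and the identity $\pnot\NF=\pnot\T=\F$, which makes negative literals insensitive to the collapse. A body value of $\NF$ can therefore only rise to $\T$, and the head rises along with it. This is the higher-order analogue of the fact that the ``there'' world of an HT model is a classical model.

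\textbf{Case A: $\tv{\mathcal{I}}$ is not a model of $\mathsf{P}_2$.} By Theorem~\ref{definability_of_exact}, for every constant $\mathsf{p}$ of $\mathsf{P}_1\cup\mathsf{P}_2$ take the stratified program $\mathsf{P}_{\tv{\mathcal{I}(\mathsf{p})}}$ with fresh, renamed-apart constants. Add the rule $\mathsf{p}\,\overline{\mathsf{R}}\leftarrow\ce{\tv{\mathcal{I}(\mathsf{p})}}\,\overline{\mathsf{R}}$, and let $\mathsf{P}$ be the union of all of these. The auxiliary constants form a splitting set, so Lemma~\ref{splitting_submodel} and Theorem~\ref{is_unique} pin down their values in any equilibrium model. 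The new rules force every model $\mathcal{J}$ to satisfy $\mathcal{J}(\mathsf{p})\geq\tv{\mathcal{I}(\mathsf{p})}$. Since $\tv{\mathcal{I}(\mathsf{p})}$ is total, any $\mathcal{J}\preceq\tv{\mathcal{I}}$ must equal it. Hence $\tv{\mathcal{I}}$, extended to the auxiliary constants, is an equilibrium model of $\mathsf{P}_1\cup\mathsf{P}$; it is a model of $\mathsf{P}_1$ by the lemma. But it is not a model of $\mathsf{P}_2\cup\mathsf{P}$.

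\textbf{Case B: $\tv{\mathcal{I}}$ is a model of $\mathsf{P}_2$.} Then $\mathcal{I}$ is not total. For each constant $\mathsf{p}$ with $\mathcal{I}(\mathsf{p})$ total, add the ``higher-order fact'' as in Case A. For each non-total $\mathsf{p}$, choose total witnesses $e_1,\ldots,e_n$ with $\mathcal{I}(\mathsf{p})\,e_1\cdots e_n=\NF$, and set $\mathsf{E}^{\mathsf{p}}_U=\mathsf{p}\,\ce{e_1}\cdots\ce{e_n}$. Then, for every pair of non-total constants $\mathsf{p},\mathsf{q}$, add
$$\mathsf{p}\,\overline{\mathsf{R}}\leftarrow\cst{\mathcal{I}(\mathsf{p})}\,\mathsf{E}^{\mathsf{q}}_U\,\overline{\mathsf{R}},$$
using Theorem~\ref{definability_of_non_exact}. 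This plays the role of the classical rules $\mathsf{r}\leftarrow\mathsf{s}$ linking all $\NF$ atoms. Two things then need checking.
\begin{enumerate}
\item $\mathcal{I}$ is a model of $\mathsf{P}_1\cup\mathsf{P}$, because at $u=\NF$ the $\cst{}$ rules reproduce exactly $\mathcal{I}(\mathsf{p})$. Thus $\tv{\mathcal{I}}$ is a model of $\mathsf{P}_1\cup\mathsf{P}$ but is not $\preceq$-minimal, since $\mathcal{I}\prec\tv{\mathcal{I}}$.
\item $\tv{\mathcal{I}}$ is an equilibrium model of $\mathsf{P}_2\cup\mathsf{P}$.
\end{enumerate}

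For item 2, take any model $\mathcal{J}\preceq\tv{\mathcal{I}}$ of $\mathsf{P}_2\cup\mathsf{P}$. If some witness atom $\mathsf{E}^{\mathsf{q}}_U$ is $\T$ in $\mathcal{J}$, then every $\mathsf{p}$ is forced up to $\tv{\mathcal{I}(\mathsf{p})}$, so $\mathcal{J}=\tv{\mathcal{I}}$. Otherwise all witness atoms are $\NF$ in $\mathcal{J}$; they cannot be $\F$, because $\mathcal{J}\preceq\tv{\mathcal{I}}$ and they are $\T$ there. The rules then give $\mathcal{J}\geq\mathcal{I}$ and $\mathcal{J}\succeq\mathcal{I}$.

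The main obstacle is to conclude in this last sub-case that $\mathcal{J}$ actually equals $\mathcal{I}$, or at least that the rule of $\mathsf{P}_2$ that $\mathcal{I}$ violates is also violated by $\mathcal{J}$. That would contradict $\mathcal{J}$ being a model of $\mathsf{P}_2$. The classical proof gets this for free, because there the $\NF$ atoms are finitely many. Here, $\mathcal{J}$ sits between $\mathcal{I}$ and $\tv{\mathcal{I}}$ in $\preceq$, and it may raise some $\NF$ values at non-witness arguments. We would first try to refine the construction: add, for each non-total $\mathsf{p}$ and each argument tuple where $\mathcal{I}(\mathsf{p})$ is $\NF$, a linking rule that makes that $\NF$ value $\T$ only if a witness becomes $\T$. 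This would be done via a $\cst{}$-style program whose first argument detects the value. The alternative is to exploit $\preceq$-monotonicity directly, showing that a violation at a state $s$ under $\mathcal{I}$ persists under every such $\mathcal{J}$, since body literals can only stay or rise in $\preceq$ while the head is pinned at $\NF$.
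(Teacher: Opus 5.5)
Your (2)$\Rightarrow$(1) direction, the persistence lemma and Case A are sound; Case A is the paper's Case 1 applied to $\tv{\mathcal{I}}$. The gap is Case B, item 2. This is the step you leave open yourself, and for the program $\mathsf{P}$ you actually build it is false. Your rules push values up only through the witness atoms $\mathsf{E}^{\mathsf{q}}_U$. So nothing excludes a model $\mathcal{J}$ of $\mathsf{P}_2\cup\mathsf{P}$ with $\mathcal{I}\prec\mathcal{J}\prec\tv{\mathcal{I}}$ that keeps every witness at $\NF$ but raises some other $\NF$-point of $\mathcal{I}$ to $\T$.

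Here is a concrete counterexample.
\begin{itemize}
\item Let $\mathsf{P}_1$ consist of $\mathsf{p}(\mathsf{R})\leftarrow\mathsf{R}$, $\mathsf{neg}(\mathsf{R})\leftarrow\pnot\mathsf{R}$ and $\mathsf{t}\leftarrow\mathsf{t}$.
\item Let $\mathsf{P}_2$ be the same, with the first rule replaced by $\mathsf{p}(\mathsf{R})\leftarrow\mathsf{neg}(\mathsf{neg}(\mathsf{R}))$.
\item Let $\mathcal{I}(\mathsf{p})$ be the identity, $\mathcal{I}(\mathsf{neg})=\pnot$ and $\mathcal{I}(\mathsf{t})=\NF$.
\end{itemize}
This falls in your Case B. The total constants $\mathsf{p},\mathsf{neg}$ get facts, and $\mathsf{t}$ gets $\mathsf{t}\leftarrow\cst{\NF}\,\mathsf{t}$. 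Let $\mathcal{J}$ agree with $\mathcal{I}$ except that $\mathcal{J}(\mathsf{p})(\NF)=\T$. Then $\mathcal{J}$ is a model of $\mathsf{P}_2\cup\mathsf{P}$ lying strictly $\preceq$-below $\tv{\mathcal{I}}$. If your facts instead use $\ce{\tv{\mathcal{I}(\mathsf{p})}}$, then $\mathcal{I}$ is not a model of $\mathsf{P}$, and item 1 fails instead.

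Neither of your suggested repairs helps.
\begin{itemize}
\item A rule can only push its head up, so ``this value becomes $\T$ only if a witness does'' cannot be expressed.
\item The head of the violated $\mathsf{P}_2$-rule is not pinned at $\NF$. Since $\tv{\mathcal{I}}$ satisfies $\mathsf{P}_2$, the violation under $\mathcal{I}$ must have body $\T$ and head $\NF$. That head is exactly a point $\mathcal{J}$ is free to raise, as in the example.
\end{itemize}

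The missing idea is to make the $\NF$-points of $\mathcal{I}$ all-or-nothing. The paper fixes a single witness $\mathsf{E}_U=\mathsf{r}\,\ce{e_1}\cdots\ce{e_l}$. It uses $\mathsf{p}\,\overline{\mathsf{R}}\leftarrow\cst{\mathcal{I}(\mathsf{p})}\,\mathsf{E}_U\,\overline{\mathsf{R}}$ for every constant, total or not. For every model $\mathcal{J}\preceq\tv{\mathcal{I}}$ this gives $\mathcal{I}\preceq\mathcal{J}$, and it gives $\mathcal{J}=\tv{\mathcal{I}}$ whenever $\mathsf{E}_U$ is $\T$ in $\mathcal{J}$. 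It then adds a linking rule for every pair of points with $\mathcal{I}(\mathsf{p})\,\overline{a}=\NF$ and $\mathcal{I}(\mathsf{q})\,\overline{b}=\NF$, where the arguments are arbitrary and possibly non-total:
\[
\mathsf{q}\,\mathsf{R}_1\cdots\mathsf{R}_m\leftarrow\mathsf{p}\,(\cst{a_1}\,\mathsf{E}_U)\cdots(\cst{a_n}\,\mathsf{E}_U),\ \ce{eq_{\pi'_1}}\,\mathsf{R}_1\,(\cst{b_1}\,\mathsf{E}_U),\ldots,\ce{eq_{\pi'_m}}\,\mathsf{R}_m\,(\cst{b_m}\,\mathsf{E}_U)
\]
Here $eq_\pi\,a\,b$ is $\T$ if $\tv{a}=\tv{b}$ and $\F$ otherwise. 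This works for the following reasons.
\begin{itemize}
\item While $\mathsf{E}_U$ is $\NF$, the term $\cst{a_i}\,\mathsf{E}_U$ denotes exactly $a_i$.
\item A head must be $\mathsf{q}$ applied to distinct variables, so $\overline{b}$ cannot be written there directly; the $eq$-guards single it out instead.
\item Hence raising one $\NF$-point forces all of them, so $\mathcal{J}\in\{\mathcal{I},\tv{\mathcal{I}}\}$. The case $\mathcal{J}=\mathcal{I}$ is excluded because $\mathcal{I}$ is not a model of $\mathsf{P}_2$.
\item $\mathcal{I}$ and $\tv{\mathcal{I}}$ satisfy the linking rules by $\preceq$-monotonicity and the fact that $d_1\preceq d_2$ implies $\tv{d_1}=\tv{d_2}$.
\end{itemize}

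Separately, your inference ``then $\mathcal{I}$ is not total'' is wrong. Totality only constrains total inputs, so a total $\mathcal{I}$ can differ from $\tv{\mathcal{I}}$ at non-total arguments. Drop $\mathsf{t}$ from the example above: then $\mathcal{I}$ is total, yet you are in Case B. Split first on whether $\mathcal{I}$ is total, as the paper does. If it is, your Case A construction with $\ce{\mathcal{I}(\mathsf{p})}$, applied to $\mathcal{I}$ itself, already works.
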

%
%

\section{What is the ``Correct'' Answer-Set Semantics for Higher-Order Programs?}\label{sec:properties}
The study of answer-set semantics for higher-order logic programs was initiated by~\cite{iclp24} using the framework of Approximation Fixpoint Theory (AFT)~\citep{DMT00ApproximationsStableOperatorsWell-FoundedFixpointsApplications,DMT04Ultimateapproximationapplicationnonmonotonicknowledgerepresentation}. A key advantage of AFT is its ability to unify various semantics - such as the well-founded, Kripke-Kleene, and stable model semantics - under a single constructive fixpoint framework. However, AFT also has a limitation: it characterizes these models purely iteratively rather than through a specific logical system. In contrast, the primary strength of the equilibrium semantics is that it offers a \emph{logical characterization}, providing a framework that enables direct logical reasoning about programs.

The divergence between AFT and Equilibrium Logic in the propositional case was
identified several years ago by~\cite{Denecker2012}. While the two techniques yield
identical answer sets for normal propositional logic programs, they differ when
rule bodies contain arbitrary propositional formulas. The precise relationship
between the two is characterized by the following theorem
by~\cite{Denecker2012}.
\begin{theorem}\label{aft-vs-equilibrium}
If no rule body of a propositional logic program $\mathsf{P}$ contains nested negation,
then $M$ is a stable model of $\mathsf{P}$ in the
sense of AFT iff $M$ is an equilibrium model of $\mathsf{P}$.
\end{theorem}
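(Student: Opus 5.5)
The plan is to reduce both notions to one statement about a single monotone operator. Fix a propositional program $\mathsf{P}$ whose rule bodies are built from atoms, $\wedge$, $\vee$ and $\pnot$, with every negation applied to a negation-free formula. An HT interpretation can be identified with a pair $(H,T)$ of sets of atoms, $H\subseteq T$: an atom gets value $\T$ if it lies in $H$, $\NF$ if it lies in $T\setminus H$, and $\F$ otherwise. Total interpretations correspond to pairs $(M,M)$.

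On the AFT side I take the standard approximator: the three-valued (Kleene/Fitting) immediate-consequence operator on pairs $(x,y)$. Its lower component is
\[ \mathcal{T}_{\mathsf{P}}(x,y)=\{\,\mathsf{a}\mid \mathsf{a}\leftarrow\varphi\in\mathsf{P},\ \varphi \text{ is true in the lower evaluation w.r.t. } (x,y)\,\}. \]
In the lower evaluation, positive atoms are read in $x$, and a subformula $\pnot\psi$ is true iff $\psi$ is false when read in $y$. With this approximator, $M$ is a stable model in the AFT sense iff $M=\mathrm{lfp}\,\mathcal{T}_{\mathsf{P}}(\cdot,M)$. This operator is monotone in its first argument.

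\textbf{Key lemma.} For every body $\varphi$ without nested negation and every $H\subseteq T$, $\varphi$ has value $\T$ in $(H,T)$ iff $\varphi$ is true in the lower evaluation w.r.t. $(H,T)$. The proof is by induction on $\varphi$.
\begin{itemize}
\item Atoms: immediate from the identification above.
\item $\wedge$ and $\vee$: the connectives are $\min$ and $\max$ in both readings, so the induction hypothesis carries over.
\item $\pnot\psi$: in HT, $\pnot\psi$ has value $\T$ iff $\psi$ has value $\F$, i.e. iff $\psi$ is false in the "there" world $T$. Since $\psi$ is negation-free, its value is monotone and depends only on $T$ at the "there" level. This matches the Kleene lower clause exactly.
\end{itemize}
This is the step where the hypothesis is essential. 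With $\pnot\pnot\mathsf{a}$ the HT value is governed by $T$ alone, whereas the Kleene lower evaluation of the inner $\pnot\mathsf{a}$ would consult $x$. This is precisely the divergence noted by Denecker et al., and I expect the careful bookkeeping of this case to be the main obstacle.

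\textbf{Models.} Next I characterise HT models. A rule $\mathsf{a}\leftarrow\varphi$ is satisfied by $(H,T)$ iff two conditions hold: $T$ classically satisfies the rule, and body value $\T$ forces $\mathsf{a}\in H$. The first condition handles body value $\NF$, which requires the head value to be at least $\NF$. By the lemma, the second condition holds for all rules iff $\mathcal{T}_{\mathsf{P}}(H,T)\subseteq H$. So $(H,T)$ is a model iff $T$ is a classical model and $H$ is a prefixpoint of $\mathcal{T}_{\mathsf{P}}(\cdot,T)$.

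\textbf{Conclusion.} Equilibrium models are total models $(M,M)$ that are $\preceq$-minimal. Lowering some atoms from $\T$ to $\NF$ yields exactly the pairs $(H,M)$ with $H\subseteq M$, so $M$ is an equilibrium model iff two conditions hold: $M$ is a classical model, and no $H\subsetneq M$ is a prefixpoint of $\mathcal{T}_{\mathsf{P}}(\cdot,M)$.

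For the forward direction, if $M$ is an equilibrium model then $M=\mathcal{T}_{\mathsf{P}}(M,M)$; the inclusion $\subseteq$ is the one that needs the argument below. By Knaster--Tarski, $L=\mathrm{lfp}\,\mathcal{T}_{\mathsf{P}}(\cdot,M)$ is the least prefixpoint. Since $M$ is a prefixpoint, $L\subseteq M$, and minimality forces $L=M$.

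Conversely, suppose $M=\mathrm{lfp}\,\mathcal{T}_{\mathsf{P}}(\cdot,M)$. Then $M=\mathcal{T}_{\mathsf{P}}(M,M)$, so $M$ is a classical model. Any prefixpoint $H\subseteq M$ contains the least prefixpoint $M$, so $H=M$, which gives minimality.

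The remaining point is the inclusion $M\subseteq\mathcal{T}_{\mathsf{P}}(M,M)$ in the forward direction. I would obtain it by applying minimality to $H=\mathcal{T}_{\mathsf{P}}(M,M)\cap M$, which is again a prefixpoint by monotonicity.
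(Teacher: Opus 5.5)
The paper gives no proof of this theorem. It is quoted from Denecker et al.\ (2012) as background, so there is no in-paper argument to match against.

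Your proof is correct and self-contained. The key lemma holds, and you place the use of the hypothesis correctly. On the HT side, $\pnot\psi$ has value $\T$ iff $\psi$ is classically false in $T$, and this is true for \emph{any} $\psi$. On the AFT side, the upper Kleene value of $\psi$ equals classical truth in $y$ only when $\psi$ is negation-free. Characterizing HT models as pairs $(H,T)$, where $T$ is a classical model and $H$ is a prefixpoint of $\mathcal{T}_{\mathsf{P}}(\cdot,T)$, and then applying Knaster--Tarski, is the cleanest route.

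There are three small points to tidy.
\begin{itemize}
\item A stable fixpoint requires both $M=\mathrm{lfp}\,A(\cdot,M)_1$ and $M=\mathrm{lfp}\,A(M,\cdot)_2$. You should say that the Kleene approximator is symmetric, $A(x,y)_2=A(y,x)_1$, so the second condition is the same as the first.
\item Your ``remaining point'' $M=\mathcal{T}_{\mathsf{P}}(M,M)$ is superfluous. The facts $L\subseteq M$ and minimality already give $L=M$.
\item Your explicit choice of the standard (Kleene/Fitting) approximator, and of bodies built from $\wedge,\vee,\pnot$, is not cosmetic. With the ultimate approximator the equivalence fails even without nested negation. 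The program $\mathsf{p}\leftarrow\mathsf{p}\vee\pnot\mathsf{p}$ has the ultimate stable model $\{\mathsf{p}\}$. It has no equilibrium model, because the total interpretation $\emptyset$ is not a model, and $(\emptyset,\{\mathsf{p}\})$ is a model strictly below $\{\mathsf{p}\}$ (both body and head get value $\NF$).
\end{itemize}

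It is also worth comparing your route with how the paper proves its own AFT/equilibrium correspondence, Theorem~\ref{aft-coincide-equilibrium-on-stratified}. There is no body-level correspondence lemma there. The paper works stratum by stratum with the collapse and expansion maps $\pi,\epsilon$, the least-model characterization of each ${\cal M}_i$, and minimality of AFT models. It has to argue that way because in $\HOL$ negation can be disguised (e.g.\ $\mathtt{neg(neg(p))}$), so a syntactic ``no nested negation'' lemma like yours does not lift. Stratification replaces it as the hypothesis. Your approach gives an exact, local characterization of both semantics in the propositional case. The paper's approach buys applicability to higher-order programs at the price of a global structural restriction.
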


Intuitively, the AFT-based semantics reduces double negation to identity, whereas the equilibrium semantics does not.
\begin{example}\label{first-difference}
The program \mbox{\texttt{p$\leftarrow\pnot\pnot$p}} has only the model $\{(\texttt{p},\F)\}$ in AFT while it has two equilibrium
models namely $\{(\texttt{p},\F)\}$ and $\{(\texttt{p},\T)\}$.
\end{example}

Unfortunately, the simple syntactic characterization of the differences between AFT and the equilibrium semantics, as implied by Theorem~\ref{aft-vs-equilibrium}, does not straightforwardly generalize to $\HOL$ programs. In a higher-order context, the negation operator can effectively ``disguise'' itself, as illustrated by the following example.
\begin{example}
The following program does not, at first sight, contain nested negation (the negation operator is hidden behind a different name).
\[
\begin{array}{l}
\mbox{\texttt{p}} \leftarrow \mbox{\texttt{neg(neg(p))}}\\
\mbox{\texttt{neg(R)}} \leftarrow\pnot\mathtt{R}
\end{array}
\]
Again, this program has one stable model in AFT and two equilibrium models.
\end{example}

Because negation can be disguised in $\HOL$ in numerous ways (for instance, through a predicate that uses irrelevant code),
syntactically characterizing the class of $\HOL$ programs where nested negation is used, appears to be a non-trivial,
if not impossible, task. Crucially, double negation is not the sole source of conflict. One can define alternative predicates in $\HOL$ which, when used in a recursive context, also produce differing outcomes under AFT and equilibrium semantics.
The following is such an example.
\begin{example}\label{second-vicious-example}
Consider the program:
\[
\begin{array}{l}
\mbox{\texttt{q(Q)}} \leftarrow \mbox{\texttt{Q}}\\
\mbox{\texttt{q(Q)}} \leftarrow \mbox{\texttt{$\pnot$Q}}\\
\mbox{\texttt{p}} \leftarrow\mbox{\texttt{q(p)}}
\end{array}
\]
It can be verified that in AFT the program has a unique stable model in which \texttt{p} is $\T$; on the other hand, it does not have any equilibrium models.
\end{example}

These examples naturally raise the question: what is the ``correct'' answer-set semantics for higher-order logic programs? If one seeks generality and a constructive (fixpoint) approach for obtaining the semantics of a program, AFT is more suitable. However, we argue below that the equilibrium semantics offers a distinct advantage when the goal is to reason about programs and justify program transformations.
\begin{example}
Consider again the program of Example~\ref{second-vicious-example}. Assume we want to unfold the definition of \texttt{q} in the
body of the definition of \texttt{p}. This would give the new program:
\[
\begin{array}{l}
\mbox{\texttt{q(Q)}} \leftarrow \mbox{\texttt{Q}}\\
\mbox{\texttt{q(Q)}} \leftarrow \mbox{\texttt{$\pnot$Q}}\\
\mbox{\texttt{p}} \leftarrow\mbox{\texttt{p}}\\
\mbox{\texttt{p}} \leftarrow\mbox{\texttt{$\pnot$p}}
\end{array}
\]
Although the program seems, from a programmer's point of view, to be equivalent to that of Example~\ref{second-vicious-example}, AFT
now does not produce any stable model. On the other hand, the equilibrium semantics retains its behaviour: the new program does not
have any equilibrium models (as it was also the case for the initial one).
\end{example}

The above examples illustrate that mixing recursion and predicate application in unrestricted ways, results in discrepancies between the AFT and the equilibrium semantics. However, if recursion and applications are used in a controlled way, the two techniques converge.
\ifincludeappendix The proof of the following theorem is given in the appendix.\else The proof of the following theorem is given in the supplementary material.\fi
\begin{theorem}\label{aft-coincide-equilibrium-on-stratified}
Let $\mathsf{P}$ be a stratified program.
Let ${\cal M}$ be its unique equilibrium model and ${\cal N}$ be its unique 2-valued stable model under the AFT semantics.
Then, $\pi({\cal M}) = {\cal N}$, where $\pi$ is a collapse function from ${\cal H}$
to the set of 2-valued interpretations.
\end{theorem}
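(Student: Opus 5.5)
The proof will go by induction on the strata of $\mathsf{P}$, combining the splitting lemma (Lemma~\ref{splitting_submodel}) on the equilibrium side with the known fact from~\cite{iclp24} that, for stratified programs, the 2-valued AFT stable model is obtained stratum by stratum. At each level it is the least fixpoint of the (monotone) immediate consequence operator of that stratum, with lower strata fixed. Fix a stratification $S$ and let $U_k=\{\mathsf{p}\mid S(\mathsf{p})\leq k\}$. Each $U_k$ is a splitting set, and $b_{U_k}(\mathsf{P})$ is again stratified. So by Theorem~\ref{is_unique} and Lemma~\ref{splitting_submodel}, ${\cal M}$ restricted to $U_k$ is the unique equilibrium model ${\cal M}_k$ of $b_{U_k}(\mathsf{P})$. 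Dually, ${\cal N}$ restricted to $U_k$ is the stable model ${\cal N}_k$ of the same bottom. It therefore suffices to show $\pi({\cal M}_k)={\cal N}_k$ for every $k$, by induction on $k$.

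I will take the collapse $\pi$ to be the map sending a total element to the 2-valued relation obtained by restricting it, recursively, to total arguments and reading $\F,\T$ as false and true. The first step is a \emph{commutation lemma}. Let ${\cal I}$ be total and $s$ a total state, and let $\mathsf{E}$ be an expression in which every argument position $(\mathsf{E}_1~\mathsf{E}_2)$ has $\mathsf{E}_2$ built only from predicates whose interpretation is total. Then $\mwrs{\mathsf{E}}{{\cal I}}{s}$ is total, and its collapse equals the 2-valued meaning of $\mathsf{E}$ under $\pi({\cal I})$ and $\pi(s)$. The same holds for literals, using $\pnot$ on $\{\F,\T\}$. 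The proof is a straightforward structural induction: application of a total function to a total argument yields a total value, and collapse commutes with application on total arguments by definition.

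For the inductive step at stratum $k$, ${\cal M}_{k-1}$ is total and, by the induction hypothesis, collapses to ${\cal N}_{k-1}$. In the rules of stratum $k$, every negated literal and every argument expression mentions only constants of lower strata. Hence their values are already fixed, and they are total whenever the variables are bound to total values. I will show two things.
\begin{enumerate}
\item On total states, ${\cal M}_k$ restricted to stratum-$k$ constants satisfies exactly the closure conditions of the AFT consequence operator under ${\cal N}_{k-1}$, transported through the commutation lemma. Thus $\pi({\cal M}_k)$ is a 2-valued model of that stratum, so ${\cal N}_k\leq\pi({\cal M}_k)$.
\item Conversely, starting from ${\cal M}_k$, define ${\cal M}'$ by lowering to $\NF$ every value $\T$ of a stratum-$k$ predicate on (possibly non-total) arguments whose total part is false in ${\cal N}_k$. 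I will check that ${\cal M}'$ is $\preceq$-monotonic, that ${\cal M}'\preceq{\cal M}_k$, and that ${\cal M}'$ is still a model. The last point uses that bodies at stratum $k$ are positive in the stratum-$k$ predicates, so they are $\leq$-monotone there, and that heads of ${\cal N}_k$-true atoms remain $\T$. By $\preceq$-minimality of ${\cal M}_k$ we get ${\cal M}'={\cal M}_k$, and hence $\pi({\cal M}_k)\leq{\cal N}_k$.
\end{enumerate}

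The main obstacle will be the interaction with non-total arguments. Variables in equilibrium models range over all $\preceq$-monotonic elements, including non-total ones, while AFT states are 2-valued. The model condition for ${\cal M}_k$ must therefore be checked on non-total states as well, and the lowering construction in the second step must preserve $\preceq$-monotonicity. I plan to handle this with the collapse function $\tv{\cdot}$ from Section~\ref{sec:representation-theorem}. For any $d$ we have $d\preceq\tv{d}$ when restricted to non-$\F$ values, so by justification-monotonicity of ${\cal M}_k$ the value at a non-total argument is bounded, in the $\preceq$ sense, by its value at a total ``upper'' argument. This should reduce every check on non-total states to the total case covered by the commutation lemma. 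Making this reduction precise, especially showing that ${\cal M}'$ remains a model for rules whose body variables are bound to non-total values, is where I expect most of the work to lie.
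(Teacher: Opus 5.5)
Your proposal is correct, but its key direction takes a different route from the paper.

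\textbf{The paper's route.} It proves $\pi({\cal M}_i)\leq{\cal N}_i$ by building an auxiliary interpretation ${\cal I}$, equal to ${\cal M}_{i-1}$ below stratum $i$ and to $\epsilon({\cal N}_i)$ on stratum $i$. It checks that ${\cal I}\preceq\epsilon({\cal N}_i)$ and that ${\cal I}$ is a model. It then invokes Lemma~\ref{minimality}: the fixpoint-built ${\cal M}_i$ is the $\leq$-least model agreeing with ${\cal M}_{i-1}$. The converse inequality follows from Lemma~\ref{pi_model} together with mere minimality of ${\cal N}_i$.

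\textbf{Your route.} You derive $\pi({\cal M}_k)\leq{\cal N}_k$ from the $\preceq$-minimality of the equilibrium model itself. You do this through a lowering ${\cal M}'\preceq{\cal M}_k$, which is akin to the ${\cal N}^*$ construction in the proof of Theorem~\ref{is_unique}. This bypasses Lemma~\ref{minimality} and the explicit fixpoint construction, using only the splitting lemma and the definition of equilibrium model. For this direction it needs of ${\cal N}_k$ only that it is a 2-valued model whose lower part is ${\cal N}_{k-1}$. The paper's route is shorter, since its model check for ${\cal I}$ is a one-line $\preceq$-monotonicity argument.

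\textbf{Why the lowering works, and what to spell out.}
\begin{itemize}
\item ${\cal M}'$ is $\preceq$-monotonic, because the lowering test depends only on $\pi(\bar d)$, which is $\preceq$-invariant.
\item Unchanged heads are handled by $\leq$-monotonicity of stratum-$k$ bodies.
\item The lowered head is the case you still need to write out; the phrase about ``heads of ${\cal N}_k$-true atoms remaining $\T$'' does not cover it. Suppose the body is $\T$ under ${\cal M}'$ and $s$. Then every stratum-$k$ literal in it was not lowered, so it is true in ${\cal N}_k$. Every other literal collapses to true under ${\cal N}_{k-1}$ and $\pi(s)$. This uses the induction hypothesis plus commutation of collapse with evaluation at arbitrary, possibly non-total, states. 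That commutation is the paper's Lemma~\ref{pi_expression}, proved with $x\preceq\epsilon(\pi(x))$ exactly as you anticipate. Since ${\cal N}_k$ is a model, the head is then true in ${\cal N}_k$, contradicting the lowering.
\end{itemize}

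\textbf{Two smaller remarks.}
\begin{itemize}
\item Your first direction invokes the stronger fact that ${\cal N}_k$ is the \emph{least} model over ${\cal N}_{k-1}$. This is unnecessary: once $\pi({\cal M}_k)\leq{\cal N}_k$ and $\pi({\cal M}_k)$ is a 2-valued model, minimality of ${\cal N}_k$ alone forces equality, as in the paper.
\item Define your collapse via $\epsilon$ (equivalently via $\tv{\cdot}$) rather than by ``restriction to total arguments''. Distinct total elements of higher type can carry the same 2-valued content, so restriction is well defined only after quotienting. That well-definedness again rests on $\preceq$-monotonicity.
\end{itemize}
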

We believe there exist much broader classes of programs than the stratified ones where the AFT-based and the equilibrium semantics
coincide. One such potential fragment is the language ``\emph{Stratified+Choices Higher-Order
Datalog$^\neg$}'' defined in~\citep{iclp25}, in which programs, intuitively, consist of a stratified part residing on top
of a choice program. We have the following conjecture:
\begin{conjecture}
Let $\mathsf{P}$ be a program of Stratified+Choices Higher-Order Datalog$^\neg$. Then, ${\cal M}$ is an equilibrium model
of $\mathsf{P}$ iff $\pi({\cal M})$ is a 2-valued stable model of $\mathsf{P}$ under AFT,  where $\pi$ is a collapse function
from ${\cal H}$ to the set of 2-valued interpretations.
\end{conjecture}

Beyond the aforementioned conjecture - which is significant for reconciling two prominent semantic frameworks - this
work suggests several promising directions for future research. We briefly outline a few of these. A natural progression
would be to extend equilibrium semantics to accommodate nested implications in rules. From a practical standpoint,
it is also crucial to explore an extension of $\HOL$ that supports an individual data type $\iota$ (in addition to
the Boolean type $o$), potentially building upon ideas found in~\cite{Pearce2008,Lifschitz2007}.
Another interesting direction
is the extension of temporal ASP~\citep{temporal_cabalar} and stream reasoning~\citep{stream_reasoning} to the higher-order context.

Finally, future work
could investigate the logical foundations of the well-founded semantics for $\HOL$ programs, utilizing a higher-order
extension of the two-dimensional HT logic introduced by~\cite{Cabalar01}.

\bibliography{ht}
\ifincludeappendix
\clearpage
\appendix
\input{proofs}

\fi

\end{document}

%% file: titleinfo.tex
\lefttitle{A. Charalambidis, G. Chatziagapis, B. Kostopoulos and P. Rondogiannis}

\jnlPage{1}{8}
\jnlDoiYr{2021}
\doival{10.1017/xxxxx}

\title{Equilibrium Semantics and Strong Equivalence for Higher-Order Logic Programs}


\begin{authgrp}
\author{%
      \gn{Angelos} \sn{Charalambidis}\\
      \affiliation{Harokopio University of Athens, Greece}
      \email{acharal@hua.gr}
}
\author{%
      \gn{Giannos} \sn{Chatziagapis}\\
\affiliation{National and Kapodistrian University of Athens, Greece}
      \email{gchatziagap@di.uoa.gr}
}
\author{%
\gn{Babis} \sn{Kostopoulos}\\
\affiliation{Harokopio University of Athens, Greece}
\email{kostbabis@hua.gr}
}
\author{
      \gn{Panos} \sn{Rondogiannis}\\
      \affiliation{National and Kapodistrian University of Athens, Greece}
      \email{prondo@di.uoa.gr}
}
\end{authgrp}


\jnlPage{1}{14}
\jnlDoiYr{2024}
\doival{10.1017/xxxxx}

\history{\sub{xx xx xxxx;} \rev{xx xx xxxx;} \acc{xx xx xxxx}}

%% file: proofs.tex
\section{Proofs of Section~\ref{sec:semantics_of_HOL}}\label{appendix-of-section-3}
The following is an example $\HOL$ program producing, as distinct equilibrium models, all the
``two-valued'' $\preceq$-monotonic relations of type $o \to o$.
\begin{example}
Example~\ref{two-valued-monotone} demonstrates the production of all ``two-valued'' $\preceq$-monotonic
relations of type $o \to o$. Can we produce \emph{all the total} $\preceq$-monotonic ones? This is
clearly more difficult, as the following program attests.
\[
\begin{array}{lcl}
  \mbox{\texttt{neg(R)}} & \leftarrow & \pnot\mathtt{R}\\
  \mathtt{tt} & \leftarrow & \pnot  \overline{\texttt{tt}}\\
  \overline{\mathtt{tt}} & \leftarrow & \pnot  \texttt{tt}\\
  \mathtt{ff} & \leftarrow & \pnot  \overline{\texttt{ff}}\\
  \overline{\mathtt{ff}} & \leftarrow & \pnot  \texttt{ff}\\
  \mathtt{tst} & \leftarrow & \mathtt{tt,} \pnot  \overline{\texttt{tst}}\\
  \overline{\mathtt{tst}} & \leftarrow & \pnot  \texttt{tst}\\
  \mbox{\texttt{p(R)}} & \leftarrow & \texttt{tt}, \texttt{R}\\
  \mbox{\texttt{p(R)}} & \leftarrow & \texttt{ff}, \pnot \texttt{R}\\
  \mbox{\texttt{p(R)}} & \leftarrow & \texttt{tst}, \texttt{neg(neg(R))}
\end{array}
\]
Intuitively, the equilibrium models (see Table 1) are formed by the valid combinations of the choice atoms \texttt{ff}, \texttt{tt}, and \texttt{tst}. Note that \texttt{tst} cannot be true if \texttt{tt} is false.
\begin{table}[ht]
    \centering
    \caption{The 6 equilibrium models corresponding to the total $\preceq$-monotonic relations.}
    \label{tab:equilibrium_models}
    \begin{tabular}{lll|lll}
        \toprule
        \multicolumn{3}{c}{\textbf{Choice Atoms}} & \multicolumn{3}{c}{\textbf{Relation Values}} \\
        \cmidrule(lr){1-3} \cmidrule(lr){4-6}
        \texttt{ff} & \texttt{tt} & \texttt{tst} & $p(\F)$ & $p(\NF)$ & $p(\T)$ \\
        \midrule
        \F & \F & \F & \F & \F & \F \\
        \T & \F & \F & \T & \F & \F \\
        \F & \T & \F & \F & \NF & \T \\
        \T & \T & \F & \T & \NF & \T \\
        \F & \T & \T & \F & \T & \T \\
        \T & \T & \T & \T & \T & \T \\
        \bottomrule
    \end{tabular}
\end{table}

\end{example}
\begin{relemma}{splitting_submodel}
Let $\mathsf{P}$ be a program and $U$ be a splitting set of $\mathsf{P}$.
If $\mathcal{M}$ is an equilibrium model of $\mathsf{P}$,
then $\mathcal{M}$ restricted to $U$ is an equilibrium model of $b_U(\mathsf{P})$.
\end{relemma}
\begin{proof}
Let $\mathcal{M'}$ be the restriction of $\mathcal{M}$ to $U$.
Since $\mathcal{M}$ is a model of $\mathsf{P}$, and $b_U(\mathsf{P}) \subseteq \mathsf{P}$,
$\mathcal{M'}$ is a model of $b_U(\mathsf{P})$.
It remains to show that $\mathcal{M'}$ is $\preceq$-minimal.
Suppose, for the sake of contradiction, that there exists a model $\mathcal{N'}$ of $b_U(\mathsf{P})$ such that $\mathcal{N'} \prec \mathcal{M'}$.
We define the interpretation $\mathcal{N}$ of $\mathsf{P}$ as follows:
\[
\mathcal{N}(\mathsf{p}) = \begin{cases}
  \mathcal{N'}(\mathsf{p}), & \text{ if } \mathsf{p}\in U\\
  \mathcal{M}(\mathsf{p}), & \text{ otherwise}
\end{cases}
\]
Notice that $\mathcal{N} \prec \mathcal{M}$. Let $C = \mathsf{H} \leftarrow \mathsf{B}$ where $\mathsf{H}=\mathsf{p}~\mathsf{R}_1~\cdots~\mathsf{R}_n$ be a clause of $\mathsf{P}$ and $s$ be a state.
If $\mathsf{p}\in U$, then for all predicate constants $\mathsf{q}$ appearing in $C$, $\mathsf{q}$ would also be in $U$, so that $\mathcal{N}(\mathsf{q}) = \mathcal{N'}(\mathsf{q})$.
Since $\mathcal{N'}$ is a model of $b_U(\mathsf{P})$, $\mathcal{N}$ would satisfy $C$.
If $\mathsf{p} \not\in U$, we have $\mathcal{N}(\mathsf{p}) = \mathcal{M}(\mathsf{p})$, so that $\mwrs{\mathsf{H}}{\mathcal{N}}{s} = \mwrs{\mathsf{H}}{\mathcal{M}}{s}$.
Also, since $\mathcal{N} \prec \mathcal{M}$, we have $\mwrs{\mathsf{B}}{\mathcal{N}}{s} \preceq \mwrs{\mathsf{B}}{\mathcal{M}}{s}$.
If $\mwrs{\mathsf{B}}{\mathcal{M}}{s} = \T$, since $\mathcal{M}$ is a model of $\mathsf{P}$,  $\mwrs{\mathsf{H}}{\mathcal{M}}{s} = \T$.
So, we have $\mwrs{\mathsf{B}}{\mathcal{N}}{s} \in \{\T,\NF\}$ and $\mwrs{\mathsf{H}}{\mathcal{N}}{s} = \T$.
The case is similar for $\mwrs{\mathsf{B}}{\mathcal{M}}{s} = \NF$ and trivial for $\mwrs{\mathsf{B}}{\mathcal{M}}{s} = \F$.
In any case, $\mathcal{N}$ satisfies $C$, so that $\mathcal{N}$ is a model of $\mathsf{P}$.
But $\mathcal{N} \prec \mathcal{M}$ and $\mathcal{M}$ is an equilibrium model of $\mathsf{P}$, which is a contradiction.
\end{proof}

We now give a detailed proof of the following statement, stated in Section~\ref{sec:semantics_of_HOL}:
\begin{retheorem}{is_unique}
Let $\mathsf{P}$ be a stratified $\HOL$ program. Then, $\mathsf{P}$ has a unique equilibrium model.
\end{retheorem}

It is easy to see that for any program $\mathsf{P}$ with stratification function $S$, and every natural number
$n$, the set $\{\mathsf{p} \mid S(\mathsf{p})< n\}$ is a splitting set of $\mathsf{P}$. Using these splitting sets
we can divide $\mathsf{P}$ into a finite number of strata $\mathsf{P}_1,\ldots,\mathsf{P}_k$.

In the rest of this section we consider a stratified higher-order logic program $\mathsf{P}$ which we
will denote by $\mathsf{P}_1 \cup \cdots \cup \mathsf{P}_k$, assuming that $\mathsf{P}_1$ is the lowest stratum
and $\mathsf{P}_k$ the highest one. The proof is based on constructing, in a bottom-up way, the unique equilibrium
model of this program. We start by defining an alternative evaluation function for stratified $\HOL$ programs.
\begin{definition}\label{tuple-semantics}
Let $\mathsf{P} = \mathsf{P}_1 \cup \cdots \cup \mathsf{P}_k$ be a stratified $\HOL$ program and
let $i \in \{1,\ldots,k\}$. Let ${\cal I}, {\cal J}$ be partial interpretations of $\mathsf{P}$, where ${\cal I}$ is
restricted on the predicate constants defined in $\mathsf{P}_i$ and ${\cal J}$ on those defined in
$\mathsf{P}_1 \cup \cdots \cup \mathsf{P}_{i-1}$. We define an evaluation function $\lsem \cdot \rsem^*$ for
the expressions that appear in the bodies of rules of $\mathsf{P}_i$, as follows:
\begin{enumerate}
  \item $\mwrst{(\mathsf{L}_1 \wedge \cdots \wedge \mathsf{L}_m)}{{\cal I},{\cal J}}{s} =
    min_\leq\{\mwrst{\mathsf{L}_1}{{\cal I},{\cal J}}{s},\ldots,\mwrst{\mathsf{L}_m}{{\cal I},{\cal J}}{s}\}$



  \item $\mwrst{\mathsf{q}\, \mathsf{E}_1\cdots \mathsf{E}_r}{{\cal I},{\cal J}}{s} =
         {\cal I}(\mathsf{q})\,\mwrs{\mathsf{E}_1}{{\cal J}}{s}\cdots \mwrs{\mathsf{E}_r}{{\cal J}}{s}$

  \item $\mwrst{\,\sim \!\mathsf{q}\, \mathsf{E}_1\cdots \mathsf{E}_r}{{\cal I},{\cal J}}{s} =\,
         \sim \! {\cal J}(\mathsf{q})\,\mwrs{\mathsf{E}_1}{{\cal J}}{s}\cdots \mwrs{\mathsf{E}_r}{{\cal J}}{s}$

  \item $\mwrst{\mathsf{Q}\, \mathsf{E}_1\cdots \mathsf{E}_r}{{\cal I},{\cal J}}{s} =
         s(\mathsf{Q})\,\mwrs{\mathsf{E}_1}{{\cal J}}{s}\cdots \mwrs{\mathsf{E}_r}{{\cal J}}{s}$

  \item $\mwrst{\,\sim \!\mathsf{Q}\, \mathsf{E}_1\cdots \mathsf{E}_r}{{\cal I},{\cal J}}{s} =
         \,\sim \!s(\mathsf{Q})\,\mwrs{\mathsf{E}_1}{{\cal J}}{s}\cdots \mwrs{\mathsf{E}_r}{{\cal J}}{s}$
 \end{enumerate}
\end{definition}

\vspace{0.2cm}
\noindent
\begin{definition}
Let $\mathsf{P} = \mathsf{P}_1 \cup \cdots \cup \mathsf{P}_k$ be a stratified $\HOL$ program,
let $i \in \{1,\ldots,k\}$ and let ${\cal J}$ be a partial interpretation of $\mathsf{P}$ restricted on
the predicate constants defined in $\mathsf{P}_1 \cup \cdots \cup \mathsf{P}_{i-1}$. Then, for every predicate constant
$\mathsf{p} : \pi_1 \to \cdots \to \pi_n \to \bool$ defined in $\mathsf{P}_i$, for all $d_1 \in \mo{\pi_1},\ldots,
d_n \in \mo{\pi_n}$ and for every interpretation ${\cal I}$ restricted on the predicate constants of $\mathsf{P}_i$,
the immediate consequence operator $T_{\mathsf{P}_i,\cal J}$ of $\mathsf{P}_i$ is defined as follows: \\
  $${\cal T}_{\mathsf{P}_i,\cal J}({\cal I})(\mathsf{p})\ \overline{d} =
        max_{\leq}\{
          \mwrst{\mathsf{B}}{{\cal I},{\cal J}}{s[\overline{\mathsf{R}}/\overline{d}]} \mid
                  \mbox{$s\in {\cal S}$ and
                        $(\mathsf{p}\ \overline{\mathsf{R}} \lrule \mathsf{B})$ in $\mathsf{P}_i$}\}.$$
\end{definition}
\begin{lemma}\label{T_P_monotonicity}
Let $\mathsf{P} = \mathsf{P}_1 \cup \cdots \cup \mathsf{P}_k$ be a stratified $\HOL$ program,
let $i \in \{1,\ldots,k\}$ and let ${\cal J}$ be a partial interpretation of $\mathsf{P}$ restricted on
the predicate constants defined in $\mathsf{P}_1 \cup \cdots \cup \mathsf{P}_{i-1}$.
Let ${\cal I}_1,{\cal I}_2$ be partial interpretations of $\mathsf{P}$ restricted on the predicate constants
defined in $\mathsf{P}_i$. Assume that ${\cal I}_1 \leq {\cal I}_2$. Then,
${\cal T}_{\mathsf{P}_i,{\cal J}}({\cal I}_1) \leq {\cal T}_{\mathsf{P}_i,{\cal J}}({\cal I}_2)$.
\end{lemma}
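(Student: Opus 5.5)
The plan is to fix an arbitrary predicate constant $\mathsf{p}$ defined in $\mathsf{P}_i$, arbitrary arguments $\overline{d}$, and an arbitrary rule $\mathsf{p}\ \overline{\mathsf{R}} \lrule \mathsf{B}$ together with a state $s$. It then suffices to show
\[
\mwrst{\mathsf{B}}{{\cal I}_1,{\cal J}}{s[\overline{\mathsf{R}}/\overline{d}]} \leq \mwrst{\mathsf{B}}{{\cal I}_2,{\cal J}}{s[\overline{\mathsf{R}}/\overline{d}]}.
\]
Indeed, ${\cal T}_{\mathsf{P}_i,{\cal J}}(\cdot)(\mathsf{p})\,\overline{d}$ is a $\max_\leq$ over the same index set (rules and states) for both ${\cal I}_1$ and ${\cal I}_2$. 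A pointwise $\leq$ between corresponding members therefore gives $\leq$ between the maxima, and taking this for all $\mathsf{p}$ and $\overline{d}$ gives ${\cal T}_{\mathsf{P}_i,{\cal J}}({\cal I}_1) \leq {\cal T}_{\mathsf{P}_i,{\cal J}}({\cal I}_2)$ by the pointwise definition of $\leq$ on function types and on interpretations.

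Since $\mathsf{B}$ is a conjunction evaluated by $\min_\leq$, and $\min$ is monotone in each argument, I reduce to single literals and go through the cases of Definition~\ref{tuple-semantics}.

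\emph{Case 2 (positive literal with a constant head $\mathsf{q}$).} The arguments $\mwrs{\mathsf{E}_j}{{\cal J}}{s'}$ depend only on ${\cal J}$ and $s'$, not on ${\cal I}$. So I am comparing ${\cal I}_1(\mathsf{q})\,\overline{e}$ with ${\cal I}_2(\mathsf{q})\,\overline{e}$ for the same fixed $\overline{e}$. This follows from ${\cal I}_1(\mathsf{q}) \leq {\cal I}_2(\mathsf{q})$, unfolding the pointwise definition of $\leq_{\pi_1\to\pi_2}$ once per argument (an easy induction on $r$).

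\emph{Cases 3--5.} These do not mention ${\cal I}$ at all: the negative constant case uses ${\cal J}(\mathsf{q})$ by stratification, and the variable cases use $s$. Hence the two sides coincide and $\leq$ holds trivially.

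One subtlety must be settled before the case analysis: in case 2, $\mathsf{q}$ might be a constant of a lower stratum rather than of $\mathsf{P}_i$. In that case the intended reading is ${\cal J}(\mathsf{q})$, which again does not depend on ${\cal I}$; I would state this convention explicitly. Similarly, the stratification conditions guarantee that every argument expression $\mathsf{E}_j$ mentions only constants of lower strata, so $\mwrs{\mathsf{E}_j}{{\cal J}}{\cdot}$ is well defined. This well-definedness and independence from ${\cal I}$ is the only point requiring care. It is the main, though mild, obstacle: one must observe that ${\cal I}$ enters the evaluation only through the head constant of a positive literal, and only in applicative (i.e.\ $\leq$-monotone) position. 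No $\preceq$-monotonicity is needed here, only the truth ordering.
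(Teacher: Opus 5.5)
Your proof is correct and follows essentially the same argument as the paper. Both reduce the claim to comparing each body literal under the same state $s[\overline{\mathsf{R}}/\overline{d}]$; only a positive literal with a constant head depends on the interpretation, and there ${\cal I}_1 \leq {\cal I}_2$ is applied pointwise, while the negated-constant and variable-headed cases evaluate identically. Your remark that such a head may be a lower-stratum constant, to be read via ${\cal J}$, is a useful clarification that the paper's Definition~\ref{tuple-semantics} leaves implicit.
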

\begin{proof}
Let $\mathsf{p} : \pi_1 \to \cdots \to \pi_n \to \bool$ be a predicate constant defined in $\mathsf{P}_i$, and $d_1 \in \mo{\pi_1},\ldots,
d_n \in \mo{\pi_n}$. We show that ${\cal T}_{\mathsf{P}_i,{\cal J}}({\cal I}_1)(\mathsf{p})\ d_1 \cdots d_n \leq
{\cal T}_{\mathsf{P}_i,{\cal J}}({\cal I}_2)(\mathsf{p})\ d_1 \cdots d_n$.
It suffices to show that for every $\mathsf{L}_i$ in the body of any rule for $\mathsf{p}$ and for
every state $s$, it holds that
$\lsem \mathsf{L}_i \rsem^{*}_{s'} ({\cal I}_1,{\cal J}) \leq  \lsem \mathsf{L}_i \rsem^{*}_{s'} ({\cal I}_2,{\cal J})$,
where $s' = s[\mathsf{R}_1/d_1,\ldots,\mathsf{R}_n/d_n]$. We distinguish cases:
\begin{itemize}
%

\item $\mathsf{L}_i = \mathsf{q}\, \mathsf{E}_1 \cdots \mathsf{E}_r$. But then,
      $\lsem \mathsf{L}_i \rsem^{*}_{s'} ({\cal I}_1,{\cal J}) =
      {\cal I}_1(\mathsf{q})\, \lsem \mathsf{E}_1\rsem_{s'}({\cal J}) \cdots
      \lsem \mathsf{E}_r\rsem_{s'}({\cal J}) \leq
      {\cal I}_2(\mathsf{q})\, \lsem \mathsf{E}_1\rsem_{s'}({\cal J}) \cdots
      \lsem \mathsf{E}_r\rsem_{s'}({\cal J}) = \lsem \mathsf{L}_i \rsem^{*}_{s'} ({\cal I}_2,{\cal J})$, using the
      definition of $\lsem \cdot \rsem^*$ and the fact that ${\cal I}_1 \leq {\cal I}_2$.

\item $\mathsf{L}_i =\,\sim\!\!\mathsf{q}\, \mathsf{E}_1 \cdots \mathsf{E}_r$. But then,
      $\lsem \mathsf{L}_i \rsem^{*}_{s'} ({\cal I}_1,{\cal J}) =\,\sim\!\!{\cal J}(\mathsf{q})\, \lsem \mathsf{E}_1\rsem_{s'}({\cal J}) \cdots \lsem \mathsf{E}_r\rsem_{s'}({\cal J}) = \lsem \mathsf{L}_i \rsem^{*}_{s'} ({\cal I}_2,{\cal J})$, using the
      definition of $\lsem \cdot \rsem^*$.

\item $\mathsf{L}_i = \mathsf{Q} \,\mathsf{E}_1 \cdots \mathsf{E}_r$. But then,
      $\lsem \mathsf{L}_i \rsem^{*}_{s'} ({\cal I}_1,{\cal J}) = s(\mathsf{Q})\,\mwrs{\mathsf{E}_1}{{\cal J}}{s}\cdots \mwrs{\mathsf{E}_r}{{\cal J}}{s} = \lsem \mathsf{L}_i \rsem^{*}_{s'} ({\cal I}_2,{\cal J})$, using the definition of $\lsem \cdot \rsem^*$.

\item $\mathsf{L}_i = \,\sim \!\!\mathsf{Q} \,\mathsf{E}_1 \cdots \mathsf{E}_r$. But then,
      $\lsem \mathsf{L}_i \rsem^{*}_{s'} ({\cal I}_1,{\cal J}) = \,\sim \!\!s(\mathsf{Q})\,\mwrs{\mathsf{E}_1}{{\cal J}}{s}\cdots \mwrs{\mathsf{E}_r}{{\cal J}}{s} = \lsem \mathsf{L}_i \rsem^{*}_{s'} ({\cal I}_2,{\cal J})$, using the definition of $\lsem \cdot \rsem^*$.
\end{itemize}
This completes the proof of the lemma.
\end{proof}
We now define a sequence of partial interpretations of $\mathsf{P}$ that, as we are going to show, lead to the construction
of the unique equilibrium model of $\mathsf{P}$.
\begin{definition}
Let $\mathsf{P} = \mathsf{P}_1 \cup \cdots \cup \mathsf{P}_k$ be a stratified $\HOL$ program. We define the
following sequence of partial interpretations of $\mathsf{P}$:
$${\cal M}_1 = \mathit{lfp}({\cal T}_{\mathsf{P}_1,\emptyset}),
{\cal M}_2 = {\cal M}_1 \cup \mathit{lfp}({\cal T}_{\mathsf{P}_2,{\cal M}_1}),\ldots,
{\cal M}_k = {\cal M}_{k-1} \cup \mathit{lfp}({\cal T}_{\mathsf{P}_{k},{\cal M}_{k-1}}).$$
\end{definition}

Notice that the least fixpoints in the above definition exist due to the fact that the set of interpretations
is a complete lattice and, by Lemma~\ref{T_P_monotonicity}, the ${\cal T}_{\mathsf{P}_i,{\cal M}_{i-1}}$ operators are monotonic.

In the rest of this appendix section, we will use freely the above partial interpretations ${\cal M}_i$ of
$\mathsf{P}$ in our statements.
\begin{lemma}\label{modelhood}
Let $\mathsf{P} = \mathsf{P}_1 \cup \cdots \cup \mathsf{P}_k$ be a stratified $\HOL$
program and let $i \in \{1,\ldots,k\}$. Let ${\cal N}$ be an interpretation over the predicate constants
defined in $\mathsf{P}_i$ and assume that ${\cal M}_{i-1} \cup {\cal N}$ is a model of $\mathsf{P}_i$.
Then, ${\cal T}_{\mathsf{P}_{i},{\cal M}_{i-1}}({\cal N}) \leq {\cal N}$.
\end{lemma}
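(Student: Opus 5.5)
The plan is to unfold the definition of ${\cal T}_{\mathsf{P}_i,{\cal M}_{i-1}}$ pointwise and compare it with the modelhood condition. Fix a predicate constant $\mathsf{p}:\pi_1\to\cdots\to\pi_n\to\bool$ defined in $\mathsf{P}_i$ and arguments $d_1\in\mo{\pi_1},\ldots,d_n\in\mo{\pi_n}$. Since ${\cal T}_{\mathsf{P}_i,{\cal M}_{i-1}}({\cal N})(\mathsf{p})\,\overline{d}$ is a $\max_\leq$ over rules $\mathsf{p}\,\overline{\mathsf{R}}\lrule\mathsf{B}$ of $\mathsf{P}_i$ and states $s$, it suffices to show that for each such rule and state,
\[
\mwrst{\mathsf{B}}{{\cal N},{\cal M}_{i-1}}{s'} \leq {\cal N}(\mathsf{p})\,\overline{d}, \qquad s'=s[\overline{\mathsf{R}}/\overline{d}].
\]

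The key step is to show that the starred evaluation agrees with the ordinary evaluation in the combined interpretation, namely $\mwrst{\mathsf{L}}{{\cal N},{\cal M}_{i-1}}{s'} = \mwrs{\mathsf{L}}{{\cal M}_{i-1}\cup{\cal N}}{s'}$ for every literal $\mathsf{L}$ in the body of a rule of $\mathsf{P}_i$. This follows from stratification. Every argument expression $\mathsf{E}_j$ occurring in an application inside $\mathsf{L}$ contains only constants $\mathsf{q}$ with $S(\mathsf{q})<S(\mathsf{p})$, so these constants are defined in lower strata. A routine induction on expressions then gives $\mwrs{\mathsf{E}_j}{{\cal M}_{i-1}}{s'}=\mwrs{\mathsf{E}_j}{{\cal M}_{i-1}\cup{\cal N}}{s'}$. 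For the head symbol there are three cases. In a positive literal $\mathsf{q}\,\overline{\mathsf{E}}$, the constant $\mathsf{q}$ is either in the current stratum, in which case the starred semantics uses ${\cal N}(\mathsf{q})$ as the union does, or in a lower stratum. In the latter case I would read clause 2 of Definition~\ref{tuple-semantics} as using whichever of ${\cal I},{\cal J}$ defines $\mathsf{q}$; this is implicitly intended there. In a negative literal, stratification forces $\mathsf{q}$ to be lower, so ${\cal J}(\mathsf{q})={\cal M}_{i-1}(\mathsf{q})$. Variable heads use $s'$ in both semantics. Taking $\min_\leq$ over the body literals yields equality for $\mathsf{B}$.

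With this identity, the modelhood of ${\cal M}_{i-1}\cup{\cal N}$ for $\mathsf{P}_i$, applied to the rule and the state $s'$, gives $\mwrs{\mathsf{B}}{{\cal M}_{i-1}\cup{\cal N}}{s'}\leq \mwrs{\mathsf{p}\,\overline{\mathsf{R}}}{{\cal M}_{i-1}\cup{\cal N}}{s'} = {\cal N}(\mathsf{p})\,\overline{d}$. Taking the maximum over all rules and states yields the pointwise inequality, and hence ${\cal T}_{\mathsf{P}_i,{\cal M}_{i-1}}({\cal N})\leq{\cal N}$.

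The only real obstacle is the bookkeeping in the agreement step. One must make sure that the argument expressions never mention constants of the current stratum, which is exactly what the third stratification condition guarantees. One must also handle positive literals whose head constant lies in a lower stratum. Everything else is direct unfolding of definitions.
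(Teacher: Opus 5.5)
Your proposal is correct and follows the paper's route. Like the paper, you apply the modelhood of ${\cal M}_{i-1}\cup{\cal N}$ to each rule and each state $s[\overline{\mathsf{R}}/\overline{d}]$, identify the ordinary evaluation of the body with the starred one, and then take the $\max_\leq$. The only difference is that you justify the agreement step explicitly from the stratification conditions, including the sensible reading of clause 2 of the starred evaluation for head constants from lower strata, whereas the paper simply calls it ``easy to check''.
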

\begin{proof}
Let $\mathsf{p} \, \mathsf{R}_1\cdots \mathsf{R}_n \leftarrow \mathsf{B}$
be a rule in $\mathsf{P}_i$. Since ${\cal M}_{i-1} \cup {\cal N}$ is a model of $\mathsf{P}_i$,
it holds that ${\cal N}(\mathsf{p})\,d_1 \cdots d_n \geq \mwrs{\mathsf{B}}{{\cal M}_{i-1} \cup {\cal N}}{s[\overline{\mathsf{R}}/\overline{d}]}$. It is easy to check that
$\mwrs{\mathsf{B}}{{\cal M}_{i-1} \cup {\cal N}}{s[\overline{\mathsf{R}}/\overline{d}]} = \mwrst{\mathsf{B}}{{\cal N},{\cal M}_{i-1}}{s[\overline{\mathsf{R}}/\overline{d}]}$.
But then, ${\cal N}(\mathsf{p})\,d_1 \cdots d_n \geq \mwrst{\mathsf{B}}{{\cal N},{\cal M}_{i-1}}{s[\overline{\mathsf{R}}/\overline{d}]}$, which implies that ${\cal N}(\mathsf{p})\,d_1 \cdots d_n \geq \bigvee_{\leq_\bool}\{
\mwrst{\mathsf{B}}{{\cal N},{\cal M}_{i-1}}{s[\overline{\mathsf{R}}/\overline{d}]} \mid
\mbox{$s\in {\cal S}$ and $(\mathsf{p}\ \overline{\mathsf{R}} \lrule \mathsf{B})$ in $\mathsf{P}_i$}\}$.
Thus, ${\cal N}(\mathsf{p})\,d_1 \cdots d_n \geq {\cal T}_{\mathsf{P}_{i},{\cal M}_{i-1}}({\cal N})(\mathsf{p})\,d_1 \cdots d_n$,
which implies that ${\cal N} \geq {\cal T}_{\mathsf{P}_{i},{\cal M}_{i-1}}({\cal N})$.
\end{proof}
\begin{lemma}\label{minimality}
Let $\mathsf{P} = \mathsf{P}_1 \cup \cdots \cup \mathsf{P}_k$ be a stratified $\HOL$
program and let $i\in \{1,\ldots,k\}$. Then, ${\cal M}_i$ is the $\leq$-minimum among all
models of $\mathsf{P}_1 \cup \cdots \cup \mathsf{P}_{i}$ that agree with ${\cal M}_{i-1}$.
Moreover, ${\cal M}_i$ is a $\leq$-minimal model of $\mathsf{P}_1 \cup \cdots \cup \mathsf{P}_{i}$.
\end{lemma}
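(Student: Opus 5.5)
The plan is to argue by induction on $i$, using the standard fixpoint fact that the least fixpoint of a $\leq$-monotone operator on a complete lattice is the least pre-fixpoint. Lemma~\ref{modelhood} supplies the link between models and pre-fixpoints.

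\textbf{Step 1: ${\cal M}_i$ is a model of $\mathsf{P}_1 \cup \cdots \cup \mathsf{P}_i$.} By the induction hypothesis, ${\cal M}_{i-1}$ is a model of the lower strata (for $i=1$ this is vacuous). The rules of the lower strata mention only constants of the lower strata, and ${\cal M}_i$ agrees with ${\cal M}_{i-1}$ there, so ${\cal M}_i$ still satisfies them.

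For a rule $\mathsf{p}\,\overline{\mathsf{R}} \leftarrow \mathsf{B}$ of $\mathsf{P}_i$ and a state $s$, note that in every literal the arguments $\mathsf{E}_j$ of an application, and every negated expression, contain only constants of lower strata. Hence the observation used in the proof of Lemma~\ref{modelhood} gives
$\mwrs{\mathsf{B}}{{\cal M}_{i-1}\cup {\cal N}}{s} = \mwrst{\mathsf{B}}{{\cal N},{\cal M}_{i-1}}{s}$
for ${\cal N}$ the $\mathsf{P}_i$-part. Taking ${\cal N} = \mathit{lfp}({\cal T}_{\mathsf{P}_i,{\cal M}_{i-1}})$, the body value at $s$ is among the values over which ${\cal T}$ takes the maximum. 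It is therefore $\leq {\cal T}({\cal N})(\mathsf{p})\,s(\overline{\mathsf{R}}) = {\cal N}(\mathsf{p})\,s(\overline{\mathsf{R}})$, which is exactly satisfaction of the rule.

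One point needs care. The least fixpoint is taken in the lattice of partial interpretations, so it must be checked that it consists of $\preceq$-monotonic functions, i.e.\ that ${\cal T}$ preserves $\preceq$-monotonicity and that joins of $\preceq$-monotone functions are $\preceq$-monotone. The latter holds because the pointwise $\leq$-max of $\preceq$-monotone functions is $\preceq$-monotone, since $\max_\leq$ is $\preceq$-monotone on $V$. I will treat this as the implicit content of the paper's remark that the least fixpoints exist.

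\textbf{Step 2: minimum among models agreeing with ${\cal M}_{i-1}$.} Let ${\cal K}$ be a model of $\mathsf{P}_1\cup\cdots\cup\mathsf{P}_i$ agreeing with ${\cal M}_{i-1}$, and let ${\cal N}$ be its restriction to the constants defined in $\mathsf{P}_i$. Then ${\cal M}_{i-1}\cup{\cal N}$ is a model of $\mathsf{P}_i$, so Lemma~\ref{modelhood} gives ${\cal T}_{\mathsf{P}_i,{\cal M}_{i-1}}({\cal N}) \leq {\cal N}$. Since the least fixpoint of a monotone operator (Lemma~\ref{T_P_monotonicity}) is below every pre-fixpoint (Knaster--Tarski), we get $\mathit{lfp} \leq {\cal N}$, hence ${\cal M}_i \leq {\cal K}$.

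\textbf{Step 3: $\leq$-minimality.} Suppose ${\cal K} \leq {\cal M}_i$ is a model of $\mathsf{P}_1\cup\cdots\cup\mathsf{P}_i$. Its restriction to the lower strata is a model of them and lies $\leq {\cal M}_{i-1}$. By the induction hypothesis (minimality of ${\cal M}_{i-1}$) it equals ${\cal M}_{i-1}$, so ${\cal K}$ agrees with ${\cal M}_{i-1}$. Step 2 then gives ${\cal M}_i \leq {\cal K}$, so ${\cal K}={\cal M}_i$.

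The main obstacle is the bookkeeping in Step 1: justifying that the $\lsem\cdot\rsem^*$ evaluation coincides with the ordinary one, which rests on the stratification conditions placing argument and negated subexpressions in lower strata, and verifying that the fixpoint lives in the $\preceq$-monotone domain.
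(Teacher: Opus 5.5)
Your proposal is correct and follows the paper's proof. Lemma~\ref{modelhood} together with monotonicity of ${\cal T}_{\mathsf{P}_i,{\cal M}_{i-1}}$ places the least fixpoint below the $\mathsf{P}_i$-part of any model agreeing with ${\cal M}_{i-1}$, and induction on $i$ gives $\leq$-minimality, since minimality of ${\cal M}_{i-1}$ forces any smaller model to agree with it on the lower strata. Your remaining differences are small refinements of points the paper leaves implicit: you invoke Knaster--Tarski instead of inducting on the iterates, and your Step~1 explicitly shows that ${\cal M}_i$ is a model, including the check that ${\cal T}$ preserves $\preceq$-monotonicity.
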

\begin{proof}
Let ${\cal N}$ be another model of $\mathsf{P}_1 \cup \cdots \cup \mathsf{P}_{i}$
that agrees with ${\cal M}_{i-1}$. Since ${\cal M}_i$ and ${\cal N}$ agree on the predicate constants defined in
${\cal M}_{i-1}$, it suffices to show that $\mathit{lfp}({\cal T}_{\mathsf{P}_{i},{\cal M}_{i-1}}) \leq \overline{\cal N}$, where
$\overline{\cal N}$ is the restriction of ${\cal N}$ on the predicate constants defined in $\mathsf{P}_i$.
The proof is by induction on the approximations to the least fixed point. The basis case is trivial.
Assume that ${\cal T}^j_{\mathsf{P}_{i},{\cal M}_{i-1}}(\perp) \leq \overline{\cal N}$. Then,
using monotonicity and Lemma~\ref{modelhood}, we have that
${\cal T}_{\mathsf{P}_{i},{\cal M}_{i-1}}({\cal T}^j_{\mathsf{P}_{i},{\cal M}_{i-1}}(\perp)) \leq
{\cal T}_{\mathsf{P}_{i},{\cal M}_{i-1}}(\overline{\cal N}) \leq \overline{\cal N}$.

To show the second part of the statement of the lemma, we use induction on $i$. For $i=1$ the result follows
from the proof of the first statement of the current lemma. For the induction step,  assuming that ${\cal M}_{i-1}$ is a
minimal model of $\mathsf{P}_1 \cup \cdots \cup \mathsf{P}_{i-1}$, we show that ${\cal M}_{i}$ is a
minimal model of $\mathsf{P}_1 \cup \cdots \cup \mathsf{P}_{i}$. Assume, for the sake of contradiction,
that ${\cal M}_{i}$ is not a minimal model of $\mathsf{P}_1 \cup \cdots \cup \mathsf{P}_{i}$.
Then, there exists an interpretation ${\cal N}$ that is a model of $\mathsf{P}_1 \cup \cdots \cup \mathsf{P}_{i}$
and ${\cal N}<{\cal M}_i$. Notice that since ${\cal M}_{i-1}$ is a minimal model of
$\mathsf{P}_1 \cup \cdots \cup \mathsf{P}_{i-1}$, then ${\cal N}$ coincides with ${\cal M}_{i-1}$ on all
predicates defined in $\mathsf{P}_1 \cup \cdots \cup \mathsf{P}_{i-1}$. Let ${\cal N}_i$ be the restriction
of ${\cal N}$ on the predicate constants defined in $\mathsf{P}_i$. We show by an inner induction that for all $n\geq 0$,
${\cal T}^n_{\mathsf{P}_{i},{\cal M}_{i-1}}({\cal N}_i) \leq {\cal N}_i$. For $n=0$, the statement obviously holds.
Assuming that ${\cal T}^n_{\mathsf{P}_{i},{\cal M}_{i-1}}({\cal N}_i) \leq {\cal N}_i$ and using the monotonicity
of ${\cal T}_{\mathsf{P}_i}$, the induction hypothesis, and Lemma~\ref{modelhood}, we get that:
$${\cal T}_{\mathsf{P}_{i},{\cal M}_{i-1}}({\cal T}^n_{\mathsf{P}_{i},{\cal M}_{i-1}}({\cal N}_i)) \leq
{\cal T}_{\mathsf{P}_{i},{\cal M}_{i-1}}({\cal N}_i) \leq {\cal N}_i$$
Therefore, $\mathit{lfp}({\cal T}_{\mathsf{P}_{i},{\cal M}_{i-1}}) \leq {\cal N}_i$ and thus
${\cal M}_i = {\cal M}_{i-1} \cup  \mathit{lfp}({\cal T}_{\mathsf{P}_{i},{\cal M}_{i-1}}) \leq {\cal N}$, which is
a contradiction.
\end{proof}

Before proceeding to establish that stratified $\HOL$ programs have unique
equilibrium models, we need to characterize a broad class of positive $\HOL$ programs that have unique such models.
It is important to note that, in contrast to positive first-order logic programs, not all positive $\HOL$ programs have
unique equilibrium models. However, the definition of stratification, implicitly defines a class of positive
programs that are well-behaved in this respect. The following two definitions formalize this class of programs.
\begin{definition}
A {\em simple positive atom} is either:
\begin{itemize}
%
\item an atom of the form $\mathsf{p}\, \mathsf{E}_1 \cdots \mathsf{E}_r$ where the $\mathsf{E}_i$'s
      contain only predicate variables, or
\item an atom of the form $\mathsf{R} \,\mathsf{E}_1 \cdots \mathsf{E}_r$ where the $\mathsf{E}_i$'s
      contain only predicate variables.
\end{itemize}
\end{definition}

\begin{definition}
A program $\mathsf{P}$ of $\HOL$ will be called {\em simple positive} iff for every rule
$\mathsf{p} \, \mathsf{R}_1\cdots \mathsf{R}_n \leftarrow \mathsf{L}_1,\ldots,\mathsf{L}_m$ in $\mathsf{P}$,
each $\mathsf{L}_i$ is a simple positive atom.
\end{definition}
\begin{lemma}\label{sp_uniqueness}
Let $\mathsf{P}$ be a simple positive logic program. Then, $\mathsf{P}$ has a unique equilibrium model.
\end{lemma}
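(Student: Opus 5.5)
The plan is to take as candidate the $\leq$-least model of $\mathsf{P}$, obtained as the least fixpoint of an immediate consequence operator ${\cal T}_{\mathsf{P}}$. This operator is defined exactly like ${\cal T}_{\mathsf{P}_i,{\cal J}}$ above, with a single stratum and ${\cal J}=\emptyset$, and with the body of a simple positive rule evaluated directly by $\lsem\cdot\rsem$. I will then show that this fixpoint is an equilibrium model and that it is the only one.

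\textbf{Step 1: well-definedness.} I first check that ${\cal T}_{\mathsf{P}}$ maps interpretations to interpretations, i.e.\ it preserves $\preceq$-monotonicity of the meanings. For a simple positive atom, the value $\lsem \mathsf{L}\rsem_s({\cal I})$ is built only by applying $\preceq$-monotonic functions to state values. Hence it is $\preceq$-monotone in the values $s(\mathsf{R}_j)=d_j$. On $\{\F,\NF,\T\}$, $\min_\leq$ and $\max_\leq$ are $\preceq$-monotone, because $\preceq$ only allows the change $\NF\to\T$. Therefore ${\cal T}_{\mathsf{P}}({\cal I})(\mathsf{p})$ is $\preceq$-monotone. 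The operator is $\leq$-monotone as in Lemma~\ref{T_P_monotonicity}. I take ${\cal M}$ to be the limit of the (possibly transfinite) iteration from $\perp$, checking that pointwise $\leq$-suprema of chains of $\preceq$-monotone functions remain $\preceq$-monotone.

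\textbf{Step 2: ${\cal M}$ is a total model below every model.} Totality follows by induction on the iterates. If all arguments $d_j$ are total, every argument expression (an application of variables) evaluates to a total element, and by the induction hypothesis so does every atom; $\max$ and $\min$ of total values are total, and so are limits. Being a fixpoint, ${\cal M}$ is a model. Moreover, exactly as in Lemma~\ref{modelhood}, every model ${\cal N}$ satisfies ${\cal T}_{\mathsf{P}}({\cal N})\leq{\cal N}$, hence ${\cal M}\leq{\cal N}$. This gives $\preceq$-minimality at once. If ${\cal N}\preceq{\cal M}$ is a model, then pointwise the two values either coincide or satisfy ${\cal N}=\NF$, ${\cal M}=\T$, and the latter is excluded by ${\cal M}\leq{\cal N}$. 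Thus ${\cal M}$ is an equilibrium model.

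\textbf{Step 3: uniqueness, the main obstacle.} Let ${\cal E}$ be any equilibrium model; Step~2 gives ${\cal M}\leq{\cal E}$. I would define ${\cal N}(\mathsf{p})\,\overline{d}$ to be $\T$ if ${\cal M}(\mathsf{p})\,\overline{d}=\T$; $\NF$ if ${\cal E}(\mathsf{p})\,\overline{d}=\T$ but ${\cal M}(\mathsf{p})\,\overline{d}\neq\T$; and ${\cal E}(\mathsf{p})\,\overline{d}$ otherwise. Then ${\cal N}\preceq{\cal E}$, and ${\cal N}$ is a model, by the following argument.

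\begin{itemize}
\item An atom has value $\T$ under ${\cal N}$ iff it has value $\T$ under ${\cal M}$. Atoms are applications to state-derived arguments only, so no constant meaning is ever passed as an argument.
\item If the body is $\T$ under ${\cal N}$, then the head is $\T$ under ${\cal M}$, hence under ${\cal N}$.
\item If the body is $\NF$ under ${\cal N}$, then since ${\cal N}\leq{\cal E}$ the body is at least $\NF$ under ${\cal E}$. So the head is at least $\NF$ under ${\cal E}$, hence under ${\cal N}$.
\end{itemize}

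By $\preceq$-minimality, ${\cal N}={\cal E}$. So whenever ${\cal E}$ yields $\T$, ${\cal M}$ yields $\T$; in particular ${\cal E}$ and ${\cal M}$ agree on all total arguments.

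The delicate points are twofold. First, I must verify that ${\cal N}(\mathsf{p})$ is itself $\preceq$-monotone: if $\overline{d}\preceq\overline{d}'$ and ${\cal N}$ gives $\NF$ at $\overline{d}$, I need $\NF$ or $\T$ at $\overline{d}'$. This should follow from ${\cal E}$ being $\preceq$-monotone and from ${\cal M}$'s $\T$-values being upward closed. Second, I must exclude the remaining case ${\cal M}=\F$, ${\cal E}=\NF$ at non-total arguments. For this I would try a second comparison interpretation lying $\preceq$-below ${\cal E}$. Failing that, I would show directly by induction on the iterates that ${\cal M}$ already attains the value $\NF$ wherever some model is forced to be $\NF$. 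This is possible because in a simple positive body the value $\NF$ can only originate from state values, which are common to all interpretations.
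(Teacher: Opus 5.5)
Your route is the paper's (it defers to the induction steps of Lemma~\ref{is_equilibrium} and Theorem~\ref{is_unique}): least fixpoint, totality, then a comparison interpretation below ${\cal E}$. Your ${\cal N}$ coincides case by case with the paper's ${\cal N}^*$: it is $\NF$ exactly where ${\cal M}$ and ${\cal E}$ differ, and equals ${\cal E}$ elsewhere. Your checks that it is $\preceq$-monotone and a model are correct.

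\textbf{Main gap.} You never finish the uniqueness argument. The case ${\cal M}(\mathsf{p})\,\overline{d}=\F$, ${\cal E}(\mathsf{p})\,\overline{d}=\NF$ is left open, and your fallbacks do not address it. $\preceq$-minimality can never turn an $\NF$ into $\F$, and ``forced to be $\NF$'' has no clear meaning for an arbitrary equilibrium model. The missing idea is that ${\cal E}$ is \emph{total}, which you never use.
\begin{itemize}
\item Pick total $\overline{e}$ with $\overline{d}\preceq\overline{e}$, e.g.\ the collapses (Lemmas~\ref{tv_preceq} and~\ref{collapse_total}).
\item Since $\F\preceq x$ forces $x=\F$, monotonicity of ${\cal M}(\mathsf{p})$ gives ${\cal M}(\mathsf{p})\,\overline{e}=\F$.
\item Monotonicity of ${\cal E}(\mathsf{p})$ gives ${\cal E}(\mathsf{p})\,\overline{e}\in\{\NF,\T\}$, hence $\T$ by totality.
\end{itemize}
This contradicts your own conclusion that ${\cal E}$ and ${\cal M}$ agree on total arguments. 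So no second comparison interpretation is needed: your ${\cal N}$ already lowers ${\cal E}$ at $\overline{e}$. This is exactly the paper's step showing ${\cal N}^*\neq{\cal E}$, hence ${\cal N}^*\prec{\cal E}$. With it, pointwise ${\cal E}=\T$ implies ${\cal M}=\T$, ${\cal E}=\F$ implies ${\cal M}=\F$, and ${\cal E}=\NF$ implies ${\cal M}=\NF$. Hence ${\cal M}={\cal E}$.

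\textbf{Smaller hole in Step~2.} A rule body may contain variables not occurring in the head (the paper's existential variables). ${\cal T}_{\mathsf{P}}$ maximizes over all states, including those giving these variables non-total values. So ``every argument expression evaluates to a total element'' is false as stated, and some body values at total head arguments can be $\NF$. What you need, as in the proof of Lemma~\ref{is_equilibrium}, is this: if $s'\preceq s''$ with $s''$ total and agreeing with $s'$ on the head variables, then the body value at $s'$ is $\preceq$, hence $\leq$, its value at $s''$. The maximum is therefore attained at a total state, where your totality argument applies. The $\preceq$-monotonicity in the state that you established in Step~1 supplies this directly.
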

\begin{proof}
The proof follows very similar steps as the proofs of the induction steps of the subsequent Lemma~\ref{is_equilibrium} and
Theorem~\ref{is_unique} (but obviously the proof of the present lemma does not require an induction hypothesis, since simple
positive logic programs correspond to the lowest stratum of stratified $\HOL$ programs). We omit the details
of the proof due to the similarity to the subsequent ones.
\end{proof}

\begin{lemma}\label{is_equilibrium}
Let $\mathsf{P} = \mathsf{P}_1 \cup \cdots \cup \mathsf{P}_k$ be a stratified $\HOL$
program and let $i\in \{1,\ldots,k\}$. Then, ${\cal M}_i$ is an equilibrium
model of $\mathsf{P}_1 \cup \cdots \cup \mathsf{P}_{i}$.
\end{lemma}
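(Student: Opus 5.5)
The plan is induction on $i$, showing three things for ${\cal M}_i$: it is a model of $\mathsf{P}_1 \cup \cdots \cup \mathsf{P}_i$, it is total, and it is $\preceq$-minimal among models of that program. For $i=1$ we take ${\cal M}_0=\emptyset$, and the argument is the same as the induction step.

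\emph{Modelhood.} Let ${\cal F}=\mathit{lfp}({\cal T}_{\mathsf{P}_i,{\cal M}_{i-1}})$, so ${\cal T}_{\mathsf{P}_i,{\cal M}_{i-1}}({\cal F}) = {\cal F}$. As in the proof of Lemma~\ref{modelhood}, for rules of $\mathsf{P}_i$ we have $\mwrs{\mathsf{B}}{{\cal M}_{i-1}\cup{\cal F}}{s} = \mwrst{\mathsf{B}}{{\cal F},{\cal M}_{i-1}}{s}$. Reading that lemma backwards, the fixpoint equation gives that every rule of $\mathsf{P}_i$ is satisfied by ${\cal M}_i$. Rules of lower strata are satisfied by the induction hypothesis, since ${\cal M}_i$ extends ${\cal M}_{i-1}$ and those rules mention only lower constants.

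\emph{Minimality.} Suppose ${\cal N}$ is a model of $\mathsf{P}_1 \cup \cdots \cup \mathsf{P}_i$ with ${\cal N}\preceq{\cal M}_i$.
\begin{enumerate}
\item The set of constants of strata $<i$ is a splitting set. The restriction of ${\cal N}$ to it is a model of $\mathsf{P}_1\cup\cdots\cup\mathsf{P}_{i-1}$ that is $\preceq {\cal M}_{i-1}$.
\item By the induction hypothesis (equilibrium, hence $\preceq$-minimal), that restriction equals ${\cal M}_{i-1}$.
\item Thus ${\cal N}$ agrees with ${\cal M}_{i-1}$, and the first part of Lemma~\ref{minimality} gives ${\cal M}_i\le{\cal N}$.
\item Since $\preceq$ refines $\le$ pointwise ($\NF\prec\T$ implies $\NF<\T$), ${\cal N}\preceq{\cal M}_i$ gives ${\cal N}\le{\cal M}_i$, so ${\cal N}={\cal M}_i$.
\end{enumerate}

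\emph{Totality, the main obstacle.} We must show that ${\cal F}(\mathsf{p})\,\overline{d}\in\{\F,\T\}$ for total $\overline{d}$. A state may assign non-total values to existential body variables, so individual bodies can evaluate to $\NF$; the task is to show the maximum never is $\NF$. I would proceed in three steps.
\begin{enumerate}
\item Show that $\mwrst{\mathsf{B}}{{\cal I},{\cal J}}{s}$ is $\preceq$-monotonic in $s$ and in ${\cal I}$. This follows because all semantic values are $\preceq$-monotonic, $\min_\le$ preserves $\preceq$, and $\pnot$ is $\preceq$-monotonic ($\pnot\NF=\pnot\T=\F$). The same fact shows that each iterate of ${\cal T}_{\mathsf{P}_i,{\cal M}_{i-1}}$, and hence ${\cal F}$, lies in the $\preceq$-monotonic domain, so the operator is well defined.
\item For any state $s$, replace every $s(\mathsf{R})$ by its collapse $\tv{s(\mathsf{R})}$. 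This element is total, $\preceq$-monotonic, and satisfies $s(\mathsf{R})\preceq\tv{s(\mathsf{R})}$. Call the resulting state $\hat s$; keep the head arguments $\overline{d}$ unchanged, as they are already total.
\item Prove by an inner induction on the approximants ${\cal T}^n(\perp)$ that they are total. Then with ${\cal M}_{i-1}$ total (outer hypothesis) and the arguments in $\hat s$ total, the body value at $\hat s$ lies in $\{\F,\T\}$. It must be $\T$ whenever the value at $s$ was $\NF$, because $\NF$ is only $\preceq$ below $\T$. Consequently any $\NF$ contribution to the $\max_\le$ is dominated by a $\T$, the maximum lies in $\{\F,\T\}$, and the property is preserved to the least fixpoint.
\end{enumerate}

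The delicate point is checking that stratification guarantees the arguments of stratum-$i$ constants are built only from lower constants and variables. This is needed so that the $\lsem\cdot\rsem^*$ evaluation uses only total ${\cal J}$-values on total inputs.
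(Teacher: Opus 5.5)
Your proof is correct and follows essentially the same route as the paper: induction on $i$, $\preceq$-minimality obtained from the $\leq$-minimality of ${\cal M}_i$ among models agreeing with ${\cal M}_{i-1}$, and totality via an inner induction on the approximants ${\cal T}^n_{\mathsf{P}_i,{\cal M}_{i-1}}(\perp)$, where existential variables are lifted to total $\preceq$-upper bounds so that every $\NF$ body value is dominated by a $\T$. Your version is somewhat more careful than the paper's: it derives modelhood of ${\cal M}_i$ explicitly from the fixpoint equation, checks that the operator stays inside the $\preceq$-monotonic domain, and handles $i=1$ uniformly with ${\cal M}_0=\emptyset$ instead of appealing to the separate lemma on simple positive programs.
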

\begin{proof}
The proof is by induction on $i$. The basis case is for $i=1$ and follows from Lemma~\ref{sp_uniqueness}.
For $i>1$, assuming that ${\cal M}_{i-1}$ is an equilibrium
model of $\mathsf{P}_1 \cup \cdots \cup \mathsf{P}_{i-1}$, we show that ${\cal M}_{i}$ is an
equilibrium model of $\mathsf{P}_1 \cup \cdots \cup \mathsf{P}_{i}$.
Since ${\cal M}_{i}$ is $\leq$-minimal, it is also $\preceq$-minimal. It suffices to show that it is total.
Let $d_1,\ldots,d_n$ be total elements. We use an inner induction on the approximations of the least fixpoint:
we show that for every $j$, ${\cal T}^j_{\mathsf{P}_{i},{\cal M}_{i-1}}(\perp)(\mathsf{p})\, d_1\cdots d_n$ is total.
For $j=0$ the proof is immediate.
It suffices to show that ${\cal T}^{j+1}_{\mathsf{P}_{i},{\cal M}_{i-1}}(\perp)(\mathsf{p})\, d_1\cdots d_n$ is total.
Let $\mathsf{p}\,\mathsf{R}_1 \cdots \mathsf{R}_n \leftarrow \mathsf{B}$ be a rule in $\mathsf{P}$.
Let $s' = s[\mathsf{R}_1/d_1,\ldots,\mathsf{R}_n/d_n]$ and $s''=s'[\mathsf{Q}_1/e_1,\ldots,\mathsf{Q}_l/e_l]$,
where the $\mathsf{Q}_i$ are the existential variables in $\mathsf{B}$ and $e_1,\ldots,e_l$ are total elements
such that $s'(\mathsf{Q}_i) \preceq e_i$. Therefore, $s' \preceq s''$.
Notice that $\lsem \mathsf{B}\rsem^{*}_{s'} ({\cal T}^j_{\mathsf{P}_{i},{\cal M}_{i-1}}(\perp),{\cal M}_{i-1}) \preceq \lsem \mathsf{B}\rsem^{*}_{s''} ({\cal T}^j_{\mathsf{P}_{i},{\cal M}_{i-1}}(\perp),{\cal M}_{i-1})$
and therefore $\lsem \mathsf{B}\rsem^{*}_{s'} ({\cal T}^j_{\mathsf{P}_{i},{\cal M}_{i-1}}(\perp),{\cal M}_{i-1}) \leq \lsem \mathsf{B}\rsem^{*}_{s''} ({\cal T}^j_{\mathsf{P}_{i},{\cal M}_{i-1}}(\perp),{\cal M}_{i-1})$.
Therefore, for every rule, there exists a total state under which it receives its $\leq$-maximum value.
Moreover, this value is always total, as we now show by a case analysis on the literals that exist in $\mathsf{B}$:
\begin{itemize}
%

\item $\mathsf{L} = \mathsf{q}\, \mathsf{E}_1 \cdots \mathsf{E}_r$. Then:
      $$\lsem \mathsf{L}\rsem^{*}_{s''} ({\cal T}^j_{\mathsf{P}_{i},{\cal M}_{i-1}}(\perp),{\cal M}_{i-1}) = {\cal T}^j_{\mathsf{P}_{i},{\cal M}_{i-1}}(\perp)(\mathsf{q}) \, \mwrs{\mathsf{E}_1}{{\cal M}_{i-1}}{s''}\cdots \mwrs{\mathsf{E}_r}{{\cal M}_{i-1}}{s''}$$
      which is total by the inner induction hypothesis and the fact that all the $\mwrs{\mathsf{E}_1}{{\cal M}_{i-1}}{s''},
      \ldots, \mwrs{\mathsf{E}_r}{{\cal M}_{i-1}}{s''}$ are total due to the fact that $s''$ is total and ${\cal M}_{i-1}$
      assigns total values (by the outer induction hypothesis).

\item $\mathsf{L} =\,\sim\! \mathsf{q}\, \mathsf{E}_1 \cdots \mathsf{E}_r$. Then:
      $$\lsem \mathsf{L}\rsem^{*}_{s''} ({\cal T}^j_{\mathsf{P}_{i},{\cal M}_{i-1}}(\perp),{\cal M}_{i-1}) =
      {\cal M}_{i-1}(\mathsf{q}) \, \mwrs{\mathsf{E}_1}{{\cal M}_{i-1}}{s''}\cdots \mwrs{\mathsf{E}_r}{{\cal M}_{i-1}}{s''}$$
      which is total by the outer induction hypothesis and the fact that all the $\mwrs{\mathsf{E}_1}{{\cal M}_{i-1}}{s''},
      \ldots, \mwrs{\mathsf{E}_r}{{\cal M}_{i-1}}{s''}$ are total due to the fact that $s''$ is total and ${\cal M}_{i-1}$
      assigns total values (by the outer induction hypothesis).

\item The cases $\mathsf{L} = \mathsf{Q} \,\mathsf{E}_1 \cdots \mathsf{E}_r$ and $\mathsf{L} = \,\sim \!\mathsf{Q} \,\mathsf{E}_1 \cdots \mathsf{E}_r$ are very similar to the previous ones, using the fact that $s''$ is total.
\end{itemize}
This completes the proof of the lemma.
\end{proof}
\begin{retheorem}{is_unique}
%
%
Let $\mathsf{P}$ be a stratified $\HOL$ program. Then, $\mathsf{P}$ has a unique equilibrium model.
\end{retheorem}
\begin{proof}
The proof is by induction on $i$. The basis case for $i=1$ follows from Lemma~\ref{sp_uniqueness}.
We prove the result for ${\cal M}_i$, $i>1$, assuming the result
holds for ${\cal M}_{i-1}$. For the sake of contradiction, assume there exists another equilibrium model ${\cal N}$
of $\mathsf{P}_1 \cup \cdots \cup \mathsf{P}_{i}$. Since ${\cal M}_{i-1}$ is the unique equilibrium model
of $\mathsf{P}_1 \cup \cdots \cup \mathsf{P}_{i-1}$, by Lemma~\ref{splitting_submodel}, ${\cal N}$ agrees with ${\cal M}_{i-1}$, and therefore
also with ${\cal M}_i$, for all predicate constant symbols defined in $\mathsf{P}_1 \cup \cdots \cup \mathsf{P}_{i-1}$.
Moreover, since by Lemma~\ref{minimality} ${\cal M}_{i}$ is the least model of $\mathsf{P}_1 \cup \cdots \cup \mathsf{P}_{i}$
that agrees with ${\cal M}_{i-1}$, it follows that ${\cal M}_{i} < {\cal N}$. We construct an interpretation ${\cal N}^*$
of $\mathsf{P}_1 \cup \cdots \cup \mathsf{P}_{i}$, as follows:
$${\cal N}^{*}(\mathsf{p})\, d_1 \cdots d_n =
   \begin{cases}
       T^*,  & \text{if ${\cal M}_{i}(\mathsf{p})\, d_1 \cdots d_n \neq {\cal N}(\mathsf{p})\, d_1 \cdots d_n$}\\
       {\cal N}(\mathsf{p})\, d_1 \cdots d_n, & \text{otherwise}
   \end{cases}$$
By a simple case analysis it follows that ${\cal N}^*$ is $\preceq$-monotonic.
Moreover, it is also easy to verify that ${\cal M}_i \leq {\cal N}^*$ and ${\cal N}^* \preceq {\cal N}$. We show that ${\cal N}^* \neq {\cal N}$.
Since we assumed that ${\cal N}$ is an equilibrium model of
$\mathsf{P}_1 \cup \cdots \cup \mathsf{P}_{i}$, it must be that ${\cal M}_i \nprec{\cal N}$. Since ${\cal M}_i < {\cal N}$ there must
be a predicate constant symbol $\mathsf{p}$ and inputs $d_1, \ldots, d_n$ of appropriate types such that
${\cal M}_i(\mathsf{p})\, d_1 \cdots d_n = F$ and $ {\cal N}(\mathsf{p})\, d_1 \cdots d_n \in \{T^*, T\}$.
Because ${\cal N}$ is total there exist inputs $e_1, \ldots, e_n$ with $d_i \preceq e_i$
such that ${\cal M}_i(\mathsf{p})\, e_1 \cdots e_n = F$ and ${\cal N}(\mathsf{p})\, e_1 \cdots e_n = T$.
But then by the construction of ${\cal N}^*$  we have ${\cal N}^*(\mathsf{p})\, e_1 \cdots e_n = T^* \neq {\cal N}(\mathsf{p})\, e_1 \cdots e_n$
and so we prove that ${\cal N}^* \prec {\cal N}$.

We show that ${\cal N}^*$ is a model of $\mathsf{P}_1 \cup \cdots \cup \mathsf{P}_{i}$,
which will contradict our assumption that ${\cal N}$ is an equilibrium model of $\mathsf{P}_1 \cup \cdots \cup \mathsf{P}_{i}$.

Let $\mathsf{p} \, \mathsf{R}_1\cdots \mathsf{R}_n \leftarrow \mathsf{L}_1,\ldots,\mathsf{L}_m$ be
an arbitrary rule of $\mathsf{P}_1 \cup \cdots \cup \mathsf{P}_{i}$, let $s$ be an arbitrary state, let $d_1,\ldots,d_n$
be elements of the appropriate types, and $s' = s[\mathsf{R}_1/d_1,\ldots,\mathsf{R}_n/d_n]$.
We distinguish cases based on the value of ${\cal N}^*(\mathsf{p})\,d_1\ldots d_n$:

\vspace{0.2cm}
\noindent
{\em Case 1:} ${\cal N}^*(\mathsf{p})\,d_1\ldots d_n = T$. In this case ${\cal N}^*$ satisfies trivially the
given rule.

\vspace{0.2cm}
\noindent
{\em Case 2:} ${\cal N}^*(\mathsf{p})\,d_1\ldots d_n = F$. But then, ${\cal N}(\mathsf{p})\,d_1\ldots d_n = F$
and therefore there exists a literal $\mathsf{L}_i$ in the body of the rule such that
$\lsem \mathsf{L}_i \rsem_{s'}({\cal N}) = F$. We distinguish cases:
\begin{itemize}
%

\item $\mathsf{L}_i = \mathsf{q}\, \mathsf{E}_1 \cdots \mathsf{E}_r$. Notice now that
      $\lsem \mathsf{E}_j\rsem_{s'}({\cal N}) = \lsem \mathsf{E}_j\rsem_{s'}({\cal M}_{i})=
      \lsem \mathsf{E}_j\rsem_{s'}({\cal N}^*)$
      because the $\mathsf{E}_j's$ contain only predicate variables (whose denotations depend on $s'$)
      and predicate constants of lower strata (whose denotations depend only on ${\cal M}_{i-1}$,
      on which ${\cal M}_i$, ${\cal N}$ and therefore ${\cal N}^*$ agree).
      Moreover, since
      ${\cal N}(\mathsf{q})\, \lsem \mathsf{E}_1\rsem_{s'}({\cal N}) \cdots \lsem \mathsf{E}_r\rsem_{s'}({\cal N}) = F$
      and ${\cal M}_{i} < {\cal N}$, we get that
      ${\cal M}_{i}(\mathsf{q})\, \lsem \mathsf{E}_1\rsem_{s'}({\cal M}_{i}) \cdots
      \lsem \mathsf{E}_r\rsem_{s'}({\cal M}_{i}) = F$. Therefore, by the definition of
      ${\cal N}^*$, we get that ${\cal N}^*(\mathsf{q})\, \lsem \mathsf{E}_1\rsem_{s'}({\cal N}^*) \cdots \lsem \mathsf{E}_r\rsem_{s'}({\cal N}^*) = F$.

\item $\mathsf{L}_i =\,\sim\! \mathsf{q}\, \mathsf{E}_1 \cdots \mathsf{E}_r$, $\mathsf{L}_i = \mathsf{Q} \,\mathsf{E}_1
      \cdots \mathsf{E}_r$, and $\mathsf{L}_i =\,\sim\! \mathsf{Q} \,\mathsf{E}_1 \cdots \mathsf{E}_r$. The proofs for
      these cases are similar to that of the previous case.

\end{itemize}
Therefore, ${\cal N}^*$ satisfies the rule in this case.

\vspace{0.2cm}
\noindent
{\em Case 3:} ${\cal N}^*(\mathsf{p})\,d_1\ldots d_n = T^*$. But then ${\cal M}_i(\mathsf{p})\,d_1\ldots d_n \leq T^*$
We claim that there exists a literal $\mathsf{L}_i$ in the body of the rule such that
$\lsem \mathsf{L}_i \rsem_{s'}({\cal N}^*) \leq T^*$. Assume, for the sake of contradiction, that
for every $\mathsf{L}_i$, $\lsem \mathsf{L}_i \rsem_{s'}({\cal N}^*) = T$. We show that then
it will also be the case that for every $\mathsf{L}_i$, $\lsem \mathsf{L}_i \rsem_{s'}({\cal M}_i) = T$. This is a contradiction
on the fact that ${\cal M}_i$ is a model. We distinguish cases:

\begin{itemize}
%

\item $\mathsf{L}_i = \mathsf{q}\, \mathsf{E}_1 \cdots \mathsf{E}_r$. Notice now that
      $\lsem \mathsf{E}_j\rsem_{s'}({\cal N}) = \lsem \mathsf{E}_j\rsem_{s'}({\cal M}_{i})=
      \lsem \mathsf{E}_j\rsem_{s'}({\cal N}^*)$
      because the $\mathsf{E}_j's$ contain only predicate variables (whose denotations depend on $s'$)
      and predicate constants of lower strata (whose denotations depend only on ${\cal M}_{i-1}$,
      on which ${\cal M}_i$, ${\cal N}$ and therefore ${\cal N}^*$ agree). Moreover, since
      ${\cal N}^*(\mathsf{q})\, \lsem \mathsf{E}_1\rsem_{s'}({\cal N}^*) \cdots \lsem \mathsf{E}_r\rsem_{s'}({\cal N}^*) = T$,
      it has to be the case that ${\cal M}_i(\mathsf{q})\, \lsem \mathsf{E}_1\rsem_{s'}({\cal N}^*) \cdots
      \lsem \mathsf{E}_r\rsem_{s'}({\cal N}^*) = T$. Indeed if otherwise, since it also is
      ${\cal N}(\mathsf{q})\, \lsem \mathsf{E}_1\rsem_{s'}({\cal N}^*) \cdots
      \lsem \mathsf{E}_r\rsem_{s'}({\cal N}^*) = T$  we have that ${\cal N}(\mathsf{q})\, \lsem \mathsf{E}_1\rsem_{s'}({\cal N}^*) \cdots
      \lsem \mathsf{E}_r\rsem_{s'}({\cal N}^*) \neq {\cal M}_i(\mathsf{q})\, \lsem \mathsf{E}_1\rsem_{s'}({\cal N}^*) \cdots
      \lsem \mathsf{E}_r\rsem_{s'}({\cal N}^*) $ and by construction it should have been ${\cal N}^*(\mathsf{q})\, \lsem \mathsf{E}_1\rsem_{s'}({\cal N}^*) \cdots
      \lsem \mathsf{E}_r\rsem_{s'}({\cal N}^*)= T^*$. We conclude that
      ${\cal M}_i(\mathsf{q})\, \lsem \mathsf{E}_1\rsem_{s'}({\cal N}^*) \cdots
      \lsem \mathsf{E}_r\rsem_{s'}({\cal N}^*) = {\cal M}_i(\mathsf{q})\, \lsem \mathsf{E}_1\rsem_{s'}({\cal M}_i) \cdots
      \lsem \mathsf{E}_r\rsem_{s'}({\cal M}_i) = T$.

\item $\mathsf{L}_i =\,\sim\! \mathsf{q}\, \mathsf{E}_1 \cdots \mathsf{E}_r$, $\mathsf{L}_i = \mathsf{Q} \,\mathsf{E}_1
      \cdots \mathsf{E}_r$, and $\mathsf{L}_i =\,\sim\! \mathsf{Q} \,\mathsf{E}_1 \cdots \mathsf{E}_r$. The proofs for
      these cases are similar to that of the previous case.
\end{itemize}
The above implies that ${\cal M}_i$ does not satisfy the given rule, which is a contradiction.
Therefore, there exists $\mathsf{L}_i$ such that $\lsem \mathsf{L}_i \rsem_{s'}({\cal N}^*) \leq T^*$
and consequently ${\cal N}^*$ satisfies the given rule.
\end{proof}

\section{Proofs of Section~\ref{sec:representation-theorem}}
\newcommand{\order}{width\xspace}

In this section we give detailed proofs of the following two theorems, stated in Section 4:

\begin{retheorem}{definability_of_exact}
For each type $\pi$ and every total $d\in\mo{\pi}$, there exists a stratified program $\mathsf{P}_d$ with a constant $\ce{d} : \pi$ such that
$\mathsf{P}_d$ has a unique equilibrium model $\mathcal{M}$ with $\mathcal{M}(\ce{d}) = d$.
\end{retheorem}

\begin{retheorem}{definability_of_non_exact}
For each type $\pi$ and $d\in\mo{\pi}$, there exists a stratified program $\mathsf{P}_d$ with a constant $\cst{d} : o \rightarrow \pi$ such that
$\mathsf{P}_d$ has a unique equilibrium model $\mathcal{M}$ with
\[
\mathcal{M}(\cst{d})(u)=\begin{cases}
  d, & \text{ if } u = \NF\\
  \tv{d}, & \text{ if } u = \T\\
  \lambda \overline{x}.\F, & \text{ if } u = \F
\end{cases}\]
\end{retheorem}

Firstly, the following definition introduces a way to order types.

\begin{definition}
  We define for any type $\pi$ the {\emph \order} of $\pi$ as
  $\order(o) = 1$ and for any type $\pi \neq o$ such that $\pi = \pi_1\rightarrow\cdots\rightarrow\pi_n\rightarrow o$,
  $\order(\pi)=\order(\pi_1)+\cdots+\order(\pi_n)+1$.
\end{definition}

Let $d \in \mo{\tau}$ be a element of our domain for some type $\tau$.
In the rest of this section we are going to define a program $\mathsf{P}_{\tau}$ containing
predicate constants $\ce{d}$ and $\cst{d}$ for all domain elements $d' \in \tau'$ for all types $\tau'$ of smaller \order than $\tau$.
We will show that $\mathsf{P}_{\tau}$ has a unique equilibrium model which assigns the desirable meaning to the constants $\ce{d}$ and $\cst{d}$.

We start with the following technical lemmata about the $collapse$ function:

\begin{lemma}\label{tv_preceq}
Let $\pi$ be a type and let $d \in \mo{\pi}$.
Then, $d \preceq \tv{d}$.
\end{lemma}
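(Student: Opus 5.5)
The plan is induction on the structure of the type $\pi$, following the recursive definition of $\tv{\cdot}$. We have to show that $d \preceq_\pi \tv{d}$ for every $d \in \mo{\pi}$.

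\emph{Base case $\pi = o$.} There are three values to check.
If $d = \F$, then $\tv{d} = \F$, and $\F \preceq \F$ by reflexivity.
If $d = \NF$, then $\tv{d} = \T$, and $\NF \preceq \T$ holds by definition of the justification ordering.
If $d = \T$, then $\tv{d} = \T$, and $\T \preceq \T$ by reflexivity.

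\emph{Inductive case $\pi = \pi_1 \to \pi_2$.} First we should confirm that $\tv{d}$ really lies in $\mo{\pi_1\to\pi_2}$, i.e.\ that it is $\preceq$-monotonic. Otherwise the comparison $d \preceq_{\pi} \tv{d}$ is not even well typed in the paper's setting.

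For monotonicity, I would first prove a small auxiliary fact by the same induction: if $a \preceq_{\pi_2} b$, then $\tv{a} = \tv{b}$.
At type $o$ this holds because $\preceq$ only relates equal values or the pair $\NF \preceq \T$, and both $\NF$ and $\T$ collapse to $\T$.
At a function type, $a \preceq b$ means $a(x) \preceq b(x)$ pointwise, so by the inductive hypothesis $\tv{a(x)} = \tv{b(x)}$ for every $x$. Hence $\tv{a} = \tv{b}$.

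Monotonicity of $\tv{d}$ then follows. Given $d_1 \preceq d_1'$, monotonicity of $d$ gives $d(d_1) \preceq d(d_1')$. The auxiliary fact then yields $\tv{d}(d_1) = \tv{d(d_1)} = \tv{d(d_1')} = \tv{d}(d_1')$, so in fact $\tv{d}$ is constant along $\preceq$-chains.

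With well-definedness settled, the claim itself is immediate. Fix any $d_1 \in \mo{\pi_1}$. Then $d(d_1) \in \mo{\pi_2}$, and the inductive hypothesis at $\pi_2$ gives $d(d_1) \preceq_{\pi_2} \tv{d(d_1)} = \tv{d}(d_1)$. Since $\preceq_{\pi_1\to\pi_2}$ is defined pointwise, this gives $d \preceq \tv{d}$.

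The only step I expect to need care is the monotonicity (membership) check for $\tv{d}$, which is handled by the auxiliary fact above. Everything else is routine unfolding of definitions.
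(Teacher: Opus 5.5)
Your proof is correct and follows the paper's argument: induction on the type, with the base case read off from the definition and the arrow case obtained pointwise from the induction hypothesis at the result type. Your extra check that the collapse of $d$ is itself $\preceq$-monotonic, so that the comparison is well typed, is left implicit in the paper; your auxiliary fact is exactly the paper's Lemma~\ref{preceq_tv}, which it states and proves separately right afterwards.
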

\begin{proof}
By induction on the structure of $\pi$.
If $\pi = o$, it follows from the definition.
Suppose that $\pi = \rho\rightarrow \pi'$ and the lemma holds for $\pi'$.
Let $d' \in \mo{\rho}$. By the induction hypothesis, we have $d(d') \preceq \tv{d(d')} = \tv{d}(d')$.
Since this holds for all $d' \in \mo{\rho}$, we have $d \preceq \tv{d}$.
\end{proof}

\begin{lemma}\label{collapse_total}
Let $\pi$ be a type and let $d \in \mo{\pi}$. Then, $\tv{d}$ is total.
\end{lemma}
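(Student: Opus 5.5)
The plan is a structural induction on $\pi$, mirroring the proof of Lemma~\ref{tv_preceq}. We also need to check that $\tv{d}$ is an element of $\mo{\pi}$, i.e.\ that it is $\preceq$-monotonic. It is convenient to prove two claims simultaneously by induction on $\pi$: (a) $\tv{d}\in\mo{\pi}$, and (b) $\tv{d}$ is total. In fact a stronger fact will help, namely (c): for all $d_1\preceq d_2$ in $\mo{\pi}$ we have $\tv{d_1}=\tv{d_2}$, so collapse identifies $\preceq$-comparable elements.

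\emph{Base case $\pi=o$.} By definition $\tv{d}\in\{\F,\T\}$, so it is total and trivially in $\mo{o}$. For (c), if $d_1\preceq d_2$ then either $d_1=d_2$, or $d_1=\NF$ and $d_2=\T$; in the latter case both collapse to $\T$.

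\emph{Inductive case $\pi=\rho\to\pi'$.}
\begin{itemize}
\item For (c): if $d_1\preceq d_2$ then $d_1(e)\preceq d_2(e)$ for every $e\in\mo{\rho}$. The induction hypothesis (c) for $\pi'$ then gives $\tv{d_1}(e)=\tv{d_1(e)}=\tv{d_2(e)}=\tv{d_2}(e)$.
\item For (a): let $e_1\preceq e_2$ in $\mo{\rho}$. Since $d$ is $\preceq$-monotonic, $d(e_1)\preceq d(e_2)$, so by (c) for $\pi'$ we get $\tv{d}(e_1)=\tv{d}(e_2)$, and in particular $\tv{d}(e_1)\preceq\tv{d}(e_2)$. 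Each value $\tv{d(e)}$ lies in $\mo{\pi'}$ by (a) for $\pi'$, hence $\tv{d}\in\mo{\pi}$.
\item For (b): for any total $e\in\mo{\rho}$ (in fact any $e$ at all), $\tv{d}(e)=\tv{d(e)}$ is total by (b) for $\pi'$. So $\tv{d}$ is total.
\end{itemize}

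The only subtle point is that totality alone, proved naively, does not need monotonicity at all; the actual obstacle is ensuring that $\tv{d}$ is a legitimate domain element. That is why I strengthen the induction with (c). Once (c) is in place, every step is a one-line unfolding of the definitions.
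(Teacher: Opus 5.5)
Your proof is correct and follows the paper's route, which is just ``trivial, by induction on the structure of $\pi$'': the function case is your step (b), where $\tv{d}(e)=\tv{d(e)}$ is total by the induction hypothesis. Your extra invariants add rigour rather than change the method. Invariant (c) is the paper's Lemma~\ref{preceq_tv}, which it proves separately afterwards. Invariant (a), that $\tv{d}$ is $\preceq$-monotonic and so lies in $\mo{\pi}$, is left implicit in the paper, although it is needed for $\tv{\cdot}$ to be well defined.
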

\begin{proof}
Trivial, by induction on the structure of $\pi$.
\end{proof}

\begin{corollary}\label{exists_total}
Let $\pi$ be a type and let $d \in \mo{\pi}$. Then, there exists a total $e \in \mo{\pi}$ such that $d \preceq e$.
\end{corollary}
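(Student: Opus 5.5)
The corollary follows at once by taking $e = \tv{d}$. By Lemma~\ref{collapse_total}, $\tv{d}$ is total. By Lemma~\ref{tv_preceq}, $d \preceq \tv{d}$. So $e=\tv{d}$ is a total element of the appropriate type lying $\preceq$-above $d$.

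One point must be checked first: that $\tv{d}$ actually belongs to $\mo{\pi}$. At higher types this means it must be a $\preceq$-monotonic function whose values are again in the right domains. I would verify this by induction on the structure of $\pi$.

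At type $o$ the map $\tv{\cdot}$ sends $\F \mapsto \F$ and $\NF,\T \mapsto \T$, so it is clearly $\preceq$-monotonic on $\{\F,\NF,\T\}$. For $\pi=\rho\to\pi'$, take $d_1 \preceq d_2$ in $\mo{\rho}$. Then $d(d_1)\preceq d(d_2)$ by monotonicity of $d$. By the induction hypothesis at $\pi'$, the collapse map on $\mo{\pi'}$ is $\preceq$-monotonic; this is exactly what we need, since $\preceq$ at function types is pointwise and collapse acts pointwise. Hence $\tv{d(d_1)}\preceq\tv{d(d_2)}$, that is, $\tv{d}(d_1)\preceq\tv{d}(d_2)$. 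The same induction shows that each value $\tv{d}(d_1)$ lies in $\mo{\pi'}$.

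This well-definedness check is the only step needing any care, and it is presumably implicit in the earlier lemmas. With it in place, the corollary is a direct combination of Lemmas~\ref{tv_preceq} and~\ref{collapse_total}.
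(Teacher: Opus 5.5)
Your proof is correct and is exactly the paper's argument: take the collapse of $d$ as the witness and combine Lemmas~\ref{tv_preceq} and~\ref{collapse_total}. Your extra check that the collapse is again $\preceq$-monotonic, which the paper leaves implicit in the definition, is sound; it could also be read off from Lemma~\ref{preceq_tv}, which gives equal collapses for $\preceq$-related arguments.
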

\begin{proof}
Follows from Lemma~\ref{tv_preceq} and Lemma~\ref{collapse_total}.
\end{proof}

\begin{lemma}\label{preceq_tv}
Let $\pi$ be a type and let $d_1,d_2 \in \mo{\pi}$.
If $d_1 \preceq d_2$, then $\tv{d_1} = \tv{d_2}$.
\end{lemma}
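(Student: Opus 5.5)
We argue by induction on the structure of the type $\pi$, mirroring the proof of Lemma~\ref{tv_preceq}; the key observation is that the collapse function identifies exactly the values that $\preceq$ relates at the base type.

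\emph{Base case} $\pi = o$. By definition of $\preceq_o$, either $d_1 = d_2$, in which case there is nothing to prove, or $d_1 = \NF$ and $d_2 = \T$. In the latter case $\tv{d_1} = \T$ (since $d_1 \neq \F$) and $\tv{d_2} = \T$, so $\tv{d_1} = \tv{d_2}$.

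\emph{Inductive step} $\pi = \rho \rightarrow \pi'$, assuming the lemma holds for $\pi'$. Let $d_1 \preceq_{\pi} d_2$. By the definition of $\preceq_{\rho\rightarrow\pi'}$ (pointwise ordering), for every $d' \in \mo{\rho}$ we have $d_1(d') \preceq_{\pi'} d_2(d')$. The induction hypothesis for $\pi'$ gives $\tv{d_1(d')} = \tv{d_2(d')}$. Then the recursive definition of the collapse function yields
\[
\tv{d_1}(d') = \tv{d_1(d')} = \tv{d_2(d')} = \tv{d_2}(d').
\]
Since this holds for every $d' \in \mo{\rho}$, the two collapses agree pointwise and are therefore equal as functions. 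Hence $\tv{d_1} = \tv{d_2}$.

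There is no real obstacle in this argument. The only subtle point is implicit in the definition itself: $\tv{d}$ must belong to $\mo{\pi}$, that is, be $\preceq$-monotonic, for the statement to be about domain elements at all. We take the definition of the collapse function as given, as the paper does in Lemmas~\ref{tv_preceq} and~\ref{collapse_total}. In fact, the present lemma, applied at type $\pi'$, together with $d(a) \preceq d(b)$ whenever $a \preceq b$, is exactly what shows that $\tv{d}$ is monotonic (indeed constant along $\preceq$-chains). So it could also serve to justify that well-definedness.
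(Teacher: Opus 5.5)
Your proof is correct and follows the paper's argument exactly: structural induction on $\pi$, with the base case read off from the definitions of $\preceq_o$ and the collapse, and the inductive step done pointwise via the induction hypothesis at $\pi'$. Your side remark is also sound. The lemma at $\pi'$, together with $\preceq$-monotonicity of $d$, gives $\tv{d}(a)=\tv{d}(b)$ whenever $a \preceq b$. So a simultaneous induction justifies that the collapse really lands in $\mo{\pi}$, a point the paper leaves implicit.
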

\begin{proof}
By induction on the structure of $\pi$.
If $\pi = o$, it follows from the definitions.
Suppose that $\pi = \rho\rightarrow \pi'$ and the lemma holds for $\pi'$.
Let $d \in \mo{\rho}$. Since $d_1 \preceq d_2$, we have $d_1(d) \preceq d_2(d)$.
By the induction hypothesis, we have $\tv{d_1(d)} = \tv{d_2(d)}$, so that $\tv{d_1}(d) = \tv{d_2}(d)$.
Since this holds for all $d \in \mo{\rho}$, we have $\tv{d_1} = \tv{d_2}$.
\end{proof}

For each type $\pi$ and $d \in \mo{\pi}$ we introduce a predicate constant $\cst{d}: o\rightarrow\pi$.
Also, we introduce that predicate constants $\mathsf{neg} : o \rightarrow o$ and $\mathsf{neg\_neg} : o \rightarrow o$.

Let $d \in \mo{\pi}$ for some type $\pi$.
We define an expression $\Est{d,\mathsf{V},\mathsf{U}} : o$  for each variable $\mathsf{V} : \pi$ and expression $\mathsf{U} : o$
as:
\[
\begin{array}{rl}
\Est{d,\mathsf{V},\mathsf{U}} = &\bigwedge\{ \mathsf{V}~(\cst{d_1}~\mathsf{U})~\cdots~(\cst{d_n}~\mathsf{U}) \mid d~d_1~\cdots~d_n = \T\}\wedge
\\& \bigwedge\{ \mathsf{neg\_neg} (\mathsf{V}~(\cst{d_1}~\mathsf{U})~\cdots~(\cst{d_n}~\mathsf{U})) \mid d~d_1\cdots~d_n = \NF\}\wedge
\\& \bigwedge\{ \mathsf{neg} (\mathsf{V}~(\cst{d_1}~\mathsf{U})~\cdots~(\cst{d_n}~\mathsf{U})) \mid d~d_1~\cdots~d_n = \F\}
\end{array}
\]

The intuitive meaning of $\Est{d,\mathsf{V},\mathsf{U}}$ is that this expression is true under an interpretation
$\mathcal{I}$ and a state $s$ when $s$ assigns the value $d$ to the variable $\mathsf{V}$.
As explained in section~\ref{sec:representation-theorem}, in order to define non-total relations, we need to define somehow the value $\NF$.
This is the point of the expression $\mathsf{U}$.
It is an expression that it is assumed to have the value $\NF$ under the interpretation $\mathcal{I}$.

Let $\mathcal{M}^*$ be an interpretation such that:
\begin{itemize}
  \item For all $x \in \mo{o}$, $\mathcal{M}^*(\mathsf{neg})(x) = \pnot x$.
  \item For all $x \in \mo{o}$, $\mathcal{M}^*(\mathsf{neg\_neg})(x) = \tv{x}$.
  \item For all types $\pi$ and $d \in \mo{\pi}$,
  \[\mathcal{M}^*(\cst{d})(u) = \begin{cases}
  d, & \text{ if } u = \NF\\
  \tv{d}, & \text{ if } u = \T\\
  \lambda \overline{x}.\F, & \text{ if } u = \F
\end{cases}\]
\end{itemize}

The following definition is just a technical notation that we use in the following proofs.

\begin{definition}
  Let $\pi$ be a type such that $\pi = \pi_1\rightarrow\cdots\rightarrow\pi_n\rightarrow o$.
  We define $arg(\pi) = \mo{\pi_1}\times\cdots\times\mo{\pi_n}$.
\end{definition}

Now we can state formally some lemmata about the meaning of $\Est{d,\mathsf{V},\mathsf{U}}$ expressions.

\begin{lemma}\label{lemma_E_U_tst}
Let $\mathcal{I}$ be an interpretation of $\mathsf{P}^*_{\tau}$, $\pi$ be a type and $s$ be a state.
For any variables $\mathsf{V} : \pi$, $\mathsf{U} : o$ and any $d\in\mo{\pi}$,
if $\mathcal{I}(\cst{d'})=\mathcal{M}^*(\cst{d'})$ for any $d'$ of type of smaller \order than $\pi$, $s(\mathsf{V}) = d$ and $s(\mathsf{U}) = \NF$,
then $\mwrs{\Est{d,\mathsf{V},\mathsf{U}}}{\mathcal{I}}{s} = \T$.
\end{lemma}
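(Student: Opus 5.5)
The plan is a direct unfolding of the definition of $\Est{d,\mathsf{V},\mathsf{U}}$. It is a conjunction, so its value under $\mathcal{I}$ and $s$ is the $\leq$-minimum of its conjuncts. It therefore suffices to show that every conjunct evaluates to $\T$.

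Write $\pi = \pi_1\to\cdots\to\pi_n\to o$ and fix $(d_1,\ldots,d_n)\in arg(\pi)$. Each $\pi_j$ has strictly smaller \order than $\pi$, so by hypothesis $\mathcal{I}(\cst{d_j}) = \mathcal{M}^*(\cst{d_j})$. Since $s(\mathsf{U}) = \NF$, we get
\[\mwrs{\cst{d_j}~\mathsf{U}}{\mathcal{I}}{s} = \mathcal{M}^*(\cst{d_j})(\NF) = d_j .\]
Using $s(\mathsf{V}) = d$, the core atom therefore satisfies
\[\mwrs{\mathsf{V}~(\cst{d_1}~\mathsf{U})\cdots(\cst{d_n}~\mathsf{U})}{\mathcal{I}}{s} = d~d_1\cdots d_n .\]

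I then split into three cases according to which set of the definition contributed the conjunct.

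\textbf{Value $\T$.} If $d~d_1\cdots d_n=\T$, the conjunct is the atom itself, whose value is $\T$.

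\textbf{Value $\NF$.} If $d~d_1\cdots d_n=\NF$, the conjunct is $\mathsf{neg\_neg}$ applied to the atom. Its value is $\mathcal{I}(\mathsf{neg\_neg})(\NF)$.

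\textbf{Value $\F$.} If $d~d_1\cdots d_n=\F$, the conjunct is $\mathsf{neg}$ applied to the atom. Its value is $\mathcal{I}(\mathsf{neg})(\F)$.

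For the last two cases I need $\mathcal{I}(\mathsf{neg}) = \mathcal{M}^*(\mathsf{neg})$ and $\mathcal{I}(\mathsf{neg\_neg}) = \mathcal{M}^*(\mathsf{neg\_neg})$. Given these, $\mathcal{I}(\mathsf{neg\_neg})(\NF) = \tv{\NF} = \T$ and $\mathcal{I}(\mathsf{neg})(\F) = \pnot\F = \T$, as required. All conjuncts are then $\T$, so their minimum is $\T$; the empty conjunction, if it arises, is also read as $\T$.

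The only real obstacle is this assumption on $\mathsf{neg}$ and $\mathsf{neg\_neg}$, since the lemma's hypotheses mention only the constants $\cst{d'}$. I would handle it by reading the lemma as implicitly assuming that $\mathcal{I}$ agrees with $\mathcal{M}^*$ on these two auxiliary constants. This should be harmless: in $\mathsf{P}^*_{\tau}$ they are defined at the bottom stratum by $\mathsf{neg}(\mathsf{R})\leftarrow\pnot\mathsf{R}$ and $\mathsf{neg\_neg}(\mathsf{R})\leftarrow\mathsf{neg}(\mathsf{neg}(\mathsf{R}))$, whose meanings, as computed in the earlier example, are exactly $\pnot$ and the collapse on $o$. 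I would state this assumption explicitly at the start of the proof.

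A secondary point is the well-definedness of the sets in $\Est{\cdot}$, which range over all of $arg(\pi)$. This is fine because every domain is finite: $V$ is finite, so by induction every $\mo{\pi}$ is finite.
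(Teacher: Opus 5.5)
Your proof is correct and is exactly the direct unfolding of the defining conjunction that the paper compresses into ``Trivial''. You are also right that the statement tacitly requires $\mathcal{I}$ to agree with $\mathcal{M}^*$ on $\mathsf{neg}$ and $\mathsf{neg\_neg}$, an assumption the paper also uses silently; without it the lemma fails, e.g.\ for $\pi=o$, $d=\F$ and $\mathcal{I}(\mathsf{neg})=\lambda x.\F$.

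The extra assumption is harmless, but not because of the equilibrium meaning of these constants. The reason is that wherever the lemma is invoked (the $\preceq$-minimality proof for $\mathcal{M}^*$, directly and via the next lemma), the interpretation is a model $\mathcal{N}\preceq\mathcal{M}^*$. For such an $\mathcal{N}$, the two defining rules together with $\mathcal{N}\preceq\mathcal{M}^*$ force $\mathcal{N}(\mathsf{neg})=\mathcal{M}^*(\mathsf{neg})$ and $\mathcal{N}(\mathsf{neg\_neg})=\mathcal{M}^*(\mathsf{neg\_neg})$.
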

\begin{proof}
Trivial.
\end{proof}

\begin{lemma}\label{lemma_E_U_t}
Let $\mathcal{I}$ be an interpretation of $\mathsf{P}^*_{\tau}$, $\pi$ be a type and $s$ be a state. For any variables $\mathsf{V} : \pi$, $\mathsf{U} : o$ and any $d\in\mo{\pi}$,
if $\mathcal{I}(\cst{d'})=\mathcal{M}^*(\cst{d'})$ for any $d'$ of type of smaller \order than $\pi$, $s(\mathsf{V}) = d$ and $s(\mathsf{U}) = \T$, then $\mwrs{\Est{d,\mathsf{V},\mathsf{U}}}{\mathcal{I}}{s} = \T$.
\end{lemma}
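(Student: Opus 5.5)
The plan is to evaluate each conjunct of $\Est{d,\mathsf{V},\mathsf{U}}$ under $\mathcal{I}$ and $s$ and show that every one of them evaluates to $\T$. Then their $\min_\leq$ is $\T$.

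We write $\pi=\pi_1\to\cdots\to\pi_n\to o$. Each $\pi_j$ has smaller width than $\pi$, so by hypothesis $\mathcal{I}(\cst{d_j})=\mathcal{M}^*(\cst{d_j})$. Since $s(\mathsf{U})=\T$, we get $\mwrs{\cst{d_j}~\mathsf{U}}{\mathcal{I}}{s}=\mathcal{M}^*(\cst{d_j})(\T)=\tv{d_j}$. Every conjunct therefore reduces to a statement about the value $v=d~\tv{d_1}\cdots\tv{d_n}$ (using $s(\mathsf{V})=d$), compared with the value $w=d~d_1\cdots d_n$ that indexes the conjunct.

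The key step is the comparison of $v$ with $w$. By Lemma~\ref{tv_preceq}, $d_j\preceq\tv{d_j}$ for each $j$. Since $d$ is $\preceq$-monotonic in each argument, applying this argument by argument gives $w\preceq v$. In $\mo{o}$ this means that either $v=w$, or $w=\NF$ and $v=\T$. We then split into three cases according to the conjunct.

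\emph{Case $w=\T$.} The conjunct is $\mathsf{V}(\cst{d_1}\mathsf{U})\cdots$ and evaluates to $v$. Since $\T\preceq v$ forces $v=\T$, the conjunct is $\T$.

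\emph{Case $w=\F$.} Here $v=\F$. The conjunct $\mathsf{neg}(\ldots)$ evaluates to $\mathcal{I}(\mathsf{neg})(\F)$. We need $\mathcal{I}(\mathsf{neg})=\mathcal{M}^*(\mathsf{neg})$, which gives $\pnot\F=\T$.

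\emph{Case $w=\NF$.} Here $v\in\{\NF,\T\}$. The conjunct $\mathsf{neg\_neg}(\ldots)$ evaluates to $\mathcal{I}(\mathsf{neg\_neg})(v)$, which equals $\tv{v}=\T$, again provided that $\mathcal{I}$ agrees with $\mathcal{M}^*$ on $\mathsf{neg\_neg}$.

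The main obstacle is exactly this assumption on $\mathsf{neg}$ and $\mathsf{neg\_neg}$. The lemma only states the hypothesis for the $\cst{d'}$ constants. I would read the statement (as in Lemma~\ref{lemma_E_U_tst}, whose \emph{trivial} proof needs the same thing) as also presupposing that $\mathcal{I}$ agrees with $\mathcal{M}^*$ on $\mathsf{neg}$ and $\mathsf{neg\_neg}$. Being an interpretation of $\mathsf{P}^*_\tau$, $\mathcal{I}$ will in the intended use be built on the lowest strata of $\mathsf{P}^*_\tau$, where these two predicates have exactly their $\mathcal{M}^*$ meanings. I would make this explicit at the start of the proof.

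A minor point is that the index sets of the conjunctions range over all tuples in $arg(\pi)$. The domains are finite, so the conjunctions are finite and well formed. In any case, the argument above shows pointwise that every conjunct is $\T$, so $\min_\leq$ of all of them is $\T$.
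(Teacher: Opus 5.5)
Your proof is correct. It differs from the paper's only in where the monotonicity is applied.

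The paper's proof is a one-line reduction. Lemma~\ref{lemma_E_U_tst} gives the value $\T$ under $s[\mathsf{U}/\NF]$. Since $s[\mathsf{U}/\NF]\preceq s$, and evaluation is $\preceq$-monotonic in the state (including through $\pnot$ and the $\min$ of a conjunction), the value under $s$ is $\succeq\T$, hence $\T$. You instead unfold every conjunct and compare $w=d\,d_1\cdots d_n$ with $v=d\,\tv{d_1}\cdots\tv{d_n}$, using Lemma~\ref{tv_preceq} and the monotonicity of $d$.

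Both routes rest on the same fact, $d_j\preceq\tv{d_j}$. That fact is exactly what makes $\mathcal{M}^*(\cst{d_j})$ monotonic across $\NF\preceq\T$. The paper's route is shorter: it reuses the previous lemma plus a monotonicity-of-evaluation fact that it uses but never states. Yours is self-contained and shows precisely what is required of $\mathsf{neg}$ and $\mathsf{neg\_neg}$.

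Your remark on that point is right. For an arbitrary $\mathcal{I}$ the statement fails (take $\mathcal{I}(\mathsf{neg})$ constantly $\F$). The paper's proof carries the same unstated hypothesis through Lemma~\ref{lemma_E_U_tst}.

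You can justify the extra hypothesis more precisely than by appeal to the lowest strata. The lemma is only used in the $\preceq$-minimality proof for $\mathcal{M}^*$, where it is applied to a model $\mathcal{N}\prec\mathcal{M}^*$. There, $\mathcal{N}(\mathsf{neg})\preceq\mathcal{M}^*(\mathsf{neg})$ together with the clause $\mathsf{neg}~\mathsf{R}\leftarrow\pnot\mathsf{R}$ forces $\mathcal{N}(\mathsf{neg})=\mathcal{M}^*(\mathsf{neg})$. The clause for $\mathsf{neg\_neg}$ then forces $\mathcal{N}(\mathsf{neg\_neg})=\mathcal{M}^*(\mathsf{neg\_neg})$.
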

\begin{proof}
By previous lemma and $\preceq$-monotonicity.
\end{proof}

\begin{lemma}\label{lemma_U_tst_E_t}
Let $\pi$ be a type and $s$ be a state. For any variables $\mathsf{V} : \pi$, $\mathsf{U} : o$ and any $d\in\mo{\pi}$,
if $\mwrs{\Est{d,\mathsf{V},\mathsf{U}}}{\mathcal{M}^*}{s} = \T$ and $s(\mathsf{U}) = \NF$, then $d \preceq s(\mathsf{V})$.
\end{lemma}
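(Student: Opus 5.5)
The plan is to unfold the definition of $\Est{d,\mathsf{V},\mathsf{U}}$ under $\mathcal{M}^*$ and $s$, and read off the value of $s(\mathsf{V})$ at every argument tuple. Let $f = s(\mathsf{V})$ and write $\pi = \pi_1\to\cdots\to\pi_n\to o$ (possibly $n=0$).

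\textbf{Step 1: the argument expressions.} Since $\mathsf{U}$ is a variable with $s(\mathsf{U})=\NF$, each argument expression evaluates as
\[
\mwrs{(\cst{d_i}~\mathsf{U})}{\mathcal{M}^*}{s} = \mathcal{M}^*(\cst{d_i})(\NF) = d_i .
\]
Hence every atom $\mathsf{V}~(\cst{d_1}~\mathsf{U})\cdots(\cst{d_n}~\mathsf{U})$ has value $f\,d_1\cdots d_n$.

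\textbf{Step 2: each conjunct is $\T$.} The domains are finite, so the conjunction is a finite conjunction of literals. Its value is the $\leq$-minimum of their values, so it equals $\T$ only if every conjunct equals $\T$. Moreover, $\mathit{arg}(\pi)$ is partitioned by the value of $d\,d_1\cdots d_n \in\{\T,\NF,\F\}$. So every tuple $(d_1,\ldots,d_n)\in \mathit{arg}(\pi)$ contributes exactly one conjunct.

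\textbf{Step 3: case analysis on $d\,d_1\cdots d_n$.} We use $\mathcal{M}^*(\mathsf{neg})=\pnot$ and $\mathcal{M}^*(\mathsf{neg\_neg})=\tv{\cdot}$.
\begin{itemize}
\item If $d\,\overline{d}=\T$, the conjunct is the atom itself, so $f\,\overline{d}=\T$.
\item If $d\,\overline{d}=\NF$, then $\tv{f\,\overline{d}}=\T$, so $f\,\overline{d}\in\{\NF,\T\}$.
\item If $d\,\overline{d}=\F$, then $\pnot(f\,\overline{d})=\T$, so $f\,\overline{d}=\F$.
\end{itemize}
In all three cases $d\,\overline{d}\preceq_o f\,\overline{d}$. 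For $n=0$ the same analysis applies directly, with the single conjunct built from $\mathsf{V}$ itself.

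\textbf{Step 4: lift to $\pi$.} Pointwise $\preceq$ on all full argument tuples gives $d\preceq_\pi f$. This follows by a routine induction on $n$ that unfolds the pointwise definition of $\preceq_{\pi_1\to\pi_2}$: $d\preceq f$ iff $d\,e\preceq f\,e$ for all $e\in\mo{\pi_1}$, and the induction hypothesis applies to $d\,e$ and $f\,e$.

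The only delicate point is Step 2: one must check that the indexing sets of the three big conjunctions together range over \emph{all} of $\mathit{arg}(\pi)$, so that no tuple escapes the constraint. This holds because the three sets are defined by the three possible values of $d\,\overline{d}$. There is no genuine obstacle beyond this bookkeeping.
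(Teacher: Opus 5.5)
Your proof is correct and takes essentially the same approach as the paper's: a case analysis, for each argument tuple, on the value of $d\,d_1\cdots d_n$, using that the corresponding conjunct $\mathsf{V}(\ldots)$, $\mathsf{neg\_neg}(\ldots)$ or $\mathsf{neg}(\ldots)$ must evaluate to $\T$ under $\mathcal{M}^*$. You also make explicit steps the paper leaves implicit: that $\cst{d_i}~\mathsf{U}$ evaluates to $d_i$ when $s(\mathsf{U})=\NF$, that every tuple is covered by exactly one conjunct, and that pointwise $\preceq$ on full argument tuples yields $d\preceq_\pi s(\mathsf{V})$.
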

\begin{proof}
We are going to show that for each $(d_1,\ldots,d_n)\in arg(\pi)$, we have $d~d_1~\cdots~d_n \preceq s(\mathsf{V})~d_1~\cdots~d_n$.
If $d~d_1~\cdots~d_n = \T$, then $\mathsf{V}~(\cst{d_1} \mathsf{U})~\cdots~(\cst{d_n} \mathsf{U})$ is a conjunct of $\Est{d,\mathsf{V},\mathsf{U}}$.
Since $\mwrs{\Est{d,\mathsf{V},\mathsf{U}}}{\mathcal{M}^*}{s} = \T$, we have that $s(\mathsf{V})~d_1~\cdots~d_n=\mwrs{\mathsf{V}~(\cst{d_1} \mathsf{U})~\cdots~(\cst{d_n} \mathsf{U})}{\mathcal{M}^*}{s} = \T$.
If $d~d_1~\cdots~d_n = \F$, $\mathsf{neg} (\mathsf{V}~(\cst{d_1} \mathsf{U})~\cdots~(\cst{d_n} \mathsf{U}))$ is a conjunct of $\Est{d,\mathsf{V},\mathsf{U}}$.
Since $\mwrs{\Est{d,\mathsf{V},\mathsf{U}}}{\mathcal{M}^*}{s} = \T$, we have that $\mwrs{\mathsf{neg} (\mathsf{V}~(\cst{d_1} \mathsf{U})~\cdots~(\cst{d_n} \mathsf{U}))}{\mathcal{M}^*}{s} = \T$,
so that $s(\mathsf{V})~d_1~\cdots~d_n = \mwrs{\mathsf{V}~(\cst{d_1} \mathsf{U})~\cdots~(\cst{d_n} \mathsf{U})}{\mathcal{M}^*}{s} = \F$.
If $d~d_1~\cdots~d_n = \NF$, then $\mathsf{neg\_neg} (\mathsf{V}~(\cst{d_1} \mathsf{U})~\cdots~(\cst{d_n} \mathsf{U}))$ is a conjunct of $\Est{d,\mathsf{V},\mathsf{U}}$.
Then $s(\mathsf{V})~d_1~\cdots~d_n$ is either $\T$ or $\NF$. In any case, we have that $d~d_1~\cdots~d_n \preceq s(\mathsf{V})~d_1~\cdots~d_n$.
\end{proof}

\begin{lemma}\label{lemma_U_tst_E_tst}
Let $\pi$ be a type and $s$ be a state. For any variables $\mathsf{V} : \pi$, $\mathsf{U} : o$ and any $d\in\mo{\pi}$,
if $\mwrs{\Est{d,\mathsf{V},\mathsf{U}}}{\mathcal{M}^*}{s} =\NF$ and $s(\mathsf{U}) = \NF$, then $\tv{s(\mathsf{V})} = \tv{d}$.
\end{lemma}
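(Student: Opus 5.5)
The plan is to argue pointwise over $arg(\pi)$, in the same way as the proof of Lemma~\ref{lemma_U_tst_E_t}. The only consequence of the hypothesis I will use is that $\mwrs{\Est{d,\mathsf{V},\mathsf{U}}}{\mathcal{M}^*}{s} = \NF$ forces every conjunct of $\Est{d,\mathsf{V},\mathsf{U}}$ to take a value different from $\F$. Indeed, the conjunction is evaluated as $\min_{\leq}$ over finitely many conjuncts, since every $\mo{\pi}$ is finite. By unfolding the definition of the collapse function along the argument positions, two elements of $\mo{\pi}$ have the same collapse iff for every $(d_1,\ldots,d_n)\in arg(\pi)$ the values $d~d_1\cdots d_n$ and $s(\mathsf{V})~d_1\cdots d_n$ have the same collapse in $\mo{o}$. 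In $\mo{o}$ this just means that both are $\F$ or both are non-$\F$. So it suffices to show, for each tuple, that $s(\mathsf{V})~d_1\cdots d_n = \F$ iff $d~d_1\cdots d_n = \F$.

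First I note that, because $s(\mathsf{U}) = \NF$ and by the definition of $\mathcal{M}^*$, we have $\mwrs{\cst{d_i}~\mathsf{U}}{\mathcal{M}^*}{s} = \mathcal{M}^*(\cst{d_i})(\NF) = d_i$. Hence $\mwrs{\mathsf{V}~(\cst{d_1}~\mathsf{U})\cdots(\cst{d_n}~\mathsf{U})}{\mathcal{M}^*}{s} = s(\mathsf{V})~d_1\cdots d_n$; call this value $x$. I then fix a tuple and split on the value of $d~d_1\cdots d_n$, which selects exactly one conjunct.
\begin{enumerate}
\item If $d~d_1\cdots d_n=\T$, the conjunct is the atom itself. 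Its value $x$ is not $\F$, so $x$ collapses to $\T$, matching the collapse of $\T$.
\item If $d~d_1\cdots d_n=\NF$, the conjunct is $\mathsf{neg\_neg}$ applied to the atom, with value $\tv{x}$. Since this is not $\F$, we get $x\neq\F$, so both values collapse to $\T$.
\item If $d~d_1\cdots d_n=\F$, the conjunct is $\mathsf{neg}$ applied to the atom, with value $\pnot x$. Since $\pnot x\neq \F$, the definition of $\pnot$ forces $x=\F$, matching the collapse of $\F$.
\end{enumerate}
In all three cases the collapses agree on this tuple, and since the tuple was arbitrary we conclude $\tv{s(\mathsf{V})}=\tv{d}$.

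The only point needing care is the reduction of equality of collapses at higher type to this argument-wise condition on $arg(\pi)$. I would handle it by a short induction on $\pi$: collapse is defined pointwise via $\tv{d}(d_1)=\tv{d(d_1)}$, and iterating this gives $\tv{d}~d_1\cdots d_n=\tv{d~d_1\cdots d_n}$ for all $(d_1,\ldots,d_n)\in arg(\pi)$. The degenerate case $\pi=o$, where $arg(\pi)$ contains only the empty tuple, is covered by the same three-case analysis applied to the single conjunct.
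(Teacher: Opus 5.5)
Your proof is correct and follows the paper's route. The paper only says the argument mirrors Lemma~\ref{lemma_U_tst_E_t}: a pointwise check over $arg(\pi)$ with a case split on $d~d_1\cdots d_n$, using $\mathcal{M}^*(\cst{d_i})(\NF)=d_i$ to identify the atom's value with $s(\mathsf{V})~d_1\cdots d_n$. That is exactly what you carry out, with the details filled in correctly, including the reduction of collapse equality to the argument-wise condition.
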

\begin{proof}
Similarly to the proof of the previous lemma.
\end{proof}

\begin{lemma}\label{lemma_U_t}
Let $\pi$ be a type and $s$ be a state. For any variables $\mathsf{V} : \pi$, $\mathsf{U} : o$ and any $d\in\mo{\pi}$,
if $\mwrs{\Est{d,\mathsf{V},\mathsf{U}}}{\mathcal{M}^*}{s}\in\{\T, \NF\}$ and $s(\mathsf{U}) = \T$, then $\tv{s(\mathsf{V})} = \tv{d}$.
\end{lemma}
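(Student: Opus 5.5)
We argue argument tuple by argument tuple, following the pattern of the proof of Lemma~\ref{lemma_U_tst_E_t}. Since the collapse function acts pointwise, it suffices to show that for every $(d_1,\ldots,d_n)\in arg(\pi)$ we have $\tv{s(\mathsf{V})~d_1\cdots d_n} = \tv{d~d_1\cdots d_n}$. Equivalently, $s(\mathsf{V})~d_1\cdots d_n = \F$ iff $d~d_1\cdots d_n = \F$, because on $o$ the collapse sends $\F$ to $\F$ and both $\NF$ and $\T$ to $\T$.

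First we evaluate the arguments. With $s(\mathsf{U}) = \T$, the definition of $\mathcal{M}^*$ gives $\mwrs{\cst{d_j}~\mathsf{U}}{\mathcal{M}^*}{s} = \tv{d_j}$. Hence each conjunct of $\Est{d,\mathsf{V},\mathsf{U}}$ of the form $\mathsf{V}~(\cst{d_1}~\mathsf{U})\cdots(\cst{d_n}~\mathsf{U})$ evaluates to $x := s(\mathsf{V})~\tv{d_1}\cdots\tv{d_n}$, not to $s(\mathsf{V})~d_1\cdots d_n$. This is the main obstacle: the probing arguments are collapsed. To get around it, we use $d_j \preceq \tv{d_j}$ (Lemma~\ref{tv_preceq}) together with $\preceq$-monotonicity of $s(\mathsf{V})$, which give $s(\mathsf{V})~d_1\cdots d_n \preceq x$. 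Lemma~\ref{preceq_tv} on type $o$ then yields $\tv{s(\mathsf{V})~d_1\cdots d_n} = \tv{x}$. The same reasoning applied to $d$ gives $\tv{d~d_1\cdots d_n} = \tv{d~\tv{d_1}\cdots\tv{d_n}}$. So it suffices to compare $s(\mathsf{V})$ and $d$ on the collapsed tuple $(\tv{d_1},\ldots,\tv{d_n})$. Since that tuple is itself a member of $arg(\pi)$, $\Est{d,\mathsf{V},\mathsf{U}}$ contains a conjunct indexed by it, and that conjunct has the argument $\cst{\tv{d_j}}~\mathsf{U}$. This evaluates to $\tv{\tv{d_j}} = \tv{d_j}$, since the collapse is idempotent: applying Lemma~\ref{preceq_tv} to $d_j \preceq \tv{d_j}$ gives $\tv{d_j} = \tv{\tv{d_j}}$.

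It therefore remains to prove the claim for a tuple $(e_1,\ldots,e_n)$ of collapsed arguments, where the conjunct probes exactly $x = s(\mathsf{V})~e_1\cdots e_n$. The hypothesis says the whole conjunction is in $\{\T,\NF\}$, so every conjunct has value in $\{\T,\NF\}$, and we split on the value of $d~e_1\cdots e_n$:
\begin{itemize}
\item If it is $\T$, the conjunct is $x$ itself, so $x\in\{\T,\NF\}$ and $\tv{x} = \T$.
\item If it is $\NF$, the conjunct is $\mathsf{neg\_neg}$ applied to $x$, with value $\tv{x}$. That value is not $\F$, so $\tv{x} = \T$, which matches $\tv{\NF} = \T$.
\item If it is $\F$, the conjunct is $\mathsf{neg}$ applied to $x$, with value $\pnot x$. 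Since $\pnot x \neq \F$, we get $x = \F$, and so $\tv{x} = \F$.
\end{itemize}

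In every case $\tv{x} = \tv{d~e_1\cdots e_n}$. Combining this with the two reductions of the previous paragraph gives the required pointwise equality, hence $\tv{s(\mathsf{V})} = \tv{d}$.
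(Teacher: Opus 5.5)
Your proof is correct and follows essentially the same argument as the paper. The shared steps are: the pointwise comparison, the fact that $\cst{d_j}~\mathsf{U}$ evaluates to $\tv{d_j}$ when $s(\mathsf{U})=\T$, the transfer back to $s(\mathsf{V})~d_1\cdots d_n$ via $d_j \preceq \tv{d_j}$ and $\preceq$-monotonicity, and a three-way split on the form of the conjunct. The only difference is your detour through the conjunct indexed by the collapsed tuple, together with the idempotence and $d$-side reductions it requires; this is unnecessary, since the conjunct indexed by $(d_1,\ldots,d_n)$ already probes $s(\mathsf{V})~\tv{d_1}\cdots\tv{d_n}$ and its form is fixed by $d~d_1\cdots d_n$, which is how the paper proceeds.
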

\begin{proof}
We are going to show that for each $(d_1,\ldots,d_n) \in arg(\pi)$, we have $\tv{d~d_1~\cdots~d_n} = \tv{s(\mathsf{V})~d_1~\cdots~d_n}$.
If $d~d_1~\cdots~d_n = \T$, then $\mathsf{V}~(\cst{d_1} \mathsf{U})~\cdots~(\cst{d_n} \mathsf{U})$ is a conjunct of $\Est{d,\mathsf{V},\mathsf{U}}$.
Since $\mwrs{\Est{d,\mathsf{V},\mathsf{U}}}{\mathcal{M}^*}{s} \in \{\T, \NF\}$, we have that $\mwrs{\mathsf{V}~(\cst{d_1} \mathsf{U})~\cdots~(\cst{d_n} \mathsf{U})}{\mathcal{M}^*}{s} \in \{\T, \NF\}$.
By the definition of $\mathcal{M}^*$, $\mwrs{\mathsf{V}~(\cst{d_1} \mathsf{U})~\cdots~(\cst{d_n} \mathsf{U})}{\mathcal{M}^*}{s} = s(\mathsf{V})~\tv{d_1}~\cdots~\tv{d_n}$.
By the monotonicity of $s(\mathsf{V})$, $s(\mathsf{V})~d_1~\cdots~d_n \in \{\T, \NF\}$. In any case, we have that $\tv{s(\mathsf{V})~d_1~\cdots~d_n} = \T$.
Similar arguments can be used for the cases $d~d_1~\cdots~d_n = \NF$ and $d~d_1~\cdots~d_n = \F$.
\end{proof}

Now, we are ready to define the program that constructs the constants $\cst{d}$ for all domain element $d$.

For any type $\pi$ and each $d \in \mo{\pi}$ we define a program $\mathsf{P}_{\cst{d}}$ consisting of clauses:
\begin{itemize}
  \item
  \[
    \mathsf{neg}~\mathsf{R} \leftarrow \pnot\mathsf{R}
  \]
  \[
    \mathsf{neg\_neg}~\mathsf{R} \leftarrow \mathsf{neg}(\mathsf{neg}~\mathsf{R})
  \]
  \item For each $d_1,\ldots,d_n$ such that $d~d_1~\cdots~d_n = \T$
\[
  \cst{d}~\mathsf{U}~\mathsf{R}_1~\cdots~\mathsf{R}_n \leftarrow \mathsf{neg\_neg} (\mathsf{U})\wedge \Est{d_1,\mathsf{R}_1,\mathsf{U}}\wedge\cdots\wedge\Est{d_n,\mathsf{R}_n,\mathsf{U}}
\]
  \item For each $d_1,\ldots,d_n$ such that $d~d_1~\cdots~d_n = \NF$
\[
  \cst{d}~\mathsf{U}~\mathsf{R}_1~\cdots~\mathsf{R}_n \leftarrow \mathsf{U}\wedge\Est{d_1,\mathsf{R}_1,\mathsf{U}}\wedge\cdots\wedge\Est{d_n,\mathsf{R}_n,\mathsf{U}}
\]
\end{itemize}
$\mathsf{P}^*_{\tau}$ is the union of every $\mathsf{P}_{\cst{d}}$ for all elements $d \in \mo{\pi}$ for every type $\pi$ with $\order(\pi) \leq \order(\tau)$.

Now, we can prove the main result about the $\cst{d}$ constants.
We will firstly show that $\mathcal{M}^*$ is a model of $\mathsf{P}^*_{\tau}$.
Then, we will show that $\mathcal{M}^*$ is $\preceq$-minimal.
Since $\mathcal{M}^*$ is also total, we have that $\mathcal{M}^*$ is an equilibrium model of $\mathsf{P}^*_{\tau}$.
Now, we argue that $\mathsf{P}^*_{\tau}$ is stratified. In order to see that notice that for every clause
$\cst{d}\ \overline{\mathsf{R}} \lrule \mathsf{B}$, for every predicate constant $\cst{d'}$ appearing in $\mathsf{B}$,
the type of $\cst{d'}$ is of smaller \order than the type of $\cst{d}$.
Therefore, by Theorem~\ref{is_unique}, $\mathcal{M}^*$ is the unique equilibrium model of $\mathsf{P}^*_{\tau}$.

\begin{lemma}
$\mathcal{M}^*$ is a model of $\mathsf{P}^*_{\tau}$.
\end{lemma}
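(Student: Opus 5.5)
We check every rule of $\mathsf{P}^*_{\tau}$ against $\mathcal{M}^*$, one rule kind at a time.

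\textbf{The rules for $\mathsf{neg}$ and $\mathsf{neg\_neg}$.} These are immediate. The rule $\mathsf{neg}~\mathsf{R} \leftarrow \pnot\mathsf{R}$ holds with equality, since $\mathcal{M}^*(\mathsf{neg})(x) = \pnot x$. For the other rule, note that $\pnot\pnot x = \F$ if $x=\F$ and $\T$ otherwise. This is exactly $\tv{x} = \mathcal{M}^*(\mathsf{neg\_neg})(x)$, so $\mathsf{neg\_neg}~\mathsf{R} \leftarrow \mathsf{neg}(\mathsf{neg}~\mathsf{R})$ also holds with equality. (Both relations are $\preceq$-monotonic, so $\mathcal{M}^*$ is a genuine interpretation. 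For $\cst{d}$ one checks monotonicity in $u$ using Lemma~\ref{tv_preceq}: $\lambda\overline{x}.\F$ only has to be compared with itself, and $d \preceq \tv{d}$.)

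\textbf{Setup for the $\cst{d}$ rules.} Fix a rule with head $\cst{d}~\mathsf{U}~\mathsf{R}_1\cdots\mathsf{R}_n$ and a state $s$. Write $u = s(\mathsf{U})$ and $e_j = s(\mathsf{R}_j)$. The rule was generated by a tuple $(d_1,\ldots,d_n)$ with $d\,d_1\cdots d_n \in \{\T,\NF\}$. We must show that the body value is $\leq \mathcal{M}^*(\cst{d})\,u\,e_1\cdots e_n$. We split on $u$.

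\emph{Case $u=\F$.} Rules of the first kind contain $\mathsf{neg\_neg}(\mathsf{U})$, which evaluates to $\tv{\F}=\F$. Rules of the second kind contain $\mathsf{U}$, which evaluates to $\F$. So the body is $\F$ and the rule holds.

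\emph{Case $u=\T$.} The head value is $\tv{d}\,e_1\cdots e_n = \tv{d\,e_1\cdots e_n}$. If the body is $\F$ we are done, so assume every conjunct $\Est{d_j,\mathsf{R}_j,\mathsf{U}}$ has value in $\{\T,\NF\}$. Lemma~\ref{lemma_U_t} then gives $\tv{e_j} = \tv{d_j}$ for each $j$. We then need $\tv{d\,e_1\cdots e_n} = \tv{d\,d_1\cdots d_n} = \T$.

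This is the main obstacle. The collapse function $\tv{\cdot}$ does not obviously commute with application in the arguments. What I would try is to use $\preceq$-monotonicity of $d$ through the common upper bound $\tv{e_j} = \tv{d_j}$. Lemma~\ref{tv_preceq} gives $e_j \preceq \tv{e_j}$ and $d_j \preceq \tv{d_j}$. Hence $d\,e_1\cdots e_n \preceq d\,\tv{e_1}\cdots\tv{e_n} = d\,\tv{d_1}\cdots\tv{d_n} \succeq d\,d_1\cdots d_n$. Here $d$ applied to a $\preceq$-larger tuple is obtained argument by argument, using monotonicity of $d$ and of its partial applications. Lemma~\ref{preceq_tv} then says that $\preceq$-related elements have equal collapse. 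So $\tv{d\,e_1\cdots e_n} = \tv{d\,\overline{\tv{d_j}}} = \tv{d\,d_1\cdots d_n}$, which is $\T$ since $d\,d_1\cdots d_n\in\{\T,\NF\}$. The head is therefore $\T$ and the rule holds.

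\emph{Case $u=\NF$.} The head value is $d\,e_1\cdots e_n$. In a first-kind rule, $\mathsf{neg\_neg}(\mathsf{U})$ evaluates to $\T$. In a second-kind rule, $\mathsf{U}$ evaluates to $\NF$, so the body is at most $\NF$.

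Suppose first that all $\Est{d_j,\mathsf{R}_j,\mathsf{U}}$ evaluate to $\T$. By Lemma~\ref{lemma_U_tst_E_t}, $d_j \preceq e_j$, hence $d\,d_1\cdots d_n \preceq d\,e_1\cdots e_n$. So if $d\,d_1\cdots d_n=\T$ the head is $\T$. If $d\,d_1\cdots d_n=\NF$, the head is $\NF$ or $\T$, which is $\geq$ the body value $\NF$.

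Otherwise some conjunct is $\NF$ (if any is $\F$ the rule holds trivially), so the body is at most $\NF$. Each conjunct is then $\T$ or $\NF$, and Lemmas~\ref{lemma_U_tst_E_t} and~\ref{lemma_U_tst_E_tst} give $\tv{e_j}=\tv{d_j}$ for every $j$. The argument of the $u=\T$ case then yields $\tv{d\,e_1\cdots e_n}=\T$, i.e.\ $d\,e_1\cdots e_n\in\{\NF,\T\}$, which is $\geq \NF \geq$ the body.

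Hence every rule is satisfied in every state, and $\mathcal{M}^*$ is a model of $\mathsf{P}^*_{\tau}$.
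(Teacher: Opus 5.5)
Your proof is correct and follows essentially the same route as the paper's. The $\mathsf{neg}$ and $\mathsf{neg\_neg}$ rules are checked directly, and for the $\cst{d}$ rules you do the same case analysis on $s(\mathsf{U})$ and on the values of the conjuncts $\Est{d_j,\mathsf{R}_j,\mathsf{U}}$, using Lemmas~\ref{lemma_U_t}, \ref{lemma_U_tst_E_t}, \ref{lemma_U_tst_E_tst}, \ref{tv_preceq} and~\ref{preceq_tv} as the paper does; you only split on $u$ first instead of on the rule kind. Your explicit derivation of $\tv{d\,e_1\cdots e_n}=\tv{d\,d_1\cdots d_n}$ from $\tv{e_j}=\tv{d_j}$, via $\preceq$-monotonicity through $\tv{d_j}$ and Lemma~\ref{preceq_tv}, also justifies a step the paper passes over silently when it replaces $s(\mathsf{R}_i)$ by $\tv{d_i}$.
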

\begin{proof}
Let $s$ be a state and $C$ be a clause of $\mathsf{P}^*_{\tau}$.
If $C$ is $\mathsf{neg}~\mathsf{R} \leftarrow \pnot\mathsf{R}$ or $\mathsf{neg\_neg}~\mathsf{R} \leftarrow \mathsf{neg}(\mathsf{neg}~\mathsf{R})$,
it is easy to see that $\mathcal{M}^*$ satisfies $C$.
Then, suppose that $C$ is of the form
$\cst{d}~\mathsf{U}~\mathsf{R}_1~\cdots~\mathsf{R}_n \leftarrow \mathsf{neg\_neg} (\mathsf{U})\wedge \Est{d_1,\mathsf{R}_1,\mathsf{U}}\wedge\cdots\wedge\Est{d_n,\mathsf{R}_n,\mathsf{U}}$
for some $d \in \mo{\pi}, (d_1,\ldots,d_n)\in arg(\pi)$ such that $d~d_1~\cdots~d_n = \T$.
If $s(\mathsf{U}) = \F$ or $\mwrs{\Est{d_i,\mathsf{R}_i,\mathsf{U}}}{\mathcal{M}^*}{s} = \F$ for some $i$, we are done.
If $s(\mathsf{U}) = \T$ and $\mwrs{\Est{d_i,\mathsf{R}_i,\mathsf{U}}}{\mathcal{M}^*}{s} \in \{\T, \NF\}$ for all $i \in \{1,\ldots,n\}$, then,
by Lemma~\ref{lemma_U_t}, we have that $\tv{s(\mathsf{R}_i)} = \tv{d_i}$ for all $i \in \{1,\ldots,n\}$.
We have $\mwrs{\cst{d}~\mathsf{U}~\mathsf{R}_1~\cdots~\mathsf{R}_n}{\mathcal{M}^*}{s} =
\mathcal{M}^*(\cst{d})~s(\mathsf{U})~s(\mathsf{R}_1)~\cdots~s(\mathsf{R}_n) =
\mathcal{M}^*(\cst{d})~\T~\tv{d_1}~\cdots~\tv{d_n} =
\tv{d~d_1~\cdots~d_n} = \T$.
If $s(\mathsf{U}) = \NF$ and $\mwrs{\Est{d_i,\mathsf{R}_i,\mathsf{U}}}{\mathcal{M}^*}{s} \in \{\T, \NF\}$ for all $i \in \{1,\ldots,n\}$
and $\mwrs{\Est{d_i,\mathsf{R}_i,\mathsf{U}}}{\mathcal{M}^*}{s} = \NF$ for some $i$, then by Lemma~\ref{lemma_U_tst_E_t}, Lemma~\ref{lemma_U_tst_E_tst}, and Lemma~\ref{preceq_tv}, we have that
$\tv{s(\mathsf{R}_i)} = \tv{d_i}$ for all $i \in \{1,\ldots,n\}$.
We have $\mwrs{\cst{d}~\mathsf{U}~\mathsf{R}_1~\cdots~\mathsf{R}_n}{\mathcal{M}^*}{s} = d~s(\mathsf{R}_1)~\cdots~s(\mathsf{R}_n)$.
By Lemma~\ref{tv_preceq} and the monotonicity of $d$, $d~s(\mathsf{R}_1)~\cdots~s(\mathsf{R}_n) \preceq d~\tv{s(\mathsf{R}_1)}~\cdots~\tv{s(\mathsf{R}_n)}$ and
$d~d_1~\cdots~d_n \preceq d~\tv{d_1}~\cdots~\tv{d_n}$.
Therefore, $\mwrs{\cst{d}~\mathsf{U}~\mathsf{R}_1~\cdots~\mathsf{R}_n}{\mathcal{M}^*}{s} \in \{\T, \NF\}$.
If $s(\mathsf{U}) = \NF$ and $\mwrs{\Est{d_1,\mathsf{R}_1,\mathsf{U}}}{\mathcal{M}^*}{s} = \T$ for all $i \in \{1,\ldots,n\}$, then,
by Lemma~\ref{lemma_U_tst_E_t}, we have that $d_i \preceq s(\mathsf{R}_i)$ for all $i \in \{1,\ldots,n\}$.
By the monotonicity of $d$, $d~d_1~\cdots~d_n \preceq d~s(\mathsf{R}_1)~\cdots~s(\mathsf{R}_n) = \mwrs{\cst{d}~\mathsf{U}~\mathsf{R}_1~\cdots~\mathsf{R}_n}{\mathcal{M}^*}{s}$.
Therefore, $\mwrs{\cst{d}~\mathsf{U}~\mathsf{R}_1~\cdots~\mathsf{R}_n}{\mathcal{M}^*}{s} = \T$.

Suppose now that $C$ is of the form
$\cst{d}~\mathsf{U}~\mathsf{R}_1~\cdots~\mathsf{R}_n \leftarrow \mathsf{U}\wedge\Est{d_1,\mathsf{R}_1,\mathsf{U}}\wedge\cdots\wedge\Est{d_n,\mathsf{R}_n,\mathsf{U}}$
for some $d \in \mo{\pi}, (d_1,\ldots,d_n)\in arg(\pi)$ such that $d~d_1~\cdots~d_n = \NF$.
If $s(\mathsf{U}) = \F$ or $\mwrs{\Est{d_1,\mathsf{R}_1,\mathsf{U}}}{\mathcal{M}^*}{s} = \F$ for some $i$, we are done.
If $s(\mathsf{U}) = \T$ and $\mwrs{\Est{d_1,\mathsf{R}_1,\mathsf{U}}}{\mathcal{M}^*}{s} \in \{\T, \NF\}$ for all $i \in \{1,\ldots,n\}$, then,
by Lemma~\ref{lemma_U_t}, we have that $\tv{s(\mathsf{R}_i)} = \tv{d_i}$ for all $i \in \{1,\ldots,n\}$.
We have $\mwrs{\cst{d}~\mathsf{U}~\mathsf{R}_1~\cdots~\mathsf{R}_n}{\mathcal{M}^*}{s} =
\mathcal{M}^*(\cst{d})~s(\mathsf{U})~s(\mathsf{R}_1)~\cdots~s(\mathsf{R}_n) =
\mathcal{M}^*(\cst{d})~\T~\tv{d_1}~\cdots~\tv{d_n} =
\tv{d~d_1~\cdots~d_n} = \T$.
If $s(\mathsf{U}) = \NF$ and $\mwrs{\Est{d_1,\mathsf{R}_1,\mathsf{U}}}{\mathcal{M}^*}{s} \in \{\T, \NF\}$ for all $i \in \{1,\ldots,n\}$,
then, by Lemma~\ref{lemma_U_tst_E_t}, Lemma~\ref{lemma_U_tst_E_tst} and Lemma~\ref{preceq_tv}, we have that
$\tv{s(\mathsf{R}_i)} = \tv{d_i}$ for all $i \in \{1,\ldots,n\}$.
We have $\mwrs{\cst{d}~\mathsf{U}~\mathsf{R}_1~\cdots~\mathsf{R}_n}{\mathcal{M}^*}{s} = d~s(\mathsf{R}_1)~\cdots~s(\mathsf{R}_n)$.
By Lemma~\ref{tv_preceq} and the monotonicity of $d$, $d~s(\mathsf{R}_1)~\cdots~s(\mathsf{R}_n) \preceq d~\tv{s(\mathsf{R}_1)}~\cdots~\tv{s(\mathsf{R}_n)}$ and
$d~d_1~\cdots~d_n \preceq d~\tv{d_1}~\cdots~\tv{d_n}$.
Therefore, $\mwrs{\cst{d}~\mathsf{U}~\mathsf{R}_1~\cdots~\mathsf{R}_n}{\mathcal{M}^*}{s} \in \{\T, \NF\}$.
Since $\mwrs{\mathsf{U}}{\mathcal{M}^*}{s} = \NF$, $C$ is satisfied.
\end{proof}

\begin{lemma}
$\mathcal{M}^*$ is a $\preceq$-minimal model of $\mathsf{P}^*_{\tau}$.
\end{lemma}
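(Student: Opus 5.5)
Suppose $\mathcal{N}$ is a model of $\mathsf{P}^*_{\tau}$ with $\mathcal{N}\preceq\mathcal{M}^*$. We show $\mathcal{N}=\mathcal{M}^*$ by induction on the \order of the type of the constant. For $\mathsf{neg}$ and $\mathsf{neg\_neg}$, which sit at the bottom, the argument is direct. $\mathcal{M}^*(\mathsf{neg})$ never yields $\NF$, so $\mathcal{N}(\mathsf{neg})\preceq\mathcal{M}^*(\mathsf{neg})$ forces equality. For $\mathsf{neg\_neg}$ the only freedom is at the input $\NF$, where $\mathcal{M}^*$ gives $\T$ and $\mathcal{N}$ might give $\NF$. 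The rule $\mathsf{neg\_neg}~\mathsf{R}\leftarrow\mathsf{neg}(\mathsf{neg}~\mathsf{R})$ with $s(\mathsf{R})=\NF$ has body value $\pnot\pnot\NF=\T$ under $\mathcal{N}$, so the head must be $\T$.

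Now take $\cst{d}$ with $d\in\mo{\pi}$, and assume that all constants $\cst{d'}$ of smaller \order already agree with $\mathcal{M}^*$. Fix $u$ and $e_1,\ldots,e_n$ with $\mathcal{N}(\cst{d})\,u\,e_1\cdots e_n=\NF$ and $\mathcal{M}^*(\cst{d})\,u\,e_1\cdots e_n=\T$; we derive a contradiction. Since $\mathcal{M}^*(\cst{d})(\F)$ is constantly $\F$, we have $u\in\{\NF,\T\}$.

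\emph{Case $u=\NF$.} Here the $\mathcal{M}^*$-value is $d\,e_1\cdots e_n=\T$. Use the $\T$-rule for the tuple $(e_1,\ldots,e_n)$ itself, with the state $s(\mathsf{U})=\NF$ and $s(\mathsf{R}_i)=e_i$. By the induction hypothesis and Lemma~\ref{lemma_E_U_tst}, each $\Est{e_i,\mathsf{R}_i,\mathsf{U}}$ evaluates to $\T$ under $\mathcal{N}$. The only constants of lower \order occurring in these expressions are $\cst{\cdot}$, $\mathsf{neg}$ and $\mathsf{neg\_neg}$, all of which already coincide with $\mathcal{M}^*$; note that $\mathsf{neg\_neg}$ is applied to expressions built from variables. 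Also $\mathsf{neg\_neg}(\NF)=\T$. The body is therefore $\T$, so the head $\mathcal{N}(\cst{d})\,\NF\,e_1\cdots e_n$ must be $\T$, a contradiction.

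\emph{Case $u=\T$.} Here $\tv{d}\,e_1\cdots e_n=\T$, so $d\,e_1\cdots e_n\in\{\T,\NF\}$, and the corresponding rule (the $\T$-rule or the $\NF$-rule for this same tuple) exists. Take $s(\mathsf{U})=\T$ and $s(\mathsf{R}_i)=e_i$. Lemma~\ref{lemma_E_U_t} makes every $\Est{}$ conjunct $\T$, and $\mathsf{U}$ and $\mathsf{neg\_neg}(\mathsf{U})$ are both $\T$. The body is $\T$, hence the head is $\T$, again a contradiction.

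The key point, and the step I expect to need the most care, is checking that the body evaluation under $\mathcal{N}$ uses only lower-\order constants, which already agree with $\mathcal{M}^*$ by the induction hypothesis. The subtlety is that the $\Est{}$ expressions apply $\mathsf{V}$ (a variable) to $\cst{d_j}\,\mathsf{U}$, whose types have strictly smaller \order. This strict decrease is exactly what the \order measure guarantees.
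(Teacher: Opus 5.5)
Your argument for the $\cst{d}$ constants is the paper's proof, with the case split on $u$ in place of the paper's split on $d\,e_1\cdots e_n$; this only reorders the same two cases. The ingredients are the same: a width-minimal $\cst{d}$ where $\mathcal{N}$ lowers a $\T$ to $\NF$, the clause for that same tuple fired under $s(\mathsf{U})=u$, $s(\mathsf{R}_i)=e_i$, and Lemmas~\ref{lemma_E_U_tst} and~\ref{lemma_E_U_t} to make the body $\T$. Handling $\mathsf{neg}$ and $\mathsf{neg\_neg}$ first is a real improvement over the paper. The paper tacitly assumes the discrepancy sits at some $\cst{d}$, even though Lemma~\ref{lemma_E_U_tst} only holds once $\mathcal{N}$ agrees with $\mathcal{M}^*$ on these two constants.

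That base step, however, is mis-argued for $\mathsf{neg}$. The fact that $\mathcal{M}^*(\mathsf{neg})$ never returns $\NF$ does not force equality, because $\NF\preceq\T$. For example, the $\preceq$-monotonic relation $\{(\F,\NF),(\NF,\F),(\T,\F)\}$ lies strictly $\preceq$-below $\mathcal{M}^*(\mathsf{neg})$. Equality would follow only if the \emph{smaller} element avoided $\NF$.

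What rules this out is the clause $\mathsf{neg}~\mathsf{R}\leftarrow\pnot\mathsf{R}$ at $s(\mathsf{R})=\F$: its body is $\pnot\F=\T$ regardless of $\mathcal{N}$, so the head must be $\T$. At the inputs $\NF$ and $\T$, the value $\F$ is indeed forced by $\preceq$.

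Similarly, $\mathcal{N}(\mathsf{neg\_neg})$ is a priori also free at input $\T$, not only at $\NF$. You use $\mathcal{N}(\mathsf{neg\_neg})(\T)=\T$ in the case $u=\T$ (for the body conjunct $\mathsf{neg\_neg}(\mathsf{U})$). This follows from $\preceq$-monotonicity of $\mathcal{N}(\mathsf{neg\_neg})$ once its value at $\NF$ is shown to be $\T$. It also follows from the same clause evaluated at $s(\mathsf{R})=\T$.

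With these one-line repairs your proof is complete.
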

\begin{proof}
Suppose, for the sake of contradiction, that there exists a model $\mathcal{N}$ of $\mathsf{P}^*_{\tau}$ such that $\mathcal{N} \prec \mathcal{M}^*$.
Then there exist some $d \in \mo{\pi}$ for some type $\pi$, some $u \in \mo{o}$ and some $(d_1,\ldots,d_n)\in arg(\pi)$ such that
$\mathcal{M}^*(\cst{d})~u~d_1~\cdots~d_n=\T$ and $\mathcal{N}(\cst{d})~u~d_1~\cdots~d_n=\NF$.
We assume that $\pi$ is of minimal \order, that is $\mathcal{N}(\cst{d'}) = \mathcal{M}^*(\cst{d'})$ for all $d'\in\pi'$ such that $\pi'$ has lower \order than $\pi$.
If $u=\F$, then the assumption $\mathcal{M}^*(\cst{d})~u~d_1~\cdots~d_n=\T$ can't be true.
Therefore, $u\in\{\T,\NF\}$.
Also, by the definition of $\mathcal{M}^*$, we have that $d~d_1~\cdots~d_n \in \{\T,\NF\}$.
Firstly suppose that $d~d_1~\cdots~d_n = \T$.
Then, there exists the clause $C = \cst{d}~\mathsf{U}~\mathsf{R}_1~\cdots~\mathsf{R}_n \leftarrow \mathsf{neg\_neg} (\mathsf{U})\wedge \Est{d_1,\mathsf{R}_1,\mathsf{U}}\wedge\cdots\wedge\Est{d_n,\mathsf{R}_n,\mathsf{U}}$ in $\mathsf{P}^*_{\tau}$.
Let $s$ be a state such that $s(\mathsf{U}) = u$ and $s(\mathsf{R}_i) = d_i$ for all $i \in \{1,\ldots,n\}$.
We have $\mwrs{\mathsf{neg\_neg} (\mathsf{U})}{\mathcal{N}}{s} = \T$ and by Lemmata \ref{lemma_E_U_tst} and \ref{lemma_E_U_t},
$\mwrs{\Est{d_i,\mathsf{R}_i,\mathsf{U}}}{\mathcal{N}}{s} = \T$ for all $i \in \{1,\ldots,n\}$.
Since $\mathcal{N}$ is a model of $\mathsf{P}^*_{\tau}$,
$\mwrs{\cst{d}~\mathsf{U}~\mathsf{R}_1~\cdots~\mathsf{R}_n}{\mathcal{N}}{s} = \mathcal{N}(\cst{d})~u~d_1~\cdots~d_n$ must be $\T$,
which is a contradiction.

Then, suppose that $d~d_1~\cdots~d_n = \NF$.
Then, there exists the clause $C = \cst{d}~\mathsf{U}~\mathsf{R}_1~\cdots~\mathsf{R}_n \leftarrow \mathsf{U}\wedge\Est{d_1,\mathsf{R}_1,\mathsf{U}}\wedge\cdots\wedge\Est{d_n,\mathsf{R}_n,\mathsf{U}}$.
By the definition of $\mathcal{M}^*$, since $\mathcal{M}^*(\cst{d})~u~d_1~\cdots~d_n=\T$, $u$ must be $\T$.
We have $\mwrs{\mathsf{U}}{\mathcal{N}}{s} = \T$ and by Lemma~\ref{lemma_E_U_t}, $\mwrs{\Est{d_i,\mathsf{R}_i,\mathsf{U}}}{\mathcal{N}}{s} = \T$ for all $i \in \{1,\ldots,n\}$.
Since $\mathcal{N}$ is a model of $\mathsf{P}^*_{\tau}$,
$\mwrs{\cst{d}~\mathsf{U}~\mathsf{R}_1~\cdots~\mathsf{R}_n}{\mathcal{N}}{s} = \mathcal{N}(\cst{d})~u~d_1~\cdots~d_n$ must be $\T$,
which is a contradiction.
\end{proof}

This is how we construct the constants $\cst{d}$ for each element $d$.
Now we are going to expand $\mathsf{P}^*_{\tau}$ into a program $\mathsf{P}_{\tau}$ which also defines the constants $\ce{d}$.

For any type $\pi$ and each total $e \in \mo{\pi}$ we introduce a predicate constant $\ce{e}: \pi$.

Let $e \in \mo{\pi}$ for some type $\pi$ such that $e$ is total.
We define an expression $\Ee{e,\mathsf{V}} : o$  for each variable $\mathsf{V} : \pi$ as:
\[
\begin{array}{rl}
\Ee{e,\mathsf{V}} = &\bigwedge\{\mathsf{neg\_neg}(\mathsf{V}~\ce{d_1} \cdots \ce{d_n}) \mid e~d_1~\cdots~d_n = \T, d_1,\ldots,d_n \text{ are total}\}\wedge
\\& \bigwedge\{ \mathsf{neg} (\mathsf{V}~\ce{d_1} \cdots \ce{d_n}) \mid e~d_1~\cdots~d_n = \F, d_1,\ldots,d_n \text{ are total}\}
\end{array}
\]

The intuition behind the definition of $\Ee{e,\mathsf{V}}$ is similar to the one of $\Est{d,\mathsf{V},\mathsf{U}}$.
That is, $\Ee{e,\mathsf{V}}$ is true under an interpretation $\mathcal{I}$ and a state $s$ when $s$ assigns the value $e$ to the variable $\mathsf{V}$.
However, $\Ee{e,\mathsf{V}}$ is only defined for total elements $e$. Therefore we don't need to provide an expression $\mathsf{U}$ for $\NF$.

Let $\mathcal{M}$ be the interpretation such that is the least superset of $\mathcal{M}^*$ such that for any total $e \in \mo{\pi}$
$\mathcal{M}(\ce{e}) = e$.

Now we state some lemmata that formalize the intuition behind the expressions $\Ee{e,\mathsf{V}}$.

\begin{lemma}\label{lemma_E}
Let $\mathcal{I}$ be an interpretation of $\mathsf{P}_{\tau}$, $\pi$ be a type and $s$ be a state. For any variable $\mathsf{V} : \pi$ and any total $e\in\mo{\pi}$,
if $\mathcal{I}(\ce{e'})=\mathcal{M}(\ce{e'})$ for any $e'$ of type of smaller \order than $\pi$ and $s(\mathsf{V}) = e$, then $\mwrs{\mathsf{E}_{e,\mathsf{V}}}{\mathcal{I}}{s} = \T$.
\end{lemma}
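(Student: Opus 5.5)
The plan is to unfold the definition of $\Ee{e,\mathsf{V}}$ and check every conjunct separately. The value of a conjunction is the $\leq$-minimum of its conjuncts, so it suffices to show that each conjunct evaluates to $\T$ under $\mathcal{I}$ and $s$. Two preliminary facts are needed.

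\emph{First fact.} The conjuncts mention only the constants $\mathsf{neg}$, $\mathsf{neg\_neg}$ and $\ce{d_i}$, where each $d_i$ is total of type $\pi_i$ and $\pi=\pi_1\to\cdots\to\pi_n\to o$. Since $\order(\pi_i)<\order(\pi)$, the hypothesis gives $\mathcal{I}(\ce{d_i})=\mathcal{M}(\ce{d_i})=d_i$. Hence
\[
\mwrs{\mathsf{V}~\ce{d_1}\cdots\ce{d_n}}{\mathcal{I}}{s}=e~d_1\cdots d_n .
\]

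\emph{Second fact.} I will assume, as in Lemma~\ref{lemma_E_U_tst}, that $\mathcal{I}$ also agrees with $\mathcal{M}^*$ (equivalently $\mathcal{M}$) on $\mathsf{neg}$ and $\mathsf{neg\_neg}$. That is, $\mathcal{I}(\mathsf{neg})(x)=\pnot x$ and $\mathcal{I}(\mathsf{neg\_neg})(x)=\tv{x}$. These two constants have type $o\to o$, of width $2$. This is no larger than the width of any $\pi$ of arrow type, so the agreement hypothesis covers them for $\pi\neq o$ with at least one argument. If it does not literally cover them, I will read the statement's hypothesis as including them, as the paper implicitly does.

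\emph{Case analysis on conjuncts.}
\begin{itemize}
\item For a conjunct $\mathsf{neg\_neg}(\mathsf{V}~\ce{d_1}\cdots\ce{d_n})$ with $e~d_1\cdots d_n=\T$, the first fact gives the value $\tv{\T}=\T$.
\item For a conjunct $\mathsf{neg}(\mathsf{V}~\ce{d_1}\cdots\ce{d_n})$ with $e~d_1\cdots d_n=\F$, the value is $\pnot\F=\T$.
\item If $\pi=o$, then $\Ee{e,\mathsf{V}}$ consists of the single conjunct $\mathsf{neg\_neg}(\mathsf{V})$ or $\mathsf{neg}(\mathsf{V})$, according to whether $e$ is $\T$ or $\F$; totality of $e$ excludes $\NF$. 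The same computation then applies.
\end{itemize}

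\emph{Completeness of the conjunct set.} Because $e$ is total and all $d_i$ are total, $e~d_1\cdots d_n\in\{\F,\T\}$. So the two index sets together cover every tuple of total arguments, and no case was missed. The conjunction is finite because each domain $\mo{\pi_i}$ is finite. An empty conjunction cannot arise for $n\ge 0$ since total elements exist by Corollary~\ref{exists_total}; even if it did, it would evaluate to $\T$.

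\emph{Main obstacle.} The only delicate point is bookkeeping: checking that every constant occurring in $\Ee{e,\mathsf{V}}$ lies within the scope of the agreement hypothesis. The width ordering handles this for the $\ce{d_i}$. For $\mathsf{neg}$ and $\mathsf{neg\_neg}$ I would appeal to $\mathcal{I}$ agreeing with $\mathcal{M}$ on the lower strata, which is the setting in which the lemma is used.
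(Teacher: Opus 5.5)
Your argument is correct and is the same direct unfolding of $\mathsf{E}_{e,\mathsf{V}}$ that the paper dismisses as ``Trivial'': each $d_i$ lies in a type of strictly smaller width than $\pi$, so $\mathcal{I}(\mathsf{c}_{d_i})=d_i$, and every conjunct is $\mathsf{neg\_neg}$ applied to $\T$ or $\mathsf{neg}$ applied to $\F$, hence $\T$. You are also right that agreement of $\mathcal{I}$ with $\mathcal{M}$ on $\mathsf{neg}$ and $\mathsf{neg\_neg}$ has to be added: without it the lemma fails (e.g.\ for $\mathcal{I}(\mathsf{neg\_neg})=\lambda x.\F$), and the paper relies on it implicitly when it applies the lemma. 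However, drop your width remark, because the stated hypothesis only constrains the constants $\mathsf{c}_{e'}$ and $o\to o$ is not of strictly smaller width than itself, so it is the explicit extra assumption that carries the argument.
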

\begin{proof}
Trivial.
\end{proof}

\begin{lemma}\label{lemma_E_2}
Let $\pi$ be a type and $s$ be a state. For any variables $\mathsf{V} : \pi$ and any $e\in\mo{\pi}$,
if $\mwrs{\mathsf{E}_{e,\mathsf{V}}}{\mathcal{M}}{s} \in  \{\T, \NF\}$, then $\tv{s(\mathsf{V})} = \tv{e}$.
\end{lemma}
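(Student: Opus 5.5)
The plan is to show that $\tv{s(\mathsf{V})}$ and $\tv{e}$ agree on every argument tuple, reducing everything to total arguments, where the conjuncts of $\Ee{e,\mathsf{V}}$ give direct information. Throughout, I take $e$ to be total, as in the definition of $\Ee{e,\mathsf{V}}$. Write $\pi=\pi_1\to\cdots\to\pi_n\to o$. Unfolding the definition of the collapse function, two elements $f,g\in\mo{\pi}$ satisfy $\tv{f}=\tv{g}$ iff $\tv{f\,d_1\cdots d_n}=\tv{g\,d_1\cdots d_n}$ for all $(d_1,\ldots,d_n)\in arg(\pi)$. Here $\tv{\cdot}$ at type $o$ just records whether a value is $\F$. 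So it suffices to prove, for every tuple of arguments, that $s(\mathsf{V})\,d_1\cdots d_n=\F$ iff $e\,d_1\cdots d_n=\F$.

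The first step is the reduction to total arguments. Given an arbitrary tuple $(d_1,\ldots,d_n)$, Lemma~\ref{tv_preceq} gives $d_i\preceq\tv{d_i}$, and by Lemma~\ref{collapse_total} each $\tv{d_i}$ is total. Every element of a functional domain is $\preceq$-monotonic, and $\preceq$ on function types is pointwise. Replacing the arguments one at a time therefore yields $f\,d_1\cdots d_n\preceq f\,\tv{d_1}\cdots\tv{d_n}$ for any $f\in\mo{\pi}$. Lemma~\ref{preceq_tv} at type $o$ then shows both sides have the same collapse. Applying this to $f=s(\mathsf{V})$ and to $f=e$, it is enough to treat total tuples $d_1,\ldots,d_n$. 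I expect this step to be the main obstacle: one must check carefully that monotonicity propagates through curried application one argument at a time, using pointwise monotonicity of the partially applied functions.

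The second step handles total arguments. Since $\mwrs{\Ee{e,\mathsf{V}}}{\mathcal{M}}{s}\in\{\T,\NF\}$ and conjunction is $\min_\leq$, every conjunct has value in $\{\T,\NF\}$. Because $\mathcal{M}(\ce{d_i})=d_i$, we have $\mwrs{\mathsf{V}\,\ce{d_1}\cdots\ce{d_n}}{\mathcal{M}}{s}=s(\mathsf{V})\,d_1\cdots d_n$. As $e$ is total, $e\,d_1\cdots d_n\in\{\F,\T\}$, which gives two cases.
\begin{itemize}
\item If $e\,d_1\cdots d_n=\T$, the conjunct $\mathsf{neg\_neg}(\mathsf{V}\,\ce{d_1}\cdots\ce{d_n})$ is present. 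Its value is $\tv{s(\mathsf{V})\,d_1\cdots d_n}$, because $\mathcal{M}$ extends $\mathcal{M}^*$ and $\mathcal{M}^*(\mathsf{neg\_neg})(x)=\tv{x}$. This value lies in $\{\F,\T\}$, so it must be $\T$, i.e.\ $s(\mathsf{V})\,d_1\cdots d_n\neq\F$.
\item If $e\,d_1\cdots d_n=\F$, the conjunct $\mathsf{neg}(\mathsf{V}\,\ce{d_1}\cdots\ce{d_n})$ has value $\pnot(s(\mathsf{V})\,d_1\cdots d_n)\in\{\F,\T\}$. It must therefore be $\T$, forcing $s(\mathsf{V})\,d_1\cdots d_n=\F$.
\end{itemize}
In both cases the collapses agree at total arguments. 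Combined with the first step, this gives agreement at all arguments, hence $\tv{s(\mathsf{V})}=\tv{e}$.
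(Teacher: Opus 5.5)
Your proposal is correct and follows essentially the same route as the paper: you reduce to total argument tuples via $\preceq$-monotonicity and Lemma~\ref{preceq_tv}, where your explicit choice of $\tv{d_i}$ as the total upper bound is exactly how Corollary~\ref{exists_total} is obtained. You then split on whether $e\,d_1\cdots d_n$ is $\T$ or $\F$ and read off $s(\mathsf{V})\,d_1\cdots d_n$ from the corresponding $\mathsf{neg\_neg}$ or $\mathsf{neg}$ conjunct of $\mathsf{E}_{e,\mathsf{V}}$, and taking $e$ total matches the paper's implicit assumption.
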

\begin{proof}
We are going to show that for any $(d_1,\ldots,d_n) \in arg(\pi)$, we have $\tv{e~d_1~\cdots~d_n} = \tv{s(\mathsf{V})~d_1~\cdots~d_n}$.
By Lemma~\ref{exists_total}, there exist some $e_1$,\ldots,$e_n$ such that $e_1,\ldots,e_n$ are total and $d_i \preceq e_i$ for all $i \in \{1,\ldots,n\}$.
By the monotonicity of $e$ and $s(\mathsf{V})$, we have that $e~d_1~\cdots~d_n \preceq e~e_1~\cdots~e_n$ and $s(\mathsf{V})~d_1~\cdots~d_n \preceq s(\mathsf{V})~e_1~\cdots~e_n$.
By Lemma~\ref{preceq_tv}, we have $\tv{e~e_1~\cdots~e_n} = \tv{e~d_1~\cdots~d_n}$ and $\tv{s(\mathsf{V})~e_1~\cdots~e_n} = \tv{s(\mathsf{V})~d_1~\cdots~d_n}$.
Therefore, it suffices to show that $\tv{e~e_1~\cdots~e_n} = \tv{s(\mathsf{V})~e_1~\cdots~e_n}$.
Since $e, e_1,\ldots,e_n$ are total, $e~e_1~\cdots~e_n$ can be either $\T$ or $\F$.
If $e~e_1~\cdots~e_n = \T$, then $\mathsf{neg\_neg} (\mathsf{V}~\mathsf{c}_{e_1} \cdots \mathsf{c}_{e_n})$ is a conjunct of $\mathsf{E}_{e,\mathsf{V}}$.
Since $\mwrs{\mathsf{E}_{e,\mathsf{V}}}{\mathcal{M}}{s} \in  \{\T, \NF\}$,
we have that $\mwrs{\mathsf{neg\_neg} (\mathsf{V}~\mathsf{c}_{e_1} \cdots \mathsf{c}_{e_n})}{\mathcal{M}}{s} \in  \{\T, \NF\}$.
Then, we have $s(\mathsf{V})~e_1~\cdots~e_n = \mwrs{\mathsf{V}~\mathsf{c}_{e_1} \cdots \mathsf{c}_{e_n}}{\mathcal{M}}{s} \in  \{\T, \NF\}$.
If $e~e_1~\cdots~e_n = \F$, then $\mathsf{neg} (\mathsf{V}~\mathsf{c}_{e_1} \cdots \mathsf{c}_{e_n})$ is a conjunct of $\mathsf{E}_{e,\mathsf{V}}$.
Since $\mwrs{\mathsf{E}_{e,\mathsf{V}}}{\mathcal{M}}{s} \in  \{\T, \NF\}$, we have that $\mwrs{\mathsf{neg} (\mathsf{V}~\mathsf{c}_{e_1} \cdots \mathsf{c}_{e_n})}{\mathcal{M}}{s} \in  \{\T, \NF\}$.
Then, we have $s(\mathsf{V})~e_1~\cdots~e_n = \mwrs{\mathsf{V}~\mathsf{c}_{e_1} \cdots \mathsf{c}_{e_n}}{\mathcal{M}}{s} = \F$.
\end{proof}

Now, we are ready to define the program that constructs the constants $\ce{e}$ for all total elements $e$.

For any type $\pi$ and each total $e \in \mo{\pi}$ we define a program $\mathsf{P}_{\ce{e}}$ consisting of the following clauses:
For each $d_1,\ldots,d_n$ such that $e~d_1~\cdots~d_n\in\{\T,\NF\}$
\[
\begin{array}{rl}
  \ce{e}~\mathsf{R}_1~\cdots~\mathsf{R}_n \leftarrow & \Ee{\tv{d_1},\mathsf{R}_1}\wedge\cdots\wedge\Ee{\tv{d_n},\mathsf{R}_n} \wedge \\
  & \bigwedge\{ \cst{e}~(\mathsf{R}_i~\ce{e_{i,1}}~\cdots~\ce{e_{i,m}})~\mathsf{R}_1~\cdots~\mathsf{R}_n \mid d_i~e_{i,1}~\cdots~e_{i,m} \in \{\T,\NF\}\}
\end{array}
\]
$\mathsf{P}_{\tau}$ is the union of $\mathsf{P}^*_{\tau}$ and every $\mathsf{P}_{\ce{e}}$ for all total elements $e \in \mo{\pi}$ for every type $\pi$ with $\order(\pi) \leq \order(\tau)$.

Same as before, we will show that $\mathcal{M}$ is a model of $\mathsf{P}_{\tau}$.
Then, we will show that $\mathcal{M}$ is $\preceq$-minimal.
Since it is also total, $\mathcal{M}$ is an equilibrium model of $\mathsf{P}_{\tau}$.
Then, uniqueness follows from the fact that $\mathsf{P}_{\tau}$ is stratified.

\begin{lemma}
$\mathcal{M}$ is a model of $\mathsf{P}_{\tau}$.
\end{lemma}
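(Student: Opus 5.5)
We need to check every clause of $\mathsf{P}_{\tau}$ under every state $s$. The plan is to split by clause type and then handle the new clauses of $\mathsf{P}_{\ce{e}}$ by cases on how true the body is.

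\textbf{Clauses of $\mathsf{P}^*_{\tau}$.} These mention only $\mathsf{neg}$, $\mathsf{neg\_neg}$ and the $\cst{d}$ constants. On these, $\mathcal{M}$ agrees with $\mathcal{M}^*$. Since $\mathcal{M}^*$ is a model of $\mathsf{P}^*_{\tau}$ by the preceding lemma, these clauses are satisfied.

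\textbf{Clauses of $\mathsf{P}_{\ce{e}}$.} Fix a total $e$, a tuple $d_1,\ldots,d_n$ with $e\,d_1\cdots d_n\in\{\T,\NF\}$, and a state $s$. Write $r_i=s(\mathsf{R}_i)$. The head evaluates to $e\,r_1\cdots r_n$. If the body is $\F$ there is nothing to show, so assume the body is in $\{\T,\NF\}$.

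First, every conjunct $\Ee{\tv{d_i},\mathsf{R}_i}$ is in $\{\T,\NF\}$. Lemma~\ref{lemma_E_2} then gives $\tv{r_i}=\tv{d_i}$. Since $e$ is total, Lemmas~\ref{tv_preceq} and~\ref{collapse_total} together with $\preceq$-monotonicity give
\[
e\,r_1\cdots r_n \preceq e\,\tv{r_1}\cdots\tv{r_n}=e\,\tv{d_1}\cdots\tv{d_n}\succeq e\,d_1\cdots d_n .
\]
The right-hand side $e\,\tv{d_1}\cdots\tv{d_n}$ is total. It lies above $e\,d_1\cdots d_n\in\{\T,\NF\}$, so it equals $\T$. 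Hence $e\,r_1\cdots r_n\in\{\T,\NF\}$. This already settles the case where the body evaluates to $\NF$.

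\textbf{Body evaluates to $\T$.} We must show $e\,r_1\cdots r_n=\T$, and for this we use the conjuncts $\cst{e}\,(\mathsf{R}_i\,\ce{e_{i,1}}\cdots\ce{e_{i,m}})\,\mathsf{R}_1\cdots\mathsf{R}_n$.
\begin{itemize}
\item By the definition of $\mathcal{M}^*$, such a conjunct evaluates to $\mathcal{M}(\cst{e})(u_i)\,r_1\cdots r_n$ with $u_i=r_i\,e_{i,1}\cdots e_{i,m}$.
\item If some $u_i=\NF$, the conjunct equals $e\,r_1\cdots r_n$, so it being $\T$ gives the result.
\item If every relevant $u_i$ is $\T$, that is, the $r_i$ are ``as true as $\tv{d_i}$'' on every total argument where $d_i$ is non-false, we argue that $r_i$ coincides with $\tv{d_i}$ on total arguments. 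Then $e$, being determined up to $\preceq$ by total information, gives $e\,r_1\cdots r_n=e\,\tv{d_1}\cdots\tv{d_n}=\T$.
\end{itemize}

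\textbf{Main obstacle.} The delicate point is the last bullet: making precise that the absence of any $\NF$ among the finitely many probes $u_i$ forces $e\,r_1\cdots r_n=\T$. I would try to prove it by induction on width. The claim to establish is that an element $r$ with $\tv{r}=\tv{d}$ whose values on all total arguments are already total satisfies $e\,\ldots r\ldots\succeq$ the corresponding value at $\tv{d}$, via the $\preceq$-monotonicity of $e$. I expect this to require an auxiliary lemma of that form, proved by induction on width.
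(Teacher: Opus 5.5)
Your handling of the $\mathsf{P}^*_{\tau}$ clauses is fine, and so is your bound $e\,r_1\cdots r_n\in\{\T,\NF\}$ (going through $e\,\tv{d_1}\cdots\tv{d_n}$ works as well as the paper's appeal to Lemma~\ref{preceq_tv}). Your case split is the contrapositive of the paper's.

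The proof is not finished, though. The case where the body is $\T$ and no probe $u_i$ equals $\NF$ is exactly what you leave as the ``main obstacle'', and the route you sketch would not close it. Since $r_i\preceq\tv{r_i}=\tv{d_i}$, $\preceq$-monotonicity of $e$ gives only $e\,r_1\cdots r_n\preceq e\,\tv{d_1}\cdots\tv{d_n}$, which you already have. The $\succeq$ direction you want cannot come from monotonicity, because a total $r$ with $\tv{r}=\tv{d}$ need not satisfy $\tv{d}\preceq r$. For example, the identity on $o\to o$ is total and maps $\NF$ to $\NF$, while its collapse maps $\NF$ to $\T$. An induction on width is not the missing ingredient.

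What is missing is the definition of totality applied to $e$. Suppose the body is $\T$.
\begin{enumerate}
\item Every probe $u_i=r_i\,e_{i,1}\cdots e_{i,m}$ lies in $\{\T,\NF\}$, because $u_i=\F$ would make that conjunct $\F$ via $\mathcal{M}^*(\cst{e})(\F)=\lambda\overline{x}.\F$.
\item If no probe is $\NF$, then $r_i$ returns a value in $\{\F,\T\}$ on every total argument tuple $\overline{e}$. When $d_i\,\overline{e}\neq\F$ this is the probe value $\T$. When $d_i\,\overline{e}=\F$, the equality $\tv{r_i}=\tv{d_i}$ forces $r_i\,\overline{e}=\F$.
\item Hence each $r_i$ is total. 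Since $e$ is total, $e\,r_1\cdots r_n$ is total, and combined with $e\,r_1\cdots r_n\in\{\T,\NF\}$ it equals $\T$.
\end{enumerate}

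The paper runs the same argument in contrapositive form, splitting on the value of the head.
\begin{enumerate}
\item If $e\,r_1\cdots r_n=\NF$, totality of $e$ forces some $r_i$ to be non-total. So there are total $e_{i,1},\ldots,e_{i,m}$ with $r_i\,e_{i,1}\cdots e_{i,m}=\NF$.
\item This tuple indexes a conjunct actually present in the body, because $\tv{d_i}\,e_{i,1}\cdots e_{i,m}=\tv{r_i}\,e_{i,1}\cdots e_{i,m}=\T$, so $d_i\,e_{i,1}\cdots e_{i,m}\neq\F$.
\item That conjunct evaluates to $\mathcal{M}^*(\cst{e})(\NF)\,r_1\cdots r_n=e\,r_1\cdots r_n=\NF$, so the body is at most $\NF$.
\end{enumerate}
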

\begin{proof}
Let $s$ be a state.
Let $C = \mathsf{c}_e~\mathsf{R}_1~\cdots~\mathsf{R}_n \leftarrow \mathsf{E}_{\tv{d_1},\mathsf{R}_1}\wedge\cdots\wedge\mathsf{E}_{\tv{d_n},\mathsf{R}_n} \wedge
\bigwedge\{ \cst{e}~(\mathsf{R}_i~\mathsf{c}_{e_{i,1}}~\cdots~\mathsf{c}_{e_{i,m}})~\mathsf{R}_1~\cdots~\mathsf{R}_n \mid d_i~e_{i,1}~\cdots~e_{i,m} \in \{\T,\NF\}\}$
be a clause in $\mathsf{P}_{\tau}$ for some $e \in \mo{\pi}$ for some type $\pi$ and $d_1,\ldots,d_n\in \arg(\pi)$ such that $e~d_1~\cdots~d_n\in\{\T,\NF\}$.
If $\mwrs{\mathsf{E}_{\tv{d_i},\mathsf{R}_i}}{\mathcal{M}}{s} = \F$ for some $i \in \{1,\ldots,n\}$, then we are done.
So, suppose that $\mwrs{\mathsf{E}_{\tv{d_i},\mathsf{R}_i}}{\mathcal{M}}{s} \in \{\T, \NF\}$ for all $i \in \{1,\ldots,n\}$.
By Lemma~\ref{lemma_E_2}, we have that $\tv{s(\mathsf{R}_i)} = \tv{d_i}$ for all $i \in \{1,\ldots,n\}$.
By the monotonicity of $e$ and Lemma~\ref{preceq_tv}, we have that $\tv{\mwrs{\mathsf{c}_e~\mathsf{R}_1~\cdots~\mathsf{R}_n}{\mathcal{M}}{s}} =
\tv{e~s(\mathsf{R}_1)~\cdots~s(\mathsf{R}_n)} = \tv{e~\tv{d_1}~\cdots~\tv{d_n}}$, which is $\T$ since $e~d_1~\cdots~d_n\in\{\T,\NF\}$.
Therefore, $\mwrs{\mathsf{c}_e~\mathsf{R}_1~\cdots~\mathsf{R}_n}{\mathcal{M}}{s} = e~s(\mathsf{R}_1)~\cdots~s(\mathsf{R}_n)$ is either $\T$ or $\NF$.
If it is $\T$, then we are done.
If it is $\NF$, since $e$ is total, there is some $i$ such that $s(\mathsf{R}_i)$ is not total.
In particular, there exist some total $e_{i,1},\ldots,e_{i,m}$ such that $s(\mathsf{R}_i)~e_{i,1}~\cdots~e_{i,m} = \NF$.
Then $\mwrs{\cst{e}~(\mathsf{R}_i~\mathsf{c}_{e_{i,1}}~\cdots~\mathsf{c}_{e_{i,m}})~\mathsf{R}_1~\cdots~\mathsf{R}_n}{\mathcal{M}}{s} =
\mathcal{M}^*(\cst{e})~(s(\mathsf{R}_i)~e_{i,1}~\cdots~e_{i,m})~s(\mathsf{R}_1)~\cdots~s(\mathsf{R}_n) =
\mathcal{M}^*(\cst{e})~\NF~s(\mathsf{R}_1)~\cdots~s(\mathsf{R}_n) =
e~s(\mathsf{R}_1)~\cdots~s(\mathsf{R}_n) = \NF$.
Since $\cst{e}~(\mathsf{R}_i~\mathsf{c}_{e_{i,1}}~\cdots~\mathsf{c}_{e_{i,m}})~\mathsf{R}_1~\cdots~\mathsf{R}_n$ is a conjunct of the body of $C$,
$\mathcal{M}$ satisfy $C$.
\end{proof}

\begin{lemma}
$\mathcal{M}$ is a $\preceq$-minimal model of $\mathsf{P}_{\tau}$.
\end{lemma}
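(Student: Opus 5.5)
We argue by contradiction, following the pattern of the proof of $\preceq$-minimality of $\mathcal{M}^*$ for $\mathsf{P}^*_{\tau}$. Suppose $\mathcal{N}$ is a model of $\mathsf{P}_{\tau}$ with $\mathcal{N} \prec \mathcal{M}$.

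\emph{Step 1: $\mathcal{N}$ agrees with $\mathcal{M}$ on the $\cst{\cdot}$ part.} The rules of $\mathsf{P}^*_{\tau}$ form a bottom part of $\mathsf{P}_{\tau}$ relative to the splitting set consisting of $\mathsf{neg}$, $\mathsf{neg\_neg}$ and all $\cst{d}$. Hence the restriction of $\mathcal{N}$ to these constants is a model of $\mathsf{P}^*_{\tau}$ that is $\preceq$ the restriction of $\mathcal{M}$, namely $\mathcal{M}^*$. The preceding lemma says $\mathcal{M}^*$ is a $\preceq$-minimal model of $\mathsf{P}^*_{\tau}$, so the two restrictions coincide. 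Consequently the strict difference occurs at some $\ce{e}$.

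\emph{Step 2: choose a minimal witness.} Pick $e\in\mo{\pi}$ total, and arguments $d_1,\ldots,d_n$, such that $\mathcal{M}(\ce{e})\,d_1\cdots d_n = \T$ and $\mathcal{N}(\ce{e})\,d_1\cdots d_n = \NF$. Take $\pi$ of minimal \order, so that $\mathcal{N}(\ce{e'}) = \mathcal{M}(\ce{e'})$ for every $e'$ of a type of smaller \order. This lets us apply Lemma~\ref{lemma_E} to the subexpressions $\Ee{\cdot,\mathsf{R}_i}$, since these only mention $\ce{\cdot}$ of argument types, which have smaller \order.

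\emph{Step 3: fire the matching clause.} Since $e\,d_1\cdots d_n = \T$, the clause of $\mathsf{P}_{\ce{e}}$ indexed by $d_1,\ldots,d_n$ exists. Take $s(\mathsf{R}_i)=d_i$. We want to show the body evaluates to $\T$ in $\mathcal{N}$, which contradicts $\mathcal{N}(\ce{e})\,\overline{d}=\NF$ because $\mathcal{N}$ is a model. The conjuncts are handled as follows.
\begin{itemize}
\item For the conjuncts $\Ee{\tv{d_i},\mathsf{R}_i}$ with $s(\mathsf{R}_i)=d_i$, we cannot use Lemma~\ref{lemma_E} directly, since it requires $s(\mathsf{R}_i)=\tv{d_i}$. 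We instead check each conjunct. Every conjunct is $\mathsf{neg\_neg}(\cdots)$ or $\mathsf{neg}(\cdots)$ applied to $d_i\,\overline{e}$ for total $\overline{e}$. By Lemma~\ref{preceq_tv} together with Lemma~\ref{tv_preceq}, $\tv{d_i\,\overline{e}} = \tv{d_i}\,\overline{e}$. Thus $\mathsf{neg\_neg}$ returns $\T$ whenever $d_i\,\overline{e}\in\{\T,\NF\}$, and $\mathsf{neg}$ returns $\T$ whenever $d_i\,\overline{e}=\F$. So each such conjunct is $\T$.
\item For the conjuncts $\cst{e}\,(\mathsf{R}_i\,\ce{e_{i,1}}\cdots)\,\overline{\mathsf{R}}$, the first argument $u=d_i\,\overline{e_{i}}$ lies in $\{\T,\NF\}$ by the indexing. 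By Step 1 the conjunct equals $\mathcal{M}^*(\cst{e})\,u\,\overline{d}$. If $u=\NF$ this is $e\,\overline{d}=\T$. If $u=\T$ this is $\tv{e}\,\overline{d}=\tv{e\,\overline{d}}=\T$.
\end{itemize}

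The main obstacle is the bookkeeping in Step 2. The $\cst{e}$ constants in the body have types of \emph{larger} \order than $\pi$, but Step 1 settles them uniformly, so no induction is needed for them. Only the $\ce{\cdot}$ constants of argument types need the minimality of the witness. I would also double-check the $\tv{\cdot}$ identity used in the first item of Step 3.
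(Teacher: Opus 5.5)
Your proof is correct and follows the paper's own argument. Like the paper, you obtain agreement on the bottom program $\mathsf{P}^*_{\tau}$, pick a witness $\ce{e}$ of minimal width with $\mathcal{M}(\ce{e})\,\overline{d}=\T$ and $\mathcal{N}(\ce{e})\,\overline{d}=\NF$, and show the body of the clause indexed by $\overline{d}$ is $\T$ under $\mathcal{N}$. Your two variations are sound: you get agreement on $\mathsf{neg}$, $\mathsf{neg\_neg}$ and all $\cst{d}$ directly from the $\preceq$-minimality of $\mathcal{M}^*$ rather than via Lemma~\ref{splitting_submodel}, and you check the conjuncts of $\Ee{\tv{d_i},\mathsf{R}_i}$ directly rather than through the state $s'$ with $s'(\mathsf{R}_i)=\tv{d_i}$. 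The identity $\tv{d_i\,\overline{e}}=\tv{d_i}\,\overline{e}$ you wanted to double-check holds immediately by the recursive definition of $\tv{\cdot}$.
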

\begin{proof}
Suppose, for the sake of contradiction, that there exists a model $\mathcal{N}$ of $\mathsf{P}_{\tau}$ such that $\mathcal{N} \prec \mathcal{M}$.
We can assume that $\mathcal{N}$ is $\preceq$-minimal.
Notice that the set $U$ of all constants $\cst{d}$ is a splitting set of $\mathsf{P}_{\tau}$ and $b_{U}(\mathsf{P_{\tau}}) = \mathsf{P}^*_{\tau}$.
By Lemma~\ref{splitting_submodel}, we have that $\mathcal{M}(\cst{d}) = \mathcal{N}(\cst{d})$ for all $d$.
Then, there exist some total $e \in \mo{\pi}$ for some type $\pi$ and some $(d_1,\ldots,d_n)\in arg(\pi)$ such that
$\mathcal{M}(\mathsf{c}_e)~d_1~\cdots~d_n=\T$ and $\mathcal{N}(\mathsf{c}_e)~d_1~\cdots~d_n=\NF$.
We assume that $\pi$ is of minimal \order, that is $\mathcal{N}(\ce{e'}) = \mathcal{M}(\ce{e'})$ for all $e'\in\pi'$ such that $\pi'$ has lower \order than $\pi$.
By the definition of $\mathcal{M}$, we have that $e~d_1~\cdots~d_n = \T$.
Then there exist in $\mathsf{P}_{\tau}$ the clause $C = \mathsf{c}_e~\mathsf{R}_1~\cdots~\mathsf{R}_n \leftarrow \mathsf{E}_{\tv{d_1},\mathsf{R}_1}\wedge\cdots\wedge\mathsf{E}_{\tv{d_n},\mathsf{R}_n} \wedge
\bigwedge\{ \cst{e}~(\mathsf{R}_i~\mathsf{c}_{e_{i,1}}~\cdots~\mathsf{c}_{e_{i,m}})~\mathsf{R}_1~\cdots~\mathsf{R}_n \mid d_i~e_{i,1}~\cdots~e_{i,m} \in \{\T,\NF\}\}$.
Let $s, s'$ be states such that $s(\mathsf{R}_i) = d_i$ and $s'(\mathsf{R}_i) = \tv{d_i}$ for all $i \in \{1,\ldots,n\}$.
By Lemma~\ref{lemma_E}, we have that $\mwrs{\mathsf{E}_{\tv{d_i},\mathsf{R}_i}}{\mathcal{N}}{s'} = \T$ for all $i \in \{1,\ldots,n\}$.
Notice that $s \preceq s'$. Using $\preceq$-monotonicity, we have that $\mwrs{\mathsf{E}_{\tv{d_i},\mathsf{R}_i}}{\mathcal{N}}{s}$ is either $\T$ or $\NF$.
But, by the definition of the expression $\mathsf{E}_{\tv{d_i},\mathsf{R}_i}$, all of its conjucts are based on $\mathsf{neg}$ or $\mathsf{neg\_neg}$.
For these predicates, $\mathcal{N}$ agrees with $\mathcal{M}$. And they cannot get the value $\NF$ in $\mathcal{M}$.
Therefore, $\mwrs{\mathsf{E}_{\tv{d_i},\mathsf{R}_i}}{\mathcal{N}}{s} = \T$ for all $i \in \{1,\ldots,n\}$.
Also, it is easy to see that $\mwrs{\cst{e}~(\mathsf{R}_i~\mathsf{c}_{e_{i,1}}~\cdots~\mathsf{c}_{e_{i,m}})~\mathsf{R}_1~\cdots~\mathsf{R}_n}{\mathcal{N}}{s} = \T$ for all $i \in \{1,\ldots,n\}$.
But, we have $\mwrs{\mathsf{c}_e~\mathsf{R}_1~\cdots~\mathsf{R}_n}{\mathcal{N}}{s} = \mathcal{N}(\mathsf{c}_e)~d_1~\cdots~d_n = \NF$,
which contradicts the fact that $\mathcal{N}$ is a model of $\mathsf{P}_{\tau}$.
\end{proof}

\section{Proofs of Section~\ref{sec:strong-eq}}\label{se-appendix}
\begin{retheorem}{se-theorem}
For any programs $\mathsf{P}_1$ and $\mathsf{P}_2$ the following two statements are equivalent:
\begin{enumerate}
\item For every program $\mathsf{P}$, $\mathsf{P}_1\cup \mathsf{P}$ and $\mathsf{P}_2\cup \mathsf{P}$
      have the same equilibrium models.
\item $\mathsf{P}_1$ and $\mathsf{P}_2$ have the same models.
\end{enumerate}
\end{retheorem}
\begin{proof}
To see that (2) implies (1), observe that $\mathsf{P}_1$ and $\mathsf{P}_2$ have the same models
and therefore $\mathsf{P}_1\cup \mathsf{P}$ and $\mathsf{P}_2\cup \mathsf{P}$ have the same models,
and consequently the same equilibrium models. We show that (1) implies (2).

Suppose that $\mathsf{P}_1$ and $\mathsf{P}_2$ do not have the same models. Without loss of generality, assume that $\mathsf{P}_1$ has a model ${\cal I}$ that is not a model of $\mathsf{P}_2$. Moreover, without loss of generality, assume that for every total relation $r$ there exists a predicate constant $\mathsf{c}_{r}$, different from all the constants that appear in $\mathsf{P}_1\cup \mathsf{P}_2$, and definable as implied by Theorem~\ref{definability_of_exact}. We assume that ${\cal I}$ assigns to $\mathsf{c}_{r}$ the value $r$. Notice that since $\mathsf{c}_{r}$ is definable by a stratified program, this program has a unique equilibrium model; therefore, for every ${\cal J}\prec {\cal I}$ that satisfies the defining clauses of $\mathsf{c}_{r}$, it holds ${\cal J}(\mathsf{c}_{r}) = {\cal I}(\mathsf{c}_{r}) = r$. We make corresponding assumptions about
non-total relations $r$ and predicate constants $\cst{r}$.

\vspace{0.2cm}
\noindent
\underline{\emph {Case 1:}} ${\cal I}$ is total.  We show how to find a program $\mathsf{P}$ such that ${\cal I}$ is an equilibrium model of  $\mathsf{P}_1\cup \mathsf{P}$ but not an equilibrium model of $\mathsf{P}_2\cup \mathsf{P}$. For every predicate constant $\mathsf{p}: \pi_1 \to \cdots \to \pi_n \to o$ in $\mathsf{P}_1 \cup \mathsf{P}_2$, we create a clause:
$$\mathsf{p}\,\mathsf{R}_1 \cdots \mathsf{R}_n \leftarrow \mbox{$\ce{{\cal I}(\mathsf{p})}\,\mathsf{R}_1 \cdots \mathsf{R}_n$.}$$
We take as $\mathsf{P}$ the union of all such clauses together with the defining clauses for the predicate constants $\ce{{\cal I}(\mathsf{p})}$, as constructed by the proof of Theorem~\ref{definability_of_exact}. It is easy to see that ${\cal I}$ is a model of $\mathsf{P}_1 \cup \mathsf{P}$: ${\cal I}$ is a model of $\mathsf{P}_1$ and also ${\cal I}$ trivially satisfies every rule of the above form. We claim that ${\cal I}$ is an equilibrium model of $\mathsf{P}_1 \cup \mathsf{P}$. Assume, for the sake of contradiction, that there exists some ${\cal J} \prec {\cal I}$ that is a model of $\mathsf{P}_1 \cup \mathsf{P}$. Then, there exists $\mathsf{p}$ and $d_1,\ldots,d_n$ such that ${\cal J}(\mathsf{p})\,d_1\cdots d_n = \NF$ and
${\cal I}(\mathsf{p})\,d_1\cdots d_n = \T$. But then, ${\cal J}$ does not satisfy the rule $\mathsf{p}\,\mathsf{R}_1 \cdots \mathsf{R}_n \leftarrow \mbox{$\ce{{\cal I}(\mathsf{p})}\,\mathsf{R}_1 \cdots \mathsf{R}_n$}$. This is a contradiction because we assumed that ${\cal J}$ is a model of $\mathsf{P}$. Thus, ${\cal I}$ is an equilibrium model of $\mathsf{P}_1 \cup \mathsf{P}$. However, it is not a model of $\mathsf{P}_2$, so it can not be a model of $\mathsf{P}_2\cup\mathsf{P}$.

\vspace{0.2cm}

\noindent
\underline{\emph{Case 2}}: ${\cal I}$ is not total.
We show how to find a program $\mathsf{P}$ such that there exists an interpretation ${\cal K}$
which is an equilibrium model of one of the programs $\mathsf{P}_1\cup \mathsf{P}$ and $\mathsf{P}_2\cup \mathsf{P}$ but not an equilibrium model of the other.
We define the interpretation ${\cal K}$ such that ${\cal K}(\mathsf{p}) = \tv{{\cal I}(\mathsf{p})}$ for any predicate constant $\mathsf{p}$ in $\mathsf{P}_1\cup\mathsf{P}_2$, and ${\cal K}(\mathsf{p})={\cal I}(\mathsf{p})$, otherwise.
Notice that ${\cal K}$ is total and ${\cal I} \preceq {\cal K}$. It is easy to verify that ${\cal K}$ is a model of $\mathsf{P}_1$.
If ${\cal K}$ is not a model of $\mathsf{P}_2$, we can follow the same arguments as in Case 1 for ${\cal K}$ instead of ${\cal I}$.
Therefore, we can assume that ${\cal K}$ is a model of $\mathsf{P}_2$.

Since ${\cal I}$ is not total, there exists a predicate constant $\mathsf{r}$ such that ${\cal I}(\mathsf{r})$ is not total.
Therefore, there exist total elements $e_1,\ldots,e_l$ such that ${\cal I}(\mathsf{r})\,e_1\cdots e_l = \NF$.
We denote by $\mathsf{E}_{U}$ the expression $(\mathsf{r}\,\ce{e_1}\cdots \ce{e_l})$.
We define $\mathsf{P}$ as the program consisting of the rules constructed with the following two ways:
\begin{itemize}
\item For every predicate constant $\mathsf{p}: \pi_1 \to \cdots \to \pi_n \to o$ in $\mathsf{P}_1\cup\mathsf{P}_2$, we create a clause:
      $$\mathsf{p}\,\mathsf{R}_1 \cdots \mathsf{R}_n \leftarrow \mbox{$\cst{{\cal I}(\mathsf{p})}\,\mathsf{E}_{U}~\mathsf{R}_1 \cdots \mathsf{R}_n$.}$$
\item Let $\mathsf{p}: \pi_{1}\to\cdots\to\pi_{n}\to o$ and $\mathsf{q}: \pi'_{1}\to\cdots\to\pi'_{m}\to o$ be predicate constants and let $a_1,\ldots,a_n$ and $b_1,\ldots,b_m$ be elements such that ${\cal I}(\mathsf{p})\,a_1 \cdots a_n = \NF$ and ${\cal I}(\mathsf{q})\,b_1 \cdots b_m = \NF$. Then, we construct the rule:
      $$\mathsf{q}\,\mathsf{R}_1 \cdots \mathsf{R}_m \leftarrow \mbox{$\mathsf{p}\,(\cst{a_1}~\mathsf{E}_{U})\cdots (\cst{a_n}~\mathsf{E}_{U}),(\ce{eq_{\pi'_{1}}}\,\mathsf{R}_1~(\cst{b_1}~\mathsf{E}_{U})), \ldots, (\ce{eq_{\pi'_{m}}}\,\mathsf{R}_m~(\cst{b_m}~\mathsf{E}_{U}))$.}$$
      where for any argument type $\pi$, $eq_{\pi}\in(\mo{\pi}\to \mo{\pi}\to \mo{o})$ is the relation such that for any $a,b \in \mo{\pi}$,
      \[
        eq_{\pi}\,a~b = \begin{cases}
          \T,  & \mbox{if $\tv{a}=\tv{b}$} \\
          \F, & \mbox{otherwise}
        \end{cases}
      \]
\end{itemize}
It is easy to see that $eq_{\pi}$ is a total $\preceq$-monotonic relation and therefore, by Theorem~\ref{definability_of_exact}, definable as a $\HOL$ program.

We claim that ${\cal K}$ satisfies every rule of $\mathsf{P}$. Assume we are given a rule of the form:
$$\mathsf{p}\,\mathsf{R}_1 \cdots \mathsf{R}_n \leftarrow \mbox{$\cst{{\cal I}(\mathsf{p})}\,\mathsf{E}_{U}~\mathsf{R}_1 \cdots \mathsf{R}_n$.}$$
and a state $s$ with $s(\mathsf{R}_i)=d_i$.
We have $\mwrs{\mathsf{E}_{U}}{{\cal K}}{s} = \T$.
It is easy to see that both the head and the body of the above rule evaluates to $\tv{{\cal I}(\mathsf{p})\,d_1 \cdots d_n}$ under
${\cal K}$ and $s$.
On the other hand, assume we are given a rule of the form:
      $$\mathsf{q}\,\mathsf{R}_1 \cdots \mathsf{R}_m \leftarrow \mbox{$\mathsf{p}\,(\cst{a_1}~\mathsf{E}_{U})\cdots (\cst{a_n}~\mathsf{E}_{U}),(\ce{eq_{\pi'_{1}}}\,\mathsf{R}_1~(\cst{b_1}~\mathsf{E}_{U})), \ldots, (\ce{eq_{\pi'_{m}}}\,\mathsf{R}_m~(\cst{b_m}~\mathsf{E}_{U}))$.}$$
If the body of this rule evaluates under ${\cal K}$ and $s$ to $\T$ or $\NF$, then $\tv{s(\mathsf{R}_i)}=\tv{b_i}$ for any $i\in\{1,\ldots,m\}$.
In that case, ${\cal K}(\mathsf{q})\,s(\mathsf{R}_1)\cdots s(\mathsf{R}_m) = \tv{{\cal I}(\mathsf{q})\,s(\mathsf{R}_1)\cdots s(\mathsf{R}_m)} = \tv{{\cal I}(\mathsf{q})\,b_1\cdots b_m} = \T$.

Consequently, ${\cal K}$ is a model of $\mathsf{P}_2 \cup \mathsf{P}$. Actually, it is an equilibrium model of $\mathsf{P}_2 \cup \mathsf{P}$; to see this, consider any model ${\cal J}$ of $\mathsf{P}_2 \cup \mathsf{P}$ such that ${\cal J} \prec {\cal K}$.
Let $\mathsf{p}$ be a predicate constant, $d_1,\ldots,d_n$ be some elements such that ${\cal I}(\mathsf{p})\,d_1\cdots d_n = \T$ and $s$ be a state with $s(\mathsf{R}_i)=d_i$.
Notice that $\mwrs{\cst{{\cal I}(\mathsf{p})}\,\mathsf{E}_{U}~\mathsf{R}_1 \cdots \mathsf{R}_n}{{\cal J}}{s} = \T$.
Since ${\cal J}$ is a model of $\mathsf{P}$, we must have $\mwrs{\mathsf{p}\,\mathsf{R}_1 \cdots \mathsf{R}_n}{{\cal J}}{s} = \T$, ie., ${\cal J}(\mathsf{p})\,d_1\cdots d_n = \T$.
This latter fact combined with the facts ${\cal I} \preceq {\cal K}$ and ${\cal J} \prec {\cal K}$ gives ${\cal I} \preceq {\cal J}$.
But it can not be the case that ${\cal I} = {\cal J}$, because, by assumption, ${\cal I}$ is not a model of $\mathsf{P}_2$ (while ${\cal J}$ is).
Therefore, ${\cal I} \prec {\cal J} \prec {\cal K}$.
Take a predicate constant $\mathsf{p}$ and elements $a_1,\ldots,a_n$ such that ${\cal I}(\mathsf{p})\,a_1\cdots a_n=\NF$ and ${\cal J}(\mathsf{p})\,a_1\cdots a_n=\T$ and a predicate constant $\mathsf{q}$ and elements $b_1,\ldots,b_m$ such that ${\cal J}(\mathsf{q})\,b_1\cdots b_m=\NF$ and ${\cal K}(\mathsf{q})\,b_1\cdots b_m=\T$.
Let $s$ be a state with $s(\mathsf{R}_i)=b_i$ for any $i\in\{1,\ldots,m\}$. Consider the rule:
      $$\mathsf{q}\,\mathsf{R}_1 \cdots \mathsf{R}_m \leftarrow \mbox{$\mathsf{p}\,(\cst{a_1}~\mathsf{E}_{U})\cdots (\cst{a_n}~\mathsf{E}_{U}),(\ce{eq_{\pi'_{1}}}\,\mathsf{R}_1~(\cst{b_1}~\mathsf{E}_{U})), \ldots, (\ce{eq_{\pi'_{m}}}\,\mathsf{R}_m~(\cst{b_m}~\mathsf{E}_{U}))$.}$$
But ${\cal J}$ does not satisfy this implication, contrary to the assumption that it is a model of $\mathsf{P}_2\cup \mathsf{P}$.

We show now that ${\cal K}$ is not an equilibrium model of $\mathsf{P}_1 \cup \mathsf{P}$. Consider the model
${\cal I}$ of $\mathsf{P}_1$. It is easy to see that ${\cal I}$ satisfies all elements of $\mathsf{P}$. Therefore, ${\cal I}$ is a model
of $\mathsf{P}_1 \cup \mathsf{P}$. But ${\cal I}$ is different than ${\cal K}$, because ${\cal K}$ is a model of
$\mathsf{P}_2$ while ${\cal I}$ is not. By definition of ${\cal K}$, ${\cal I} \prec {\cal K}$; therefore ${\cal K}$
is not an equilibrium model of $\mathsf{P}_1 \cup \mathsf{P}$.
\end{proof}

\section{Proofs of Section~\ref{sec:properties}}
\newcommand{\mtrue}{\mathit{true}}
\newcommand{\mfalse}{\mathit{false}}
\newcommand{\mwrstwo}[3]{\mathfrak{B}\lsem#1\rsem_{#3}(#2)}
\newcommand{\mwrc}[3]{\mathfrak{C}\lsem#1\rsem_{#3}(#2)}
\newcommand{\lfp}{\mathrm{lfp}}
\newcommand{\TP}{T_{\mathsf{P}}}
\newcommand{\AP}{A_{\mathsf{P}}}
\newcommand{\ATP}{\AP}
\newcommand{\motwo}[1]{\mathfrak{B}\mo{#1}}

In this section we give detailed proofs for the correspondence
between equilibrium models and stable models defined by~\cite{iclp24}.
We also include the definition of stable models for completeness.


\begin{definition}\label{def:orders_two-valued}
We define the two-valued meaning $\motwo{\pi}$ of a
type $\pi$, as follows:
\begin{itemize}
  \item $\motwo{\bool} = \{\mathit{true}, \mathit{false}\}$.
  The partial order $\leq_\bool$ is the one induced by $\mfalse <_\bool \mtrue$.
  \item $\motwo{\pi_1 \to \pi_2} = \motwo{\pi_1} \to \motwo{\pi_2}$.
        The partial order $\leq_{\pi_1 \to \pi_2}$ is defined as follows:
        for all $f,g \in \motwo{\pi_1 \to \pi_2}$,
        $f \leq_{\pi_1 \to \pi_2} g$ iff $f(d) \leq_{\pi_2} g(d)$ for all $d \in \motwo{\pi_1}$.
\end{itemize}
\end{definition}
For every predicate type $\pi$, $(\motwo{\pi}, \leq_\pi)$ is a complete lattice.
In the following, we denote by $\bigvee_{\leq_{\pi}}$ and
$\bigwedge_{\leq_{\pi}}$ the corresponding lub and glb operations of the above
lattice.

We also define a projection and expansion function that map
elements of the two-valued domain $\motwo{\pi}$ to the three-valued domain $\mo{\pi}$
and  vice-versa for each type.

\begin{definition}
Let $d \in \mo{\pi}$ and $e \in \motwo{\pi}$.
The projection $\pi(d) \in \motwo{\pi}$ and the
expansion $\epsilon(e) \in \mo{\pi}$ are defined recursively as follows.
\begin{itemize}
\item if $d \in \mo{o}$, $\pi(d) = \mfalse$ if $d = \F$, $\pi(d) = \mtrue$ otherwise and $\epsilon(\mfalse) = \F$, $\epsilon(\mtrue) = \T$.
\item if $d \in \mo{\pi \to \pi}$, $\pi(d) = \lambda x. \pi(d(\epsilon(x)))$ and $\epsilon(e) = \lambda x. \epsilon(e(\pi(x)))$.
\end{itemize}
\end{definition}

The following lemma establishes the relation between $\pi$ and $\epsilon$.
\begin{lemma}\label{pi_epsilon}\label{epsilon_pi}
Let $d \in \mo{\pi}$ and $e \in \motwo{\pi}$. Then,
\begin{itemize}
\item $\pi(\epsilon(e)) = e$.
\item $d \preceq \epsilon(\pi(d))$.
\end{itemize}
\end{lemma}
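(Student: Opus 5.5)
Both items go by simultaneous structural induction on the type $\pi$, proving them together for each type because the definitions of $\pi$ and $\epsilon$ at arrow types refer to each other. At the base type $\bool$ both facts are read off the defining tables. For the first, $\pi(\epsilon(\mfalse))=\pi(\F)=\mfalse$ and $\pi(\epsilon(\mtrue))=\pi(\T)=\mtrue$. For the second, $\epsilon(\pi(\F))=\F$, while for $d\in\{\NF,\T\}$ we get $\epsilon(\pi(d))=\T$, and $\NF\preceq\T$, $\T\preceq\T$ by the definition of the justification ordering.

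For an arrow type $\rho\to\sigma$, assume both items hold at $\rho$ and at $\sigma$. For the first item, take $e\in\motwo{\rho\to\sigma}$ and $x\in\motwo{\rho}$. Unfolding the definitions gives
\[
\pi(\epsilon(e))(x)=\pi\bigl(\epsilon(e)(\epsilon(x))\bigr)=\pi\bigl(\epsilon(e(\pi(\epsilon(x))))\bigr).
\]
By the first item at $\rho$, $\pi(\epsilon(x))=x$, so this equals $\pi(\epsilon(e(x)))$. By the first item at $\sigma$, that is $e(x)$.

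For the second item, take $d\in\mo{\rho\to\sigma}$ and $y\in\mo{\rho}$. Then $\epsilon(\pi(d))(y)=\epsilon\bigl(\pi(d(\epsilon(\pi(y))))\bigr)$. By the second item at $\rho$, $y\preceq\epsilon(\pi(y))$. Since $d$ is $\preceq$-monotonic, $d(y)\preceq d(\epsilon(\pi(y)))$.

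To finish the second item we need an auxiliary fact: $\pi$ is constant on $\preceq$-comparable elements, i.e.\ $d_1\preceq d_2$ implies $\pi(d_1)=\pi(d_2)$. This is the analogue of Lemma~\ref{preceq_tv} for the two-valued projection. It is proved by a separate induction on types. At $\bool$ it is immediate, since $\NF$ and $\T$ both project to $\mtrue$. At arrow types, $\pi(d_i)(x)=\pi(d_i(\epsilon(x)))$ and $d_1(\epsilon(x))\preceq d_2(\epsilon(x))$, so the hypothesis at the codomain applies.

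Using this fact, $\pi(d(y))=\pi(d(\epsilon(\pi(y))))$. Hence $\epsilon(\pi(d))(y)=\epsilon(\pi(d(y)))$. The second item at $\sigma$ gives $d(y)\preceq\epsilon(\pi(d(y)))$, so $d(y)\preceq\epsilon(\pi(d))(y)$ for every $y$. That is exactly $d\preceq\epsilon(\pi(d))$ in the pointwise order.

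Two routine side points remain. One is that $\epsilon(e)$ really lies in $\mo{\rho\to\sigma}$, i.e.\ is $\preceq$-monotonic. This follows from the auxiliary fact: if $y_1\preceq y_2$ then $\pi(y_1)=\pi(y_2)$, so $\epsilon(e)(y_1)=\epsilon(e)(y_2)$. The other is that $\pi(d)$ is well typed, which is immediate.

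The main obstacle is the second item at arrow types. Monotonicity of $d$ alone compares $d(y)$ with $d(\epsilon(\pi(y)))$, but that is not the argument at which $\epsilon(\pi(d))$ is evaluated after the definitions are unfolded. The auxiliary lemma that $\pi$ collapses $\preceq$-related elements to the same value is what closes this gap, so I would establish it first, before the main induction.
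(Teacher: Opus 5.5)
Your proof is correct. Your first item is exactly the paper's argument. For the second item you take a slightly different route.

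The paper does not need your auxiliary collapse fact, and the ``obstacle'' you describe is not really there. After unfolding, $\epsilon(\pi(d))(y)=\epsilon\bigl(\pi(d(\epsilon(\pi(y))))\bigr)$, so $d$ is evaluated at precisely the point $\epsilon(\pi(y))$ that monotonicity reaches. The paper applies the hypothesis at $\sigma$ to the element $d(\epsilon(\pi(y)))$, which gives $d(\epsilon(\pi(y)))\preceq\epsilon(\pi(d))(y)$. It then chains this with $d(y)\preceq d(\epsilon(\pi(y)))$ by transitivity of $\preceq$, and it is done.

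Your route instead uses the collapse fact to move the evaluation point back from $\epsilon(\pi(y))$ to $y$, and then applies the hypothesis at $\sigma$ to $d(y)$. This costs a separate induction but is perfectly sound. In exchange you get the sharper identity $\epsilon(\pi(d))(y)=\epsilon(\pi(d(y)))$. The collapse fact also shows that $\epsilon(e)$ is $\preceq$-monotonic, i.e.\ that $\epsilon$ actually lands in the semantic domain; the paper never checks this well-definedness point. The paper states the collapse fact later as the second part of Lemma~\ref{pi_relations}, and uses it in exactly your way inside Lemma~\ref{pi_expression}.

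Your simultaneous induction is harmless but unnecessary: each item only invokes itself at smaller types.
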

\begin{proof}
We will use induction on the structure of $\pi$.
If $\pi = o$ then the statement is trivial.
Now suppose that $\pi = \pi_1 \to \pi_2$ and the statement holds for $\pi_1$ and $\pi_2$.
Then, we have
\[
\begin{array}{rcll}
  \pi(\epsilon(e)) & = & \lambda x. \pi(\epsilon(e)(\epsilon(x)))  & \mbox{definition of $\pi$} \\
  & = & \lambda x. \pi((\lambda y. \epsilon(e(\pi(y))))(\epsilon(x)))  &\mbox{definition of $\epsilon$} \\
  & = & \lambda x. \pi(\epsilon(e(\pi(\epsilon(x))))) & \mbox{$\beta$-reduction} \\
  & = & \lambda x. \pi(\epsilon(e(x))) & \mbox{induction hypothesis for $\pi_1$} \\
  & = & \lambda x. e(x) & \mbox{induction hypothesis for $\pi_2$} \\
  & = & e &
\end{array}
\]

The second claim is also by induction on the structure of $\pi$.
If $\pi = o$ then the statement is trivial.
Now suppose that $\pi = \pi_1 \to \pi_2$ and the statement holds for $\pi_1$ and $\pi_2$.
Then, we have
\[
\begin{array}{rcll}
\epsilon(\pi(d)) & = & \lambda x. \epsilon(\pi(d)(\pi(x))) & \mbox{definition of $\epsilon$} \\
& = & \lambda x. \epsilon((\lambda y. \pi(d(\epsilon(y))))(\pi(x))) & \mbox{definition of $\pi$} \\
& = & \lambda x. \epsilon(\pi(d(\epsilon(\pi(x))))) & \mbox{$\beta$-reduction} \\
& \succeq & \lambda x. d(\epsilon(\pi(x))) & \mbox{induction hypothesis for $\pi_1$} \\
& \succeq & \lambda x. d(x) & \mbox{induction hypothesis for $\pi_2$} \\
& = & d
\end{array}
\]
This concludes the proof.
\end{proof}

We now characterize the behaviors of $\pi$ and $\epsilon$
with respect to the defined orderings.

\begin{lemma}\label{pi_relations}
Let $x, y \in \mo{\pi}$.
\begin{itemize}
\item if $x \leq y$ then $\pi(x) \leq \pi(y)$
\item if $x \preceq y$ then $\pi(x) = \pi(y)$
\end{itemize}
\end{lemma}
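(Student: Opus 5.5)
We prove both items simultaneously by induction on the structure of the type $\pi$. The base case $\pi = o$ is a direct check on the three truth values. By definition $\pi(\F)=\mfalse$ and $\pi(\NF)=\pi(\T)=\mtrue$. The first item is then immediate, since $\pi$ on $\mo{o}$ is monotone from $\F<\NF<\T$ to $\mfalse<\mtrue$. For the second item, the only nontrivial instance of $x \preceq y$ is $x=\NF$, $y=\T$, and both of these project to $\mtrue$.

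For the inductive step, let the type be $\pi_1 \to \pi_2$ and take $x,y \in \mo{\pi_1\to\pi_2}$. By definition $\pi(x) = \lambda z.\,\pi(x(\epsilon(z)))$ for $z \in \motwo{\pi_1}$, and likewise for $y$. Since the order on $\motwo{\pi_1\to\pi_2}$ is pointwise, it suffices to compare $\pi(x(\epsilon(z)))$ with $\pi(y(\epsilon(z)))$ for each fixed $z$.

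If $x \leq y$, the pointwise definition of $\leq_{\pi_1\to\pi_2}$ gives $x(\epsilon(z)) \leq_{\pi_2} y(\epsilon(z))$. The induction hypothesis for $\pi_2$ then gives $\pi(x(\epsilon(z))) \leq \pi(y(\epsilon(z)))$, so $\pi(x)\leq\pi(y)$.

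If $x \preceq y$, the pointwise definition gives $x(\epsilon(z)) \preceq_{\pi_2} y(\epsilon(z))$. The induction hypothesis for $\pi_2$ then gives equality of the projections at every $z$, so $\pi(x)=\pi(y)$ by extensionality.

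The only point needing care is that the induction is on the result type $\pi_2$ alone. No hypothesis on $\pi_1$ is needed, because the argument $\epsilon(z)$ is the same on both sides. In particular, neither Lemma~\ref{pi_epsilon} nor $\preceq$-monotonicity of $x,y$ is required here. I do not expect any real obstacle; the main thing is to keep the two orders' pointwise definitions straight.
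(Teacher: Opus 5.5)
Your proof is correct and follows the paper's argument. It uses structural induction with a trivial base case, and in the step it evaluates both functions at $\epsilon(z)$, then applies the pointwise definition of the order and the hypothesis for $\pi_2$. The paper writes out only the $\leq$ case and calls the $\preceq$ case ``similar''; you spell that case out explicitly.
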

\begin{proof}
We will show the first statement. The proof of the second statement is similar.
We will use induction on the structure of $\pi$.
If $\pi = o$ then the statement is trivial.
Now suppose that $\pi = \pi_1 \to \pi_2$ and the statement holds for $\pi_1$ and $\pi_2$.
Let $x,y \in \mo{\pi_1 \to \pi_2}$ with $x \leq y$.
Let $e \in \motwo{\pi_1}$. By the definition of $\leq$, we have $x(\epsilon(e)) \leq y(\epsilon(e))$.
Using the induction hypothesis, we have $\pi(x(\epsilon(e))) \leq \pi(y(\epsilon(e)))$.
Since this holds for all $e \in \motwo{\pi_1}$, we have $\lambda e. \pi(x(\epsilon(e))) \leq \lambda e. \pi(y(\epsilon(e)))$.
We conclude that $\pi(x) \leq \pi(y)$.
\end{proof}

\begin{lemma}\label{epsilon_relations}
Let $x,y \in \motwo{\pi}$. If $x \leq y$ then $\epsilon(x) \leq \epsilon(y)$.
\end{lemma}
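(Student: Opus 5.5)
We argue by induction on the structure of $\pi$, following the pattern of Lemma~\ref{pi_relations}. For the base case $\pi = o$, the elements of $\motwo{o}$ are $\mfalse$ and $\mtrue$, with $\epsilon(\mfalse)=\F$ and $\epsilon(\mtrue)=\T$. Since $\F <_o \T$, the statement follows by inspecting the three pairs with $x \leq y$.

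For the inductive step, let $\pi = \pi_1 \to \pi_2$ and assume the statement holds for $\pi_2$; we may also use Lemma~\ref{pi_relations} for $\pi_1$. Take $x \leq y$ in $\motwo{\pi_1 \to \pi_2}$. By the definition of $\leq_{\pi_1\to\pi_2}$ on $\mo{\pi_1\to\pi_2}$, it suffices to show $\epsilon(x)(d) \leq \epsilon(y)(d)$ for every $d \in \mo{\pi_1}$. Unfolding the definition, $\epsilon(x)(d) = \epsilon(x(\pi(d)))$ and $\epsilon(y)(d) = \epsilon(y(\pi(d)))$. Here $\pi(d) \in \motwo{\pi_1}$, and $x \leq y$ pointwise gives $x(\pi(d)) \leq_{\pi_2} y(\pi(d))$ in $\motwo{\pi_2}$. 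The induction hypothesis for $\pi_2$ then yields $\epsilon(x(\pi(d))) \leq \epsilon(y(\pi(d)))$. Since $d$ was arbitrary, $\epsilon(x) \leq \epsilon(y)$.

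The only subtle point is that $\epsilon(x)$ and $\epsilon(y)$ must be elements of $\mo{\pi_1\to\pi_2}$, that is, $\preceq$-monotonic, for the comparison to take place in that domain. This is implicit in the definition of $\epsilon$, but it can be checked quickly. If $d \preceq d'$, then $\pi(d) = \pi(d')$ by the second part of Lemma~\ref{pi_relations}, so $\epsilon(x)(d) = \epsilon(x)(d')$, and monotonicity holds trivially. With this remark the proof is a routine induction, and there is no real obstacle.
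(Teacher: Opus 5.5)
Your proof is correct and follows the paper's argument essentially verbatim: induction on $\pi$, then for each $d$ in the three-valued domain of $\pi_1$ you use $x(\pi(d)) \leq y(\pi(d))$ and the induction hypothesis for $\pi_2$ to get the pointwise comparison of $\epsilon(x)$ and $\epsilon(y)$. Your extra check that $\epsilon(x)$ is $\preceq$-monotonic, via the second part of Lemma~\ref{pi_relations}, is a sound addition; the paper leaves that point implicit.
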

\begin{proof}
We will use induction on the structure of $\pi$.
If $\pi = o$ then the statement is trivial.
Now suppose that $\pi = \pi_1 \to \pi_2$ and the statement holds for $\pi_1$ and $\pi_2$.
Let $x,y \in \motwo{\pi_1 \to \pi_2}$ with $x \leq y$.
Let $d \in \mo{\pi_1}$. By the definition of $\leq$, we have $x(\pi(d)) \leq y(\pi(d))$.
Using the induction hypothesis, we have $\epsilon(x(\pi(d))) \leq \epsilon(y(\pi(d)))$.
Since this holds for all $d \in \mo{\pi_1}$, we have $\lambda d. \epsilon(x(\pi(d))) \leq \lambda d. \epsilon(y(\pi(d)))$.
We conclude that $\epsilon(x) \leq \epsilon(y)$.
\end{proof}

An interpretation $I$ of a program $\mathsf{P}$ assigns to
each predicate constant $\mathsf{p} : \pi$ of $\mathsf{P}$,
an element $I(\mathsf{p}) \in \motwo{\pi}$.
We will denote the set of the interpretations of a program $\mathsf{P}$
with $H_\mathsf{P}$. We define a partial order on $H_\mathsf{P}$ as
follows: for all $I, J \in H_\mathsf{P}$, $I \leq J$
iff for every predicate constant $\mathsf{p} : \pi$ that appears in
$\mathsf{P}$, $I(\mathsf{p}) \leq_{\pi} J(\mathsf{p})$.
It can be shown that
$(H_{\mathsf{P}}, \leq)$ is a complete lattice.
A state  $s$ of a program $\mathsf{P}$ is a function that
assigns to each variable $\mathsf{R}$ of type $\pi$, an element
$s(\mathsf{R}) \in \motwo{\pi}$. We denote the set of states with
$S_\mathsf{P}$.

\begin{definition}\label{standard-semantics}
Let $\mathsf{P}$ be an $\HOL$ program, $I$ an interpretation of $\mathsf{P}$, and
$s$ a state of $\mathsf{P}$. Then, the two-valued semantics of expressions is
defined as follows:
\begin{enumerate}
  \item $\mwrstwo{\mathsf{R}}{I}{s} = s(\mathsf{R})$
  \item $\mwrstwo{\mathsf{p}}{I}{s} = I(\mathsf{p})$
  \item $\mwrstwo{(\mathsf{E}_1\ \mathsf{E}_2)}{I}{s} = \mwrstwo{\mathsf{E}_1}{I}{s}\ \mwrstwo{\mathsf{E}_2}{I}{s}$
  \item $\mwrstwo{(\sim \mathsf{E})}{I}{s} = \sim \mwrstwo{\mathsf{E}}{I}{s}$
  where $\sim \mtrue = \mfalse$ and $\sim \mfalse = \mtrue$
\end{enumerate}
\end{definition}

A two-valued interpretation $M$ of $\mathsf{P}$
is a \emph{two-valued model} of $\mathsf{P}$ if and only if for every rule
$\mathsf{p}\ \mathsf{R}_1\ldots\mathsf{R}_n \lrule \mathsf{L}_1, \ldots, \mathsf{L}_m$ in $\mathsf{P}$
and for every state $s$,
$\bigwedge_{\leq_o}\{ \mwrstwo{\mathsf{L}_i}{M}{s} \mid i \in \{1,\ldots,m\} \} \leq_o \mwrstwo{\mathsf{p}\ \mathsf{R}_1\ldots\mathsf{R}_n}{M}{s}$.


The definitions of $\pi$ and $\epsilon$ can be extended naturally to interpretations and states.
If $I$ is an interpretation of $\mathsf{P}$
then $\pi(I)(\mathsf{p}) = \pi(I(\mathsf{p}))$ for every predicate symbol $\mathsf{p}$.
Similarly, if $s$ is a state of $\mathsf{P}$ then $\pi(s)(\mathsf{R}) = \pi(s(\mathsf{R}))$
for every variable $\mathsf{R}$. In the following lemma we establish the connection of the
semantic evaluation and the functions $\pi$ and $\epsilon$ for every expression $\mathsf{E}$.

\begin{lemma}\label{pi_expression}\label{epsilon_expression}
Let $\mathsf{E}$ be a $\HOL$ expression. Then,
\begin{itemize}
\item $\pi(\mwrs{\mathsf{E}}{\mathcal{I}}{s}) = \mwrstwo{\mathsf{E}}{\pi(\mathcal{I})}{\pi(s)}$.
\item $\epsilon(\mwrstwo{\mathsf{E}}{I}{s}) = \mwrs{\mathsf{E}}{\epsilon(I)}{\epsilon(s)}$.
\end{itemize}
\end{lemma}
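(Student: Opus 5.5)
The plan is to prove both items simultaneously by structural induction on the expression $\mathsf{E}$, following the four clauses that define $\mwrs{\cdot}{\mathcal{I}}{s}$. There is one preliminary point. The projection $\pi$ and expansion $\epsilon$ are defined on domain elements, but for the application case I will need them to behave homomorphically with respect to function application. I therefore first isolate two auxiliary identities, each valid for $f \in \mo{\pi_1\to\pi_2}$, $d\in\mo{\pi_1}$, $g\in\motwo{\pi_1\to\pi_2}$ and $e\in\motwo{\pi_1}$:
\[
\pi(f(d)) = \pi(f)(\pi(d)), \qquad \epsilon(g)(\epsilon(e)) = \epsilon(g(e)).
\]
The second identity is immediate. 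By definition $\epsilon(g)(\epsilon(e)) = \epsilon(g(\pi(\epsilon(e))))$, and Lemma~\ref{pi_epsilon} gives $\pi(\epsilon(e))=e$.

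The first identity is where justification monotonicity enters, and I expect it to be the main obstacle. By definition $\pi(f)(\pi(d)) = \pi(f(\epsilon(\pi(d))))$. Lemma~\ref{epsilon_pi} gives $d\preceq\epsilon(\pi(d))$. Because $f$ is $\preceq$-monotonic, this yields $f(d)\preceq f(\epsilon(\pi(d)))$. The second clause of Lemma~\ref{pi_relations} then gives $\pi(f(d)) = \pi(f(\epsilon(\pi(d))))$, as required. This is exactly the step that would fail for arbitrary, non-monotonic 3-valued relations, so the argument genuinely depends on the restriction of the domains. I also need to check that $\epsilon(g)$ really lies in $\mo{\pi_1\to\pi_2}$, i.e.\ that it is $\preceq$-monotonic. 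This follows because $x\preceq y$ implies $\pi(x)=\pi(y)$ by Lemma~\ref{pi_relations}, so $\epsilon(g)(x)=\epsilon(g)(y)$.

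With these identities available, the induction is routine. For a variable $\mathsf{R}$ both items hold by the definitions $\pi(s)(\mathsf{R})=\pi(s(\mathsf{R}))$ and $\epsilon(s)(\mathsf{R})=\epsilon(s(\mathsf{R}))$. Constants $\mathsf{p}$ are handled the same way, using $\pi(\mathcal{I})$ and $\epsilon(I)$. For an application $(\mathsf{E}_1\,\mathsf{E}_2)$ I apply the relevant auxiliary identity and then the induction hypothesis to $\mathsf{E}_1$ and $\mathsf{E}_2$. For example,
\[
\pi(\mwrs{\mathsf{E}_1}{\mathcal{I}}{s}(\mwrs{\mathsf{E}_2}{\mathcal{I}}{s})) = \pi(\mwrs{\mathsf{E}_1}{\mathcal{I}}{s})(\pi(\mwrs{\mathsf{E}_2}{\mathcal{I}}{s})) = \mwrstwo{\mathsf{E}_1}{\pi(\mathcal{I})}{\pi(s)}\,\mwrstwo{\mathsf{E}_2}{\pi(\mathcal{I})}{\pi(s)},
\]
and the $\epsilon$ case is symmetric.

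The remaining case is negation $(\pnot\mathsf{E})$, where $\mathsf{E}$ has type $o$, and it reduces to a check on $V$. For the first item I verify $\pi(\pnot v) = \pnot\pi(v)$ for each $v\in\{\F,\NF,\T\}$. The interesting value is $\NF$: there $\pnot\NF=\F$ projects to $\mfalse$, while $\pi(\NF)=\mtrue$ and $\pnot\mtrue=\mfalse$, so the two sides agree. For the second item, $\epsilon$ maps onto $\{\F,\T\}$, where $\pnot$ coincides with two-valued negation, so $\epsilon(\pnot b)=\pnot\epsilon(b)$. Together these cases close the induction.
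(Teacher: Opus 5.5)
Your proposal is correct and follows essentially the same route as the paper. Both use structural induction in which the only substantive case is application, handled for $\pi$ via $d\preceq\epsilon(\pi(d))$, $\preceq$-monotonicity of the function and the second clause of Lemma~\ref{pi_relations}, and for $\epsilon$ via $\pi(\epsilon(e))=e$. Your factoring into two auxiliary identities, together with the explicit checks of the negation case and of the $\preceq$-monotonicity of $\epsilon(g)$, only spells out what the paper leaves implicit.
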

\begin{proof}
The first statement is proved by induction on the structure of $\mathsf{E}$.
The only interesting case is when $\mathsf{E} = (\mathsf{E}_1~\mathsf{E}_2)$.
Then,
\[
\begin{array}{rcll}
  \mwrstwo{\mathsf{E}_1~\mathsf{E}_2}{\pi(\mathcal{I})}{\pi(s)} & = & \mwrstwo{\mathsf{E}_1}{\pi(\mathcal{I})}{\pi(s)}(\mwrstwo{\mathsf{E}_2}{\pi(\mathcal{I})}{\pi(s)}) & \\
  & = & \pi(\mwrs{\mathsf{E}_1}{\mathcal{I}}{s})(\pi(\mwrs{\mathsf{E}_2}{\mathcal{I}}{s})) & \mbox{induction hypothesis} \\
  & = & \lambda x. \pi(\mwrs{\mathsf{E}_1}{\mathcal{I}}{s}(\epsilon(x)))(\pi(\mwrs{\mathsf{E}_2}{\mathcal{I}}{s})) & \mbox{definition of $\pi$} \\
  & = & \pi(\mwrs{\mathsf{E}_1}{\mathcal{I}}{s}(\epsilon(\pi(\mwrs{\mathsf{E}_2}{\mathcal{I}}{s})))) & \mbox{$\beta$-reduction} \\
\end{array}
\]
By Lemma~\ref{epsilon_pi}, we have $\mwrs{\mathsf{E}_2}{\mathcal{I}}{s} \preceq \epsilon(\pi(\mwrs{\mathsf{E}_2}{\mathcal{I}}{s}))$.
By $\preceq$-monotonicity, $ \mwrs{\mathsf{E}_1}{\mathcal{I}}{s}(\mwrs{\mathsf{E}_2}{\mathcal{I}}{s}) \preceq \mwrs{\mathsf{E}_1}{\mathcal{I}}{s}(\epsilon(\pi(\mwrs{\mathsf{E}_2}{\mathcal{I}}{s})))$.
By Lemma~\ref{pi_relations}, we have $\pi(\mwrs{\mathsf{E}_1}{\mathcal{I}}{s}(\mwrs{\mathsf{E}_2}{\mathcal{I}}{s})) = \pi(\mwrs{\mathsf{E}_1}{\mathcal{I}}{s}(\epsilon(\pi(\mwrs{\mathsf{E}_2}{\mathcal{I}}{s}))))$.
We conclude that $\pi(\mwrs{\mathsf{E}_1~\mathsf{E}_2}{\mathcal{I}}{s}) = \mwrstwo{\mathsf{E}_1~\mathsf{E}_2}{\pi(\mathcal{I})}{\pi(s)}$.

The second statement is also proved by induction on the structure of $\mathsf{E}$.
The only interesting case is when $\mathsf{E} = (\mathsf{E}_1~\mathsf{E}_2)$.
Then,
\[
  \begin{array}{rcll}
    \mwrs{\mathsf{E}_1~\mathsf{E}_2}{\epsilon(I)}{\epsilon(s)} & = & \mwrs{\mathsf{E}_1}{\epsilon(I)}{\epsilon(s)}(\mwrs{\mathsf{E}_2}{\epsilon(I)}{\epsilon(s)}) & \\
    & = & \epsilon(\mwrstwo{\mathsf{E}_1}{I}{s})(\epsilon(\mwrstwo{\mathsf{E}_2}{I}{s})) & \mbox{induction hypothesis} \\
    & = & \lambda x. \epsilon(\mwrstwo{\mathsf{E}_1}{I}{s}(\pi(x)))(\epsilon(\mwrstwo{\mathsf{E}_2}{I}{s})) & \mbox{definition of $\epsilon$} \\
    & = & \epsilon(\mwrstwo{\mathsf{E}_1}{I}{s}(\pi(\epsilon(\mwrstwo{\mathsf{E}_2}{I}{s})))) & \mbox{$\beta$-reduction} \\
    & = & \epsilon(\mwrstwo{\mathsf{E}_1}{I}{s}(\mwrstwo{\mathsf{E}_2}{I}{s})) & \mbox{Lemma~\ref{pi_epsilon}} \\
    & = & \epsilon(\mwrstwo{\mathsf{E}_1~\mathsf{E}_2}{I}{s}) &
  \end{array}
\]
This concludes the proof.
\end{proof}

\begin{lemma}\label{pi_model}
Let $\mathsf{P}$ be a $\HOL$ program.
If $\mathcal{M}$ is a model of $\mathsf{P}$ then $\pi(\mathcal{M})$ is a two-valued model of $\mathsf{P}$.
\end{lemma}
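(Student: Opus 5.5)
To prove Lemma~\ref{pi_model}, fix a rule $\mathsf{p}\ \mathsf{R}_1\cdots\mathsf{R}_n \lrule \mathsf{L}_1,\ldots,\mathsf{L}_m$ of $\mathsf{P}$ and an arbitrary two-valued state $s$. I have to show that if every $\mwrstwo{\mathsf{L}_i}{\pi(\mathcal{M})}{s}$ is $\mtrue$, then $\mwrstwo{\mathsf{p}\ \mathsf{R}_1\cdots\mathsf{R}_n}{\pi(\mathcal{M})}{s}=\mtrue$. The strategy is to move the whole question into the three-valued world using the state $\epsilon(s)$. By the first part of Lemma~\ref{epsilon_pi}, $\pi(\epsilon(s)) = s$ pointwise. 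So Lemma~\ref{pi_expression} gives
$\mwrstwo{\mathsf{E}}{\pi(\mathcal{M})}{s} = \pi(\mwrs{\mathsf{E}}{\mathcal{M}}{\epsilon(s)})$ for every expression $\mathsf{E}$.

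The first step is to extend this identity from expressions to literals. Positive literals are expressions, so they are already covered. For a negative literal $(\pnot\mathsf{E})$ with $\mathsf{E}:\bool$, I check the identity on $V$ directly: $\pi(\pnot x) = \pnot\pi(x)$ holds for every $x\in\{\F,\NF,\T\}$. Indeed $\pnot\NF=\F$ and $\pi(\NF)=\mtrue$, so both sides equal $\mfalse$; the cases $\F$ and $\T$ are immediate. This is the only point where the HT-specific negation matters, and the clause $\pnot\NF = \F$ is exactly what makes it go through. (Lemma~\ref{pi_expression} may already cover this case, since item 4 is part of the expression semantics; either way the check is trivial.)

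The second step finishes the argument. Since $\mathcal{M}$ is a model, applying the definition to the state $\epsilon(s)$ gives $\min_\leq\{\mwrs{\mathsf{L}_i}{\mathcal{M}}{\epsilon(s)}\} \leq \mwrs{\mathsf{p}\ \overline{\mathsf{R}}}{\mathcal{M}}{\epsilon(s)}$. By the first part of Lemma~\ref{pi_relations}, $\pi$ is $\leq$-monotone at type $\bool$. Moreover, $\pi$ commutes with $\min_\leq$ on $V$, because it is a monotone map between chains. Applying $\pi$ to the inequality and using the translation from the first step yields
$\bigwedge_{\leq_o}\{\mwrstwo{\mathsf{L}_i}{\pi(\mathcal{M})}{s}\} \leq_o \mwrstwo{\mathsf{p}\ \overline{\mathsf{R}}}{\pi(\mathcal{M})}{s}$. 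Empty bodies are no exception: the minimum is then $\T$, which $\pi$ sends to $\mtrue$.

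The main obstacle is ensuring that $\epsilon(s)$ is a legitimate state, that is, that each $\epsilon(e)$ is $\preceq$-monotonic and hence lies in $\mo{\pi}$. I would handle this by induction on types, using the second part of Lemma~\ref{pi_relations}. If $x\preceq y$ then $\pi(x)=\pi(y)$, so $\epsilon(e)(x)=\epsilon(e(\pi(x)))=\epsilon(e)(y)$. Thus $\epsilon(e)$ is constant on $\preceq$-comparable arguments and is trivially monotone. Once this is in place, the rest is bookkeeping.
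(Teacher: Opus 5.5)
Your proof is correct and follows the paper's argument: evaluate the rule in $\mathcal{M}$ under the state $\epsilon(s)$, push the inequality through $\pi$ with Lemma~\ref{pi_relations}, translate with Lemma~\ref{pi_expression}, and conclude with $\pi(\epsilon(s))=s$ from Lemma~\ref{pi_epsilon}. Your extra checks are points the paper leaves implicit, namely negative literals, $\pi$ commuting with the minimum of the body, and $\epsilon(e)$ being $\preceq$-monotonic so that $\epsilon(s)$ is a legitimate state; all of them are sound, as long as the type induction for the last one is carried out simultaneously with the well-definedness of $\pi$ and its $\preceq$-invariance at smaller types.
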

\begin{proof}
Let $\mathsf{H} \leftarrow \mathsf{B}$ be a clause of $\mathsf{P}$ and $s$ be a two valued state.
Since $\mathcal{M}$ is a model of $\mathsf{P}$, we have $\mwrs{\mathsf{B}}{\mathcal{M}}{\epsilon(s)} \leq \mwrs{\mathsf{H}}{\mathcal{M}}{\epsilon(s)}$.
By Lemma~\ref{pi_relations}, $\pi(\mwrs{\mathsf{B}}{\mathcal{M}}{\epsilon(s)}) \leq \pi(\mwrs{\mathsf{H}}{\mathcal{M}}{\epsilon(s)})$.
By Lemma~\ref{pi_expression}, $\mwrs{\mathsf{B}}{\pi(\mathcal{M})}{\pi(\epsilon(s))} \leq \mwrs{\mathsf{H}}{\pi(\mathcal{M})}{\pi(\epsilon(s))}$.
By Lemma~\ref{pi_epsilon}, $\pi(\epsilon(s)) = s$. Therefore, $\mwrs{\mathsf{B}}{\pi(\mathcal{M})}{s} \leq \mwrs{\mathsf{H}}{\pi(\mathcal{M})}{s}$.
Since this holds for all clauses and two-valued states, $\pi(\mathcal{M})$ is a two-valued model of $\mathsf{P}$.
\end{proof}

\begin{lemma}\label{epsilon_model}
Let $\mathsf{P}$ be a $\HOL$ program.
If $M$ is a two-valued model of $\mathsf{P}$ then $\epsilon(M)$ is a model of $\mathsf{P}$.
\end{lemma}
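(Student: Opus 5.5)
The plan mirrors the proof of Lemma~\ref{pi_model}, with the roles of $\pi$ and $\epsilon$ exchanged. We may use that $\epsilon(M)$ is an interpretation in the sense of Definition~\ref{def:interpretation}, i.e.\ that $\epsilon(e)$ is $\preceq$-monotonic for every two-valued $e$. This holds because, for a function type, $\epsilon(e)=\lambda x.\epsilon(e(\pi(x)))$. By Lemma~\ref{pi_relations}, $x\preceq y$ implies $\pi(x)=\pi(y)$, so $\epsilon(e)(x)=\epsilon(e)(y)$. Hence $\epsilon(e)$ is trivially $\preceq$-monotonic. (A similar induction shows that $\pi(d)$ is well defined.)

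Fix a clause $\mathsf{p}\,\mathsf{R}_1\cdots\mathsf{R}_n \leftarrow \mathsf{L}_1,\ldots,\mathsf{L}_m$ of $\mathsf{P}$ and an arbitrary three-valued state $s$. We must show that $\min_\leq\{\mwrs{\mathsf{L}_i}{\epsilon(M)}{s}\} \leq \mwrs{\mathsf{p}\,\overline{\mathsf{R}}}{\epsilon(M)}{s}$.

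The difficulty is that $s$ need not be of the form $\epsilon(s')$, so the second part of Lemma~\ref{epsilon_expression} cannot be applied directly. The key observation is that evaluation under $\epsilon(M)$ depends on $s$ only through $\pi(s)$. Concretely, I will show by induction on expressions $\mathsf{E}$ that
\[
\mwrs{\mathsf{E}}{\epsilon(M)}{s}=\mwrs{\mathsf{E}}{\epsilon(M)}{\epsilon(\pi(s))}
\]
whenever $\mathsf{E}$ is a constant, an application, or a negation. For a bare variable $\mathsf{R}$ the two sides may differ, and we only have $s(\mathsf{R})\preceq\epsilon(\pi(s(\mathsf{R})))$ by Lemma~\ref{epsilon_pi}.
\begin{itemize}
\item \emph{Application case.} For $(\mathsf{E}_1\,\mathsf{E}_2)$, whether the head $\mathsf{E}_1$ is a constant or a variable, the function value applied is either $\epsilon(\cdot)$ of something, hence constant on $\preceq$-related arguments, or a variable value $s(\mathsf{Q})$. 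In the variable case we use $\preceq$-monotonicity together with the fact that $\pi$ collapses $\preceq$-related elements.
\item \emph{Simpler route.} Rather than equality, it suffices to track two facts. First, $\pi(\mwrs{\mathsf{E}}{\epsilon(M)}{s})=\mwrstwo{\mathsf{E}}{M}{\pi(s)}$; this is the first part of Lemma~\ref{pi_expression} combined with $\pi(\epsilon(M))=M$ from Lemma~\ref{pi_epsilon}. Second, the head value is total: $\mwrs{\mathsf{p}\,\overline{\mathsf{R}}}{\epsilon(M)}{s}=\epsilon(M(\mathsf{p}))\,s(\mathsf{R}_1)\cdots s(\mathsf{R}_n)=\epsilon(M(\mathsf{p})\,\pi(s(\mathsf{R}_1))\cdots\pi(s(\mathsf{R}_n)))\in\{\F,\T\}$, which follows by unfolding $\epsilon$ at each argument.
\end{itemize}

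With these two facts the conclusion follows by cases on the head value.
\begin{itemize}
\item If the head value is $\T$, the inequality is trivial.
\item If the head value is $\F$, then $\mwrstwo{\mathsf{p}\,\overline{\mathsf{R}}}{M}{\pi(s)}=\mathit{false}$. Since $M$ is a two-valued model, some literal $\mathsf{L}_i$ satisfies $\mwrstwo{\mathsf{L}_i}{M}{\pi(s)}=\mathit{false}$. By the first fact, $\pi(\mwrs{\mathsf{L}_i}{\epsilon(M)}{s})=\mathit{false}$, i.e.\ $\mwrs{\mathsf{L}_i}{\epsilon(M)}{s}=\F$. Hence the body's minimum is $\F$, as required.
\end{itemize}
Negative literals need no separate treatment, since Lemma~\ref{pi_expression} covers $\pnot\mathsf{E}$; this uses that $\pi(\pnot v)=\pnot\pi(v)$ holds on $V$, which is immediate ($\pnot\NF=\F$ and $\pi(\NF)=\mathit{true}$).

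The main obstacle is the totality claim for the head when the arguments $s(\mathsf{R}_i)$ are non-total. It rests entirely on the shape of $\epsilon$ at function types, namely that $\epsilon(e)\,x$ depends only on $\pi(x)$. I would verify this first, by a short induction on the number of arguments, before assembling the case analysis.
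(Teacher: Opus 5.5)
Your proof is correct. Its final form (the ``simpler route'') is organized differently from the paper's.

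\textbf{How the two routes differ.} The paper splits on the \emph{body} and goes through the auxiliary state $\epsilon(\pi(s))$:
\begin{enumerate}
\item It takes the two-valued model inequality at $\pi(s)$.
\item It lifts this to $\epsilon(\pi(s))$ using Lemma~\ref{epsilon_relations} and the second part of Lemma~\ref{pi_expression}.
\item It uses $s\preceq\epsilon(\pi(s))$ and $\preceq$-monotonicity of evaluation in the state, for both body and head. This shows that a non-$\F$ body at $s$ forces a non-$\F$ head at $s$.
\item Since heads under $\epsilon(M)$ cannot be $\NF$, the head must be $\T$.
\end{enumerate}
You instead split on the \emph{head}, which you first show is total, and stay at $s$ throughout. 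In the $\F$ case you project down with the first part of Lemma~\ref{pi_expression} and $\pi(\epsilon(M))=M$. You then use that $\pi$ on $V$ sends exactly $\F$ to $\mathit{false}$, so a false literal at $\pi(s)$ is an $\F$ literal at $s$.

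\textbf{What each route buys.} Yours avoids the detour through $\epsilon(\pi(s))$ and any explicit appeal to state-monotonicity of body evaluation; that work is hidden inside Lemma~\ref{pi_expression}. It also makes the lemma a genuine mirror of Lemma~\ref{pi_model}. The paper's version uses only the $\epsilon$-side lemmas. Your preliminary check that $\epsilon(e)$ is $\preceq$-monotonic, because $\epsilon(e)\,x$ depends only on $\pi(x)$, is something the paper takes for granted. It is a real prerequisite, since Lemma~\ref{pi_expression} invokes $\preceq$-monotonicity of the values of $\epsilon(M)$.

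\textbf{One correction.} Delete the first bullet (``Application case''). The claimed equality between evaluation at $s$ and at $\epsilon(\pi(s))$ is false for applications headed by a variable. Take $\mathsf{Q}:o\to o$ and $s(\mathsf{Q})=\lambda x.\NF$, which is $\preceq$-monotonic. Then $\mathsf{Q}\,\mathsf{R}$ evaluates to $\NF$ under $s$. Under $\epsilon(\pi(s))$ the variable $\mathsf{Q}$ denotes $\epsilon(\pi(\lambda x.\NF))=\lambda x.\T$, so the same expression evaluates to $\T$. Monotonicity gives only $\preceq$ there, not equality. Your final argument never uses this claim, so nothing else needs to change.
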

\begin{proof}
Let $\mathsf{p}~\mathsf{R}_1\cdots\mathsf{R}_n \leftarrow \mathsf{B}$ be a clause of $\mathsf{P}$ and $s$ be a state.
Since $M$ is a two-valued model of $\mathsf{P}$, we have $\mwrstwo{\mathsf{B}}{M}{\pi(s)} \leq \mwrstwo{\mathsf{p}~\mathsf{R}_1\cdots\mathsf{R}_n}{M}{\pi(s)}$.
By Lemma~\ref{epsilon_relations}, we have $\epsilon(\mwrstwo{\mathsf{B}}{M}{\pi(s)}) \leq \epsilon(\mwrstwo{\mathsf{p}~\mathsf{R}_1\cdots\mathsf{R}_n}{M}{\pi(s)})$.
By Lemma~\ref{epsilon_expression}, we have $\mwrs{\mathsf{B}}{\epsilon(M)}{\epsilon(\pi(s))} \leq \mwrs{\mathsf{p}~\mathsf{R}_1\cdots\mathsf{R}_n}{\epsilon(M)}{\epsilon(\pi(s))}$.
We take cases for $\mwrs{\mathsf{B}}{\epsilon(M)}{s}$.
If it is $\F$, we are done.
Suppose that $\mwrs{\mathsf{B}}{\epsilon(M)}{s} \in \{\T,\NF\}$.
By Lemma~\ref{epsilon_pi}, we have $s \preceq \epsilon(\pi(s))$.
Therefore, $\mwrs{\mathsf{B}}{\epsilon(M)}{s} \preceq \mwrs{\mathsf{B}}{\epsilon(M)}{\epsilon(\pi(s))}$,
so that $\mwrs{\mathsf{B}}{\epsilon(M)}{\epsilon(\pi(s))} \in \{\T,\NF\}$.
By the previous inequality, we have $\NF \leq \mwrs{\mathsf{p}~\mathsf{R}_1\cdots\mathsf{R}_n}{\epsilon(M)}{\epsilon(\pi(s))}$.
Since $s \preceq \epsilon(\pi(s))$, we have $\mwrs{\mathsf{p}~\mathsf{R}_1\cdots\mathsf{R}_n}{\epsilon(M)}{s} \in \{\T,\NF\}$.
But $\mwrs{\mathsf{p}~\mathsf{R}_1\cdots\mathsf{R}_n}{\epsilon(M)}{s} = \epsilon(M(\mathsf{p}))(s(\mathsf{R}_1)\cdots(s(\mathsf{R}_n)))$.
By the definition of $\epsilon$, this cannot take the value $\NF$, so that $\mwrs{\mathsf{p}~\mathsf{R}_1\cdots\mathsf{R}_n}{\epsilon(M)}{s} = \T$.
In any case $\mwrs{\mathsf{B}}{\epsilon(M)}{s} \leq \mwrs{\mathsf{p}~\mathsf{R}_1\cdots\mathsf{R}_n}{\epsilon(M)}{s}$.
Since this holds for all clauses and states, $\epsilon(M)$ is a model of $\mathsf{P}$.
\end{proof}

The following lemma shows that $\pi$ is injective on the set of
equilibrium models and thus distinct equilibrium models have distinct projections.

\begin{lemma}\label{pi_injective_on_equilibriums}
Let $\mathsf{P}$ be a program and $\mathcal{M}_1$, $\mathcal{M}_2$
be equilibrium models of $\mathsf{P}$.
If $\pi(\mathcal{M}_1) = \pi(\mathcal{M}_2)$ then $\mathcal{M}_1 = \mathcal{M}_2$.
\end{lemma}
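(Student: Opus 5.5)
Since $\mathcal{M}_1$ and $\mathcal{M}_2$ are total and $\preceq$-minimal models, my plan is to show directly that $\mathcal{M}_1$ and $\mathcal{M}_2$ are both $\preceq$-comparable to a common model, and then invoke minimality. The natural candidate is the ``collapse'' $\epsilon(\pi(\mathcal{M}_1)) = \epsilon(\pi(\mathcal{M}_2))$. Call this common interpretation $\mathcal{K}$. By Lemma~\ref{epsilon_pi} we have $\mathcal{M}_j \preceq \mathcal{K}$ for $j=1,2$. By Lemma~\ref{pi_model}, $\pi(\mathcal{M}_1)$ is a two-valued model, and by Lemma~\ref{epsilon_model}, $\mathcal{K}$ is a model of $\mathsf{P}$. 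That alone does not give $\mathcal{M}_1=\mathcal{M}_2$, since minimality only bounds from below. So I would instead build a model lying $\preceq$-below both.

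Define $\mathcal{N}$ pointwise: $\mathcal{N}(\mathsf{p})\,\overline{d} = \mathcal{M}_1(\mathsf{p})\,\overline{d}$ if $\mathcal{M}_1(\mathsf{p})\,\overline{d}=\mathcal{M}_2(\mathsf{p})\,\overline{d}$, and $\NF$ otherwise. First I check that at every input the two values are $\preceq$-comparable to a common value. Since $\mathcal{M}_j(\mathsf{p})\,\overline{d} \preceq \mathcal{K}(\mathsf{p})\,\overline{d}$, both values lie in $\{\NF,\T\}$ or both equal $\F$. Hence where they differ, one is $\NF$ and the other $\T$, and $\mathcal{N}$ is exactly the pointwise $\preceq$-meet. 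Consequently $\mathcal{N}\preceq\mathcal{M}_1$ and $\mathcal{N}\preceq\mathcal{M}_2$.

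Next I verify that $\mathcal{N}(\mathsf{p})$ is $\preceq$-monotonic, so that $\mathcal{N}$ is a legitimate interpretation. This is a case check: if $\overline{d}\preceq\overline{d'}$, then each $\mathcal{M}_j$ value increases. If the output at $\overline{d}$ is $\F$, both values are $\F$, and $\F$ persists at $\overline{d'}$ because $\F$ is $\preceq$-isolated. Otherwise the output lies in $\{\NF,\T\}$, and the meet of the larger values dominates.

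The main step is showing that $\mathcal{N}$ is a model. Given a rule and a state $s$, the body under $\mathcal{N}$ is $\preceq$ the body under each $\mathcal{M}_j$, by monotonicity of evaluation in the interpretation. I will prove a small lemma, by induction on expressions, that $\mwrs{\mathsf{E}}{\mathcal{N}}{s}$ equals the $\preceq$-meet of $\mwrs{\mathsf{E}}{\mathcal{M}_1}{s}$ and $\mwrs{\mathsf{E}}{\mathcal{M}_2}{s}$ at type $o$. I expect this to be the obstacle, because application of meets is not obviously the meet of applications; the arguments themselves change.

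If that lemma fails, I fall back on a case split on the body value. If the body under $\mathcal{N}$ is $\F$, the rule holds trivially. If it is $\T$, then both $\mathcal{M}_j$ bodies are $\T$ and both heads are $\T$. Since the head $\mathsf{p}\,\overline{\mathsf{R}}$ only uses $s$, the $\mathcal{N}$-head is then $\T$. If the body is $\NF$, the $\mathcal{M}_j$ bodies lie in $\{\NF,\T\}$, so both heads are $\geq\NF$, and the meet head is $\geq\NF$. Note that the head needs no induction, since it is $\mathcal{N}(\mathsf{p})$ applied to state values; only the body requires the $\preceq$-comparison, which follows from $\preceq$-monotonicity of all domain elements. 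Thus $\mathcal{N}$ is a model, and minimality forces $\mathcal{N}=\mathcal{M}_1=\mathcal{M}_2$.
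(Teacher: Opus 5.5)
Your proof is correct and is essentially the paper's. Your $\mathcal{N}$ is exactly the pointwise meet $\mathcal{M}_1 \wedge \mathcal{M}_2$ used there, shown to be a model lying $\preceq$-below both equilibrium models so that minimality forces equality. Your fallback case split on the body value (rather than the paper's split on the head value) already suffices, so the meet-of-evaluations lemma you worried about is never needed. You also justify two points the paper leaves implicit: that the models agree on $\F$-values (via $\mathcal{M}_j \preceq \epsilon(\pi(\mathcal{M}_j))$), and that the meet is itself $\preceq$-monotonic, hence a legitimate interpretation.
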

\begin{proof}
 Since $\pi(\mathcal{M}_1) = \pi(\mathcal{M}_2)$,
$\mathcal{M}_1$ and $\mathcal{M}_2$ must agree on the $\F$ values,
i.e., for all predicates symbols
$\mathsf{p}$ and values $d_1, \ldots, d_n$, $\mathcal{M}_1(\mathsf{p})\ d_1 \cdots d_n = \F$
if and only if $\mathcal{M}_2(\mathsf{p})\ d_1 \cdots d_n = \F$.
Consider the interpretation $\mathcal{M}' = \mathcal{M}_1 \wedge \mathcal{M}_2$.
$\mathcal{M}'$ must also agree with $\mathcal{M}_1$ and $\mathcal{M}_2$
on the $\F$ values. Therefore, $\mathcal{M}' \preceq \mathcal{M}_1$ because
$\mathcal{M}' \leq \mathcal{M}_1$ and similarly, $\mathcal{M}' \preceq \mathcal{M}_2$.
We will show that $\mathcal{M}'$ is a model of $\mathsf{P}$ which is a
contradiction because $\mathcal{M}_1$ and $\mathcal{M}_2$ are equilibrium models
and therefore $\preceq$-minimal.

Consider a rule $\mathsf{p}\ \mathsf{R}_1\cdots\mathsf{R}_n \leftarrow \mathsf{B}$ in $\mathsf{P}$.
We distinguish cases on the value of $\mo{\mathsf{p}\ \mathsf{R}_1\cdots\mathsf{R}_n}_s(\mathcal{M}')$.
Assume that $\mo{\mathsf{p}\ \mathsf{R}_1\cdots\mathsf{R}_n}_s(\mathcal{M}') = \F$ for some state $s$.
Then, $\mo{\mathsf{p}\ \mathsf{R}_1\cdots\mathsf{R}_n}_s(\mathcal{M}_1) = \F$. But since $\mathcal{M}_1$
is a model of $\mathsf{P}$ it must be $\mo{\mathsf{B}}_s(\mathcal{M}_1) = \F$. Therefore,
$\mo{\mathsf{B}}_s(\mathcal{M}') = \F$ and $\mathsf{M}'$ satisfies the rule.
If $\mo{\mathsf{p}\ \mathsf{R}_1\cdots\mathsf{R}_n}_s(\mathcal{M}') = \T$ then the rule is also satisfied.
If $\mo{\mathsf{p}\ \mathsf{R}_1\cdots\mathsf{R}_n}_s(\mathcal{M}') = \NF$ either
$\mo{\mathsf{p}\ \mathsf{R}_1\cdots\mathsf{R}_n}_s(\mathcal{M}_1) = \NF$ or
$\mo{\mathsf{p}\ \mathsf{R}_1\cdots\mathsf{R}_n}_s(\mathcal{M}_2) = \NF$.
Assume, without loss of generality, that
$\mo{\mathsf{p}\ \mathsf{R}_1\cdots\mathsf{R}_n}_s(\mathcal{M}_1) = \NF$. Since
$\mathcal{M}_1$ is a model of $\mathsf{P}$ it must be $\mo{\mathsf{B}}_s(\mathcal{M}_1) \leq \NF$
which implies that $\mo{\mathsf{B}}_s(\mathcal{M}') \leq \NF$. Therefore, $\mathcal{M}'$ satisfies the rule.
\end{proof}

Finally, we will define the concept of stable models as fixpoints of an approximation operator.
First we will define an auxiliary semantic function that operates on two two-valued interpretations
which we call pair semantics and then define the approximation operator based on those semantics.

For the following definition we need the set $H^c_\mathsf{P}$ which is defined as
$H^c_\mathsf{P} = \{ (I, J) \mid I \in H_\mathsf{P}, J \in H_\mathsf{P}, I \leq J\}$.

\begin{definition}\label{def:pair-semantics}
Let $\mathsf{P}$ be a program, $(I, J) \in H^c_\mathsf{P}$, and $s \in S_{\mathsf{P}}$.
The pair semantics of expressions is defined as follows:
\begin{enumerate}
  \item $\mwrc{\mathsf{R}}{I,J}{s} = (s(\mathsf{R}),s(\mathsf{R}))$
  \item $\mwrc{\mathsf{p}}{I,J}{s} = (I(\mathsf{p}), J(\mathsf{p}))$
  \item $\mwrc{(\mathsf{E}_1\ \mathsf{E}_2)}{I,J}{s} = (\bigwedge_{\leq_\pi} \{f(d) \mid d \in \lsem \rho\rsem, l \leq d \leq u \}, \bigvee_{\leq_\pi} \{g(d) \mid d \in \lsem \rho\rsem, l \leq d \leq u\})$,
      where $(f,g) = \mwrc{\mathsf{E}_1}{I,J}{s}$, $(l,u) = \mwrc{\mathsf{E}_2}{I,J}{s}$ for $\mathsf{E}_1\! :\! \rho \to \pi$ and $\mathsf{E}_2\! :\! \rho$.
  \item $\mwrc{(\sim \mathsf{E})}{I,J}{s} =  \mwrc{\mathsf{E}}{I,J}{s}^{-1}$,
  with $(l,u)^{-1} = (\sim u, \sim l)$
\end{enumerate}
\end{definition}

\begin{definition}\label{def:pair-ap}
Let $\mathsf{P}$ be a program. The \emph{approximating operator}
$\ATP : H^c_\mathsf{P} \to H^c_\mathsf{P}$ is defined
for every predicate constant $\mathsf{p} : \rho_1 \to \cdots \to \rho_n \to \bool$ in $\mathsf{P}$ and
all $d_1 \in \mo{\rho_1},\ldots, d_n \in \mo{\rho_n}$, as $\ATP(I,J) = (\ATP(I,J)_1, \ATP(I,J)_2)$
where, for $i \in \{ 1, 2 \}$:
\[ \ATP(I,J)_i(\mathsf{p}) (\bar{d})= \bigvee\nolimits_{\leq_\bool}\{
  [\mwrc{\mathsf{B}}{I,J}{s[\bar{\mathsf{R}}/\bar{d}]}]_i \mid
          \mbox{$s\in S_{\mathsf{P}}$ and
                $(\mathsf{p}\ \bar{\mathsf{R}} \lrule \mathsf{B})$ in $\mathsf{P}$}\}
\]
\end{definition}

Let $\mathsf{P}$ be a $\HOL$ program and $I$ a
two-valued interpretation of $\mathsf{P}$.
We will say that $(I,J)$ is a stable fixpoint of $\ATP$ iff
$I = \lfp \ATP(\cdot,J)_1$ and $J = \lfp \ATP(I,\cdot)_2$.
A stable model of $\mathsf{P}$ is a model $I$ of $\mathsf{P}$ 
such that $(I,I)$ is a stable fixpoint of $\ATP$.

\begin{lemma}\label{splitting_aft_submodel}
Let $\mathsf{P}$ be a program and $U$ be a non-empty splitting set of $\mathsf{P}$.
If $\mathcal{M}$ is a stable model of $\mathsf{P}$,
then $\mathcal{M}$ restricted to $U$ is a stable model of $b_U(\mathsf{P})$.
\end{lemma}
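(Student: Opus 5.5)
The plan is to mirror the proof of Lemma~\ref{splitting_submodel}, replacing minimal-model reasoning with the fixpoint characterization of stable models. Let $\mathcal{M}$ be a stable model of $\mathsf{P}$, so $(\mathcal{M},\mathcal{M})$ is a stable fixpoint of $\ATP$. Let $\mathcal{M}_U$ denote its restriction to $U$, and write $\mathsf{P}_b = b_U(\mathsf{P})$.

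\textbf{Step 1: locality of the pair semantics.} For a pair $(I,J)\in H^c_{\mathsf{P}}$ we have $\mwrc{\mathsf{B}}{I,J}{s} = \mwrc{\mathsf{B}}{I_U,J_U}{s}$ for every body $\mathsf{B}$ of a rule in $\mathsf{P}_b$, since such bodies mention only constants of $U$. Because $U$ is a splitting set, every rule whose head predicate lies in $U$ belongs to $\mathsf{P}_b$. Hence, for every $\mathsf{p}\in U$ and $i\in\{1,2\}$,
\[
\ATP(I,J)_i(\mathsf{p}) = A_{\mathsf{P}_b}(I_U,J_U)_i(\mathsf{p}).
\]
In particular the $U$-part of $\ATP(\cdot,J)_1$ depends only on the $U$-part of its argument and on $J_U$.

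\textbf{Step 2: restriction of the least fixpoints.} We show that $\big(\lfp \ATP(\cdot,\mathcal{M})_1\big)_U = \lfp A_{\mathsf{P}_b}(\cdot,\mathcal{M}_U)_1$. One route is a transfinite induction on the Kleene iterates starting from $\bot$: by Step 1, the $U$-restriction of the $\alpha$-th iterate of the big operator equals the $\alpha$-th iterate of the small one. Limit stages commute with restriction because lubs are computed pointwise.

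A cleaner alternative avoids iterates. Take $L=\lfp A_{\mathsf{P}_b}(\cdot,\mathcal{M}_U)_1$ and extend it by $\mathcal{M}$ outside $U$. The $U$-part of this extension is a prefixpoint of the big operator, so it lies above $\mathcal{M}_U$. Conversely, $\mathcal{M}_U$ is a fixpoint of the small operator by Step 1, so it lies above $L$. Together these give $\mathcal{M}_U = L$.

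The same argument with the second component gives $\mathcal{M}_U = \lfp A_{\mathsf{P}_b}(\mathcal{M}_U,\cdot)_2$. Also, $\mathcal{M}_U$ is a model of $\mathsf{P}_b$, since $\mathsf{P}_b\subseteq\mathsf{P}$ and modelhood of these rules depends only on $U$. So $\mathcal{M}_U$ is a stable model of $\mathsf{P}_b$.

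\textbf{Expected obstacle.} The delicate point is the lfp comparison in Step 2. Here we must check that $\ATP(\cdot,J)_1$ is $\leq$-monotone, so that least fixpoints are least prefixpoints (Knaster--Tarski). We must also check that restricting and extending interpretations keeps pairs inside $H^c$, that is, preserves $I\leq J$. The monotonicity of $\mathfrak{C}$ in $I$ for fixed $J$ follows by induction on expressions, since the glb/lub over the interval $[l,u]$ behaves monotonically as the interval shrinks. We take this as the standard property underlying the definition from \cite{iclp24}.
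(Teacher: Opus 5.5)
Your proposal is correct, and its ``cleaner alternative'' is essentially the paper's proof. You extend the small least fixpoint by $\mathcal{M}$ outside $U$, use locality for predicates in $U$ and monotonicity of $A_{\mathsf{P}}(\cdot,\mathcal{M})_1$ for predicates outside $U$ to get a prefixpoint of the big operator, and conclude equality; the paper phrases the same step as a contradiction with $I'_U < I_U$ and also treats the second component ``analogously''.

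One small ordering point: the prefixpoint check outside $U$, and keeping the extended pair inside $H^c_{\mathsf{P}}$, already use $L \leq \mathcal{M}_U$. So your ``conversely'' step should be established first.
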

\begin{proof}
Let $\ATP$ be the approximator of $\mathsf{P}$ and $I,J$ a pair interpretation for $\mathsf{P}$. For any predicate $\mathsf{p}$ in $\mathsf{P}$ we have
\[ \ATP(I,J)_i(\mathsf{p}) (\bar{d})= \bigvee\nolimits_{\leq_o}\{
  [\mwrc{\mathsf{B}}{I,J}{s[\bar{\mathsf{R}}/\bar{d}]}]_i \mid
          \mbox{$s\in S_{\mathsf{P}}$ and
                $(\mathsf{p}\ \bar{\mathsf{R}} \lrule \mathsf{B})$ in $\mathsf{P}$}\}
\]
Similarly, let $A_{b_U(\mathsf{P})}$ be the approximator of $b_U(\mathsf{P})$. For $I_U,J_U$ a pair interpretation for $\mathsf{P}$
it is defined for any predicate $\mathsf{p}$ in $b_U(\mathsf{P})$ as
\[ A_{b_U(\mathsf{P})}(I_U,J_U)_i(\mathsf{p}) (\bar{d})= \bigvee\nolimits_{\leq_o}\{
  [\mwrc{\mathsf{B}}{I_U,J_U}{s[\bar{\mathsf{R}}/\bar{d}]}]_i \mid
          \mbox{$s\in S_{b_U(\mathsf{P})}$ and
                $(\mathsf{p}\ \bar{\mathsf{R}} \lrule \mathsf{B})$ in ${b_U(\mathsf{P})}$}\}
\]

Let $(I,J)$ be a stable fixpoint for $\mathsf{P}$ and let $(I_U,J_U)$ be its restriction to the bottom program ${b_U(\mathsf{P})}$.
It is easy to verify that $(I_U,J_U)$ is a fixpoint of the approximator $A_{b_U(\mathsf{P})}$. But then $A_{b_U(\mathsf{P})}(X,J_U)_1$ and
$A_{b_U(\mathsf{P})}(I_U,X)_2$ are well-defined operators in the corresponding intervals. In example for any
$X \leq J_U$ we have that $A_{b_U(\mathsf{P})}(X,J_U)_1 \leq A_{b_U(\mathsf{P})}(J_U,J_U)_1
\leq A_{b_U(\mathsf{P})}(J_U,J_U)_2 \leq A_{b_U(\mathsf{P})}(I_U,J_U)_2 = J_U$.

For the sake of contradiction assume that $I_U' = \lfp A_{b_U(\mathsf{P})}(\cdot,J_U)_1 < I_U$. We choose an interpretation $I'$
such that
\begin{itemize}
      \item $I'(\mathsf{p})=I_U'(\mathsf{p})$ for $\mathsf{p}$ in ${b_U(\mathsf{P})}$.
      \item $I'(\mathsf{p})=I(\mathsf{p})$ otherwise.
\end{itemize}
It is easy to verify that $I' < I$.
We show that $I'$ is a prefixpoint of $\ATP(\cdot,J)_1$ which is a contradiction since $I$ is the least fixpoint.
Indeed, for any $\mathsf{p}$ in $b_U(\mathsf{P})$ we have that
\begin{align*}
 \ATP(I',J)_1(\mathsf{p}) (\bar{d})
  & = \bigvee\nolimits_{\leq_o}\{
      [\mwrc{\mathsf{B}}{I',J}{s[\bar{\mathsf{R}}/\bar{d}]}]_1 \mid \mbox{$s\in S_{\mathsf{P}}$ and
                                                                   $(\mathsf{p}\ \bar{\mathsf{R}} \lrule \mathsf{B})$ in $\mathsf{P}$}\}\\
  & = \bigvee\nolimits_{\leq_o}\{
      [\mwrc{\mathsf{B}}{I'_U,J_U}{s_u[\bar{\mathsf{R}}/\bar{d}]}]_1 \mid \mbox{$s_u\in S_{b_U(\mathsf{P})}$ and
                                                                   $(\mathsf{p}\ \bar{\mathsf{R}} \lrule \mathsf{B})$ in ${b_U(\mathsf{P})}$}\}\\
  & = A_{b_U(\mathsf{P})}(I'_U,J_U)_1(\mathsf{p}) (\bar{d})= I'_U(\mathsf{p}) (\bar{d}) = I'(\mathsf{p}) (\bar{d})
\end{align*}
where we used the fact that all rules with head $\mathsf p$ are in $b_U(\mathsf P)$,
their bodies mention only $U$-predicates, and on those $I'$ agrees with $I'_U$. 

For any $\mathsf{p}$ not in $U$ it follows by the monotonicity of $\ATP$ that 
$\ATP(I',J)_1(\mathsf{p}) \leq \ATP(I,J)_1(\mathsf{p})=I(\mathsf{p})$. Furthermore, 
since $I(\mathsf{p}) = I'(\mathsf{p})$
it follows $\ATP(I',J)_1(\mathsf{p}) \leq I'(\mathsf{p})$.

The case for $J_U$ is analogous.
\end{proof}

\begin{retheorem}{aft-coincide-equilibrium-on-stratified}
Let $\mathsf{P}$ be a stratified program.
Let ${\cal M}$ be its unique equilibrium model and ${\cal N}$ be its unique 2-valued stable model under the AFT semantics.
Then, $\pi({\cal M}) = {\cal N}$, where $\pi$ is a collapse function from ${\cal H}$
to the set of 2-valued interpretations.
%
\end{retheorem}
\begin{proof}
For every $i \in \{1, \ldots, k\}$, we are going to show that
$\pi({\cal M}_i)$ is the unique 2-valued stable model of $\mathsf{P}_1 \cup \cdots \cup \mathsf{P}_i$ under AFT
using induction.
For each $i \in \{1, \ldots, k\}$, we call ${\cal N}_i$ the unique 2-valued model of $\mathsf{P}_1 \cup \cdots \cup \mathsf{P}_i$ under AFT.
For the basis case, let $i=1$.
By Lemma~\ref{epsilon_model}, we have that $\epsilon({\cal N}_1)$ is a model of $\mathsf{P}$.
Also, by Lemma~\ref{minimality}, we have that ${\cal M}_1$ is the minimum model of $\mathsf{P}_1$.
Therefore, ${\cal M}_1\leq \epsilon({\cal N}_1)$.
By Lemma~\ref{pi_relations}, we have that $\pi({\cal M}_1) \leq \pi(\epsilon({\cal N}_1))$.
But, by Lemma~\ref{pi_epsilon}, we have that $\pi(\epsilon({\cal N}_1)) = {\cal N}_1$, so that $\pi({\cal M}_1) \leq {\cal N}_1$.
By Lemma~\ref{pi_model}, $\pi({\cal M}_1)$ is a 2-valued model of $\mathsf{P}_1$ and by \cite[Theorem~7.2]{iclp24} ${\cal N}_1$ is minimal.
Therefore, $\pi({\cal M}_1) = {\cal N}_1$.

Now suppose that $\pi({\cal M}_{i-1}) = {\cal N}_{i-1}$ for some i $\in \{2, \ldots, k\}$. We will show that $\pi({\cal M}_i) = {\cal N}_i$.
We define an interpretation ${\cal I}$ of $\mathsf{P}_1 \cup \cdots \cup \mathsf{P}_i$ as follows:

\[{\cal I}(\mathsf{p}) = \begin{cases}
  {\cal M}_{i-1}(\mathsf{p}) & \text{if ${\cal M}_{i-1}(\mathsf{p})$ is defined}\\
  \epsilon({\cal N}_i)(\mathsf{p}) & \text{otherwise}
\end{cases}\]

\medskip
\noindent
\underline{\emph {Claim 1:}} ${\cal I} \preceq \epsilon({\cal N}_i)$.
Let $\mathsf{p}$ be a predicate constant symbol defined in $\mathsf{P}_1 \cup \cdots \cup \mathsf{P}_{i}$.
If ${\cal M}_{i-1}(\mathsf{p})$ is not defined, then ${\cal I}(\mathsf{p}) = \epsilon({\cal N}_i)(\mathsf{p})$.
If ${\cal M}_{i-1}(\mathsf{p})$ is defined, then, using the induction hypothesis, we have that $\pi({\cal M}_{i-1}) = {\cal N}_{i-1}$.
Then, $\epsilon(\pi({\cal M}_{i-1})) = \epsilon({\cal N}_{i-1})$.
By Lemma~\ref{epsilon_pi}, ${\cal M}_{i-1}(\mathsf{p}) \preceq \epsilon(\pi({\cal M}_{i-1}))(\mathsf{p})$ = $\epsilon({\cal N}_{i-1})(\mathsf{p})$.
By Lemma~\ref{splitting_aft_submodel}, ${\cal N}_{i-1}(\mathsf{p}) = {\cal N}_i(\mathsf{p})$.
Therefore, ${\cal I}(\mathsf{p}) = {\cal M}_{i-1}(\mathsf{p}) \preceq \epsilon({\cal N}_i)(\mathsf{p})$.

\medskip
\noindent
\underline{\emph {Claim 2:}} ${\cal I}$ is a model of $\mathsf{P}_1 \cup \cdots \cup \mathsf{P}_i$.
Let $C = \mathsf{p}~\mathsf{R}_1~\cdots~\mathsf{R}_n \leftarrow \mathsf{B}$ be a clause of $\mathsf{P}_1 \cup \cdots \cup \mathsf{P}_i$.
If ${\cal M}_{i-1}(\mathsf{p})$ is defined, then $C$ is a clause of $\mathsf{P}_1 \cup \cdots \cup \mathsf{P}_{i-1}$.
Then for every predicate constant symbol $\mathsf{q}$ appearing in $C$, ${\cal M}_{i-1}(\mathsf{q})$ would also be defined.
Since ${\cal M}_{i-1}$ satisfies $C$, it follows that ${\cal I}$ satisfies $C$.
Suppose now that ${\cal M}_{i-1}(\mathsf{p})$ is not defined.
Let $s$ be a state and let $d_1,\ldots,d_n$ be elements such that $d_j = s(\mathsf{R}_j)$ for any $j \in \{1,\ldots,n\}$.
Then $\mwrs{\mathsf{p}~\mathsf{R}_1~\cdots~\mathsf{R}_n}{s}{\cal I} = \epsilon({\cal N}_i)(\mathsf{p})~d_1~\cdots~d_n$.
By the definition of $\epsilon$,  $\epsilon({\cal N}_i)(\mathsf{p})~d_1~\cdots~d_n$ can't be $\NF$.
If it is $\T$, then ${\cal I}$ satisfies $C$.
So, suppose that it is $\F$.
By Lemma~\ref{epsilon_model}, $\epsilon({\cal N}_i)$ is a model of $\mathsf{P}_1 \cup \cdots \cup \mathsf{P}_i$,
so that $\epsilon({\cal N}_i)$ satisfies $C$.
Since $\epsilon({\cal N}_i)(\mathsf{p})~d_1~\cdots~d_n = \F$, $\mwrs{\mathsf{B}}{s}{\epsilon({\cal N}_i)}$ has to be $\F$.
By Claim~1 and $\preceq$-monotonicity, $\mwrs{\mathsf{B}}{s}{\cal I} = \F$.
Therefore, in any case, ${\cal I}$ satisfies $C$.

\medskip
\noindent
\underline{\emph {Claim 3:}} $\pi({\cal I}) = {\cal N}_i$.
Let $\mathsf{p}$ be a predicate constant symbol.
If ${\cal M}_{i-1}(\mathsf{p})$ is defined, then $\pi({\cal I}(\mathsf{p})) = \pi({\cal M}_{i-1}(\mathsf{p}))$.
By the induction hypothesis, $\pi({\cal M}_{i-1}(\mathsf{p})) = {\cal N}_{i-1}(\mathsf{p})$.
By Lemma~\ref{splitting_aft_submodel}, ${\cal N}_{i-1}(\mathsf{p}) = {\cal N}_i(\mathsf{p})$.
If ${\cal M}_{i-1}(\mathsf{p})$ is not defined, then $\pi({\cal I}(\mathsf{p})) = \pi(\epsilon({\cal N}_i)(\mathsf{p}))$.
By Lemma~\ref{pi_epsilon}, $\pi(\epsilon({\cal N}_i)(\mathsf{p})) = {\cal N}_i(\mathsf{p})$.

It is easy to see that ${\cal I}$ agrees with ${\cal M}_{i-1}$.
By Lemma~\ref{minimality}, we have that ${\cal M}_i$ is the minimum model of $\mathsf{P}_1 \cup \cdots \cup \mathsf{P}_i$ that agree with ${\cal M}_{i-1}$.
Therefore ${\cal M}_i \leq {\cal I}$.
By Lemma~\ref{pi_relations}, we have that $\pi({\cal M}_i) \leq \pi({\cal I})$.
By Claim~3, $\pi({\cal I}) = {\cal N}_i$, so that $\pi({\cal M}_i) \leq {\cal N}_i$.
By Lemma~\ref{pi_model}, $\pi({\cal M}_i)$ is a 2-valued model of $\mathsf{P}_1 \cup \cdots \cup \mathsf{P}_i$ and by \cite[Theorem~7.2]{iclp24} ${\cal N}_i$ is minimal.
Therefore, $\pi({\cal M}_i) = {\cal N}_i$.
\end{proof}